\documentclass[11pt,letterpaper]{article}
\pdfoutput=1

\usepackage[T1]{fontenc}
\usepackage{graphicx}
\usepackage{amsmath,amsthm,amssymb,amsfonts}
\usepackage[left=1in,right=1in,top=1in,bottom=1in]{geometry}
\usepackage{comment}
\usepackage{todonotes}
\usepackage{tikz}
\usepackage{microtype}
\usepackage{subcaption} 
\usepackage[noadjust]{cite}
\usepackage{mathtools}
\usepackage{hyperref}
\usepackage{enumitem}
\usepackage[capitalize]{cleveref}
\usepackage{wrapfig,lipsum,booktabs}
\usepackage{authblk}

\title{Optimal Bounds for Distinct Quartics}

\newcommand{\cO}{\mathcal{O}}
  \newcommand{\floor}[1]{\left\lfloor #1 \right\rfloor}

  \newcommand{\can}[2]{C_{(#1,#2)}}
\newcommand{\width}{\textsf{width}}
\newcommand{\height}{\textsf{height}}
\newcommand{\BR}{\textsf{BR}}

\def\dd{\mathinner{.\,.}}
\newcommand{\OCR}{\textsf{occ}_{5\times 5}}
\newcommand{\assign}{\textsf{assign}}
\newcommand{\hspan}[2]{\textsf{row}_{3\cdot \height(#1)}(#2^{2,2})}
\newcommand{\vspan}[2]{\textsf{col}_{3\cdot \width(#1)}(#2^{2,2})}
\renewcommand{\P}{\mathcal{P}}
\renewcommand{\S}{\mathcal{S}}
\newcommand{\C}{\mathcal{C}}
\newcommand{\R}{\mathcal{R}_{\S}}

\newcommand{\hmeta}[2]{H_{#1}^{#2}}
\newcommand{\vmeta}[2]{V_{#1}^{#2}}
\newcommand{\per}{\textsf{per}}
\newcommand{\IPM}{\textsf{IPM}}

\newtheorem{theorem}{Theorem}[section]
\newtheorem{proposition}[theorem]{Proposition}
\newtheorem{observation}[theorem]{Observation}
\newtheorem{lemma}[theorem]{Lemma}
\newtheorem{corollary}[theorem]{Corollary}

\newtheorem{fact}[theorem]{Fact}

\author[1]{Panagiotis Charalampopoulos}
\author[2]{Paweł Gawrychowski}
\author[3,4]{Samah Ghazawi}

\affil[1]{School of Computing and Mathematical Sciences, Birkbeck, University of London, UK\\
    \texttt{p.charalampopoulos@bbk.ac.uk}}

\affil[2]{Institute of Computer Science, University of Wrocław, Poland\\
    \texttt{gawry@cs.uni.wroc.pl}}

\affil[3]{Department of Software Engineering, Braude, College of Engineering, Karmiel, Israel}
    
\affil[4]{Department of Computer Science, University of Haifa, Israel\\
    \texttt{samahi@braude.ac.il}}


\makeatletter

\begin{document}

\date{\vspace{-1cm}}
\maketitle
\thispagestyle{empty}

\begin{abstract}
A fundamental concept related to strings is that of repetitions. It has been extensively studied in many versions, from both purely combinatorial and algorithmic angles. One of the most basic questions is how many distinct squares, i.e., distinct strings of the form $UU$, a string of length~$n$ can contain as fragments. It turns out that this is always $\cO(n)$, and the bound cannot be improved to sublinear in $n$ [Fraenkel and Simpson, JCTA 1998].

Several similar questions about repetitions in strings have been considered, and by now we seem to have a good understanding of their repetitive structure. For higher-dimensional strings, the basic concept of periodicity has been successfully extended and applied to design efficient algorithms---it is inherently more complex than for regular strings. Extending the notion of repetitions and understanding the repetitive structure of higher-dimensional strings is however far from complete.

Quartics were introduced by Apostolico and Brimkov [TCS 2000] as analogues of squares in two dimensions. Charalampopoulos, Radoszewski, Rytter, Waleń, and Zuba [ESA 2020] proved that the number of distinct quartics in an $n\times n$ 2D string is $\cO(n^2 \log^2 n)$ and that they can be computed in $\cO(n^2 \log^2 n)$ time. Gawrychowski, Ghazawi, and Landau [SPIRE 2021] constructed an infinite family of $n \times n$ 2D strings with $\Omega(n^2 \log n)$ distinct quartics. This brings the challenge of determining asymptotically tight bounds. Here, we settle both the combinatorial and the algorithmic aspects of this question: the number of distinct quartics in an $n\times n$ 2D string is $\cO(n^2 \log n)$ and they can be computed in the worst-case optimal $\cO(n^2 \log n)$ time.

As expected, our solution heavily exploits the periodic structure implied by occurrences of quartics.
However, the two-dimensional nature of the problem introduces some technical challenges.
Somewhat surprisingly, we overcome the final challenge for the combinatorial bound using a result of Marcus and Tardos [JCTA 2004] for permutation avoidance on matrices.
\end{abstract}

\clearpage
\setcounter{page}{1}
\section{Introduction}
Repetitions are a staple topic of both combinatorics on words~\cite{BerstelP07} and algorithms on strings~\cite{AlgStrings}.
In both areas, the classical objects of study are linear sequences of characters from a finite alphabet.
Depending on whether we are more interested in their combinatorial properties or designing efficient algorithms for them,
it is customary to call such sequences words or strings, respectively. In this paper, we use the latter convention.

Perhaps the most natural example of a repetition in a string is a square, that is, a string of the form $UU$, also known as a ``tandem repeat'' in the biological literature~\cite{Gusfield1997}.
The basic question concerning squares is whether any of the fragments of a string of length $n$ is a square,
and, if so, what is the number of such fragments.
The origins of this question can be traced back to Thue~\cite{thue1906}, who
constructed an infinite string over a ternary alphabet that contains no squares.
Thus, we can construct arbitrarily long square-free strings over such alphabets.
The next question is what is the largest possible number of fragments that are squares.
However, any even-length fragment of $\texttt{a}^{n}$ is a square.
One way to make the question non-trivial is to only consider the primitively rooted squares,
meaning that $U$ is not a power of another string. This decreases the possible number of
occurrences to $\cO(n\log n)$, which is asymptotically tight~\cite{DBLP:journals/ipl/Crochemore81}.
Another way is to only consider distinct squares.

Fraenkel and Simpson~\cite{DBLP:journals/jct/FraenkelS98} showed that any string of length $n$ contains
at most $2n$ distinct squares and constructed an infinite family of strings of length $n$ containing
$n-\Theta(n)$ distinct squares. For many years, it was conjectured that the upper bound should be at most $n$.
After a series of simplifications and improvements~\cite{Ilie05,Ilie07,Lam,DEZA,Thierry2020},
the conjecture was finally proven by Brlek and Li~\cite{Brlek}, who showed an upper bound of $n-\sigma+1$,
where $\sigma$ is the size of the alphabet.
The same authors~\cite{DBLP:conf/cwords/BrlekL23} also showed an upper bound of $n-\Theta(\log n)$.
On the algorithmic side, Apostolico and Preparata~\cite{ApostolicoP83},
Main and Lorentz~\cite{Main1984} and Crochemore~\cite{Crochemore1981}
showed how to find a compact representation of all squares (in particular, test square-freeness)
in a string of length $n$ in $\cO(n\log n)$ time. Specifically, such a representation stores
all distinct squares. To obtain a faster algorithm for finding only the distinct squares, one needs
to restrict the size of the alphabet.
For constant alphabets, Gusfield and Stoye~\cite{GusfieldS04} designed an $\cO(n)$ time algorithm.
This was later generalized to the more general case of an integer alphabet (that can be sorted in linear
time)~\cite{BannaiIK17,CrochemoreIKRRW14}. The complexity of testing square-freeness over general ordered and unordered alphabets was very recently settled by Ellert and Fischer~\cite{ellert_et_al:LIPIcs.ICALP.2021.63} and Ellert et al.~\cite{DBLP:conf/soda/EllertGG23}, respectively; this problem has been also studied in the
parallel~\cite{Apostolico92,Apostolico1996,CrochemoreR91,CrochemoreR91a} and
the online settings~\cite{LeungPT06,Hong2008,Kosolobov2014,Kosolobov2015a}.
Thus, by now we seem to have obtained a rather good understanding
of both the combinatorial and the algorithmic properties of distinct squares.

Arguably, linear sequences are not always best suited to model the objects that we would like to study.
A natural extension is to consider rectangular arrays of characters from a finite alphabet, which can be seen as 2D strings.
Possible applications in image processing~\cite{RosenfeldKak82} sparked interest in designing algorithms
for searching in 2D strings already in the late 1970s~\cite{Baker78a,Bird77}.
This turned out to be significantly more challenging than searching in 1D strings: both versions were
studied already in the 70s, but while for 1D strings an alphabet-independent linear-time algorithm
had been soon found~\cite{DBLP:journals/siamcomp/KnuthMP77}, achieving the same goal for
2D strings took till the 90s~\cite{DBLP:journals/siamcomp/AmirBF94,GalilP96,DBLP:conf/latin/CrochemoreR95}.
Extensions of this basic problem such as approximate searching~\cite{DBLP:journals/iandc/AmirF95,DBLP:conf/spire/CliffordFSV16}, indexing~\cite{DBLP:journals/siamcomp/Giancarlo95,DBLP:conf/icalp/GiancarloG95,DBLP:journals/ipl/ChoiL97}, searching in smaller space~\cite{CrochemoreGPR95},
scaled searching~\cite{DBLP:journals/algorithmica/AmirBLP09,DBLP:journals/algorithmica/AmirC10},
searching in random 2D strings~\cite{DBLP:journals/siamcomp/KarkkainenU99},
dictionary searching~\cite{DBLP:journals/ipl/AmirF92,DBLP:journals/algorithmica/NeuburgerS13,DBLP:journals/iandc/IduryS95}, and
searching in compressed 2D strings~\cite{DBLP:journals/jal/AmirBF97,DBLP:journals/jal/AmirLS03,DBLP:conf/dcc/AmirB92}
have been also considered.

The combinatorial structure of 2D strings seems to be significantly more involved than that of 1D strings.
As a prime example, the basic tool used in algorithms and combinatorics on 1D strings is periodicity.
We say that $p$ is a period of a 1D string $S[1\dd n]$ when $S[i]=S[i+p]$ for all $i=1,2,\ldots,n-p$.
The set of all periods is very structured due to a classical result of Fine and Wilf~\cite{fine1965uniqueness}
according to which for any two periods $p,q$ such that $p+q\leq n$, the greatest common divisor of
$p$ and $q$ is also a period. The natural way to extend this notion to 2D strings is to define
$(x,y)$ to be a period of a 2D string $S[1 \dd n][1 \dd n]$ when $S[i+x][j+y]$ for all $i=1,2,\ldots,n-x$ and $j=1,2,\ldots,n-y$.
This notion was introduced by Amir and Benson~\cite{AmirB98}, who provided a detailed study
based on classifying 2D strings into four periodicity classes.
This classification was later crucial in designing both an alphabet-independent linear-time
algorithm~\cite{GalilP96,DBLP:conf/latin/CrochemoreR95} and alphabet-independent optimal parallel
algorithms~\cite{DBLP:journals/siamcomp/CrochmoreGHMR98,DBLP:journals/iandc/AmirBF98}.

The rich combinatorial structure of 2D strings brings the challenge of finding the right generalization
of the concept of repetitions.
Apostolico and Brimkov~\cite{AB} introduced two notions of repetitions in 2D strings that can be seen
as natural analogues of squares in 1D strings. A tandem $W^{1,2}$ (or $W^{2,1}$) consists of 2 occurrences of the same block
$W$ arranged in a $1\times 2$ (or $2\times 1$) pattern. Next, a quartic $W^{2,2}$ consists of 4 occurrences of the
same block $W$ arranged in a $2\times 2$ pattern. Note that Apostolico and Brimkov~\cite{AB} additionally
required that $W$ is primitive, meaning that it cannot be partitioned into non-overlapping copies
of another block. However, it is more natural to call such tandems and quartics primitively rooted,
as in~\cite{CRRWZ}.
Apostolico and Brimkov~\cite{AB} showed asymptotically tight bounds of $\cO(n^{3}\log n)$ and $\cO(n^{2}\log^{2}n)$ for the number of primitively rooted tandems and quartics, respectively.
The former bound was later complemented with a worst-case optimal $\cO(n^{3}\log n)$-time
algorithm~\cite{DBLP:journals/dam/ApostolicoB05}.

Two tandems $T=W^{1,2}$ and $T'=V^{1,2}$ are distinct when $W\neq V$. 
Similarly,
two quartics $Q=W^{2,2}$ and $Q'=V^{2,2}$ are distinct when $W\neq V$.
It is easy to see that an $n\times n$ 2D string contains $\cO(n^{3})$ distinct tandems by applying the bound on the number of 1D distinct squares on every horizontal slice of the 2D string. It is also not hard to show that this bound is asymptotically
tight, even over a binary alphabet~\cite{GGL}. Thus, tandems do not seem to be the right generalization of squares,
and we should rather focus on quartics.

Recently, Charalampopoulos, Radoszewski, Rytter, Waleń, and Zuba~\cite{CRRWZ}
showed a non-trivial upper bound of $\cO(n^{2}\log^{2}n)$ on the number of distinct quartics in an $n\times n$ 2D string,
and an algorithm that finds them in the same time complexity.
At this point, it was quite unclear to what extent distinct quartics suffer from the
``curse of dimensionality''. Could it be that, similarly to the number of distinct squares, their number is also linear in the size of the
input? Gawrychowski, Ghazawi, and Landau~\cite{GGL} very recently showed that this is not the case,
by constructing an infinite family of $n\times n$ 2D strings over a binary alphabet containing $\Omega(n^{2}\log n)$
distinct quartics. This shows that there is a qualitative difference between distinct squares and distinct quartics,
but leaves a significant gap between the lower bound of $\Omega(n^{2}\log n)$ and the upper bound of
$\cO(n^{2}\log^{2}n)$~\cite{CRRWZ}.

\paragraph{Our Results.}
Our contribution is twofold. First, we show an asymptotically tight bound of $\cO(n^{2}\log n)$ on the number of
distinct quartics in an $n\times n$ 2D string.
Thus, the ``curse of dimensionality'' for this problem is a single logarithm for going from 1D to 2D.
Second, we show how to find all distinct quartics in worst-case optimal $\cO(n^{2}\log n)$ time.
We thus resolve both the combinatorial and algorithmic complexity of distinct quartics.

A notable difference of our algorithm from the previously fastest algorithm for computing distinct quartics~\cite{CRRWZ} is that the algorithm of~\cite{CRRWZ} first finds all 2D runs\footnote{2D runs are subarrays that are periodic both vertically and horizontally and cannot be extended without any of the periods changing.} of the 2D string, which are not even known to be $\cO(n^2 \log n)$, and then infers the quartics from those.
We manage to circumvent this, by focusing on some selected occurrences of 2D strings of the form $Q^{5,5}$, instead of considering all of them via 2D runs.

\paragraph{Overview of the Combinatorial Upper Bound.}
When bounding the number of distinct squares, one begins with fixing the rightmost occurrence of every distinct
square~\cite{DBLP:journals/jct/FraenkelS98}. In two dimensions, it is less clear what an extreme occurrence could mean.
We simply say that it is an occurrence at a position $(i,j)$ such that there is no other occurrence at a different position $(i',j')$
such that $i' \geq i$ and $j'\geq j$.
Next, a standard trick used when working with strings is to partition them into groups with length in $[2^{a} \dd 2^{a+1})$
for different integers $a$.
Similarly to previous work~\cite{CRRWZ}, we partition quartics into groups $\can{a}{b}$ with height in $[2^{a} \dd 2^{a+1})$
and width in $[2^{b} \dd 2^{b+1})$ for pairs of integers $(a,b)$.
We begin with proving that, for any position $(i,j)$, the set of extreme occurrences at $(i,j)$ may have a non-empty
intersection with only $\cO(\log n)$ such groups.
Next, we partition all quartics into \emph{thin} and \emph{thick} (note that the meaning of thin and thick is slightly different
than in the previous work~\cite{CRRWZ}).
More specifically, a quartic $Q$ is thick if and only if it can be partitioned into $x \times y$ occurrences of a primitive 2D string $R$, i.e., $Q = R^{x,y}$ for some $x,y\geq 5$.
Then, we show that for any position $(i,j)$ and group $\can{a}{b}$, there can be at most 10 extreme occurrences
of thin quartics in $\can{a}{b}$ at position $(i,j)$.
Overall, we thus have only $\cO(n^{2}\log n)$ distinct thin quartics.

The main part of our proof for the combinatorial upper bound is the analysis of the number of extreme occurrences of thick quartics.
Our starting point is the observation (already present in~\cite{CRRWZ}) that this number can be upper bounded by the number of occurrences of 2D strings of the form $R^{5,5}$, for primitive $R$,
that participate in the partition of an extreme occurrence of some quartic $R^{x,y}$.
To bound the number of such occurrences, we assign an occurrence of $R^{5,5}$ at position $(i,j)$
to position $(x,y)=(i+2\cdot\height(R),j+2\cdot \width(R))$ and say that this occurrence is \emph{anchored} at position $(x,y)$. Then, our goal is to show that the number of occurrences
assigned to every position is only $\cO(\log n)$. For a fixed position $(i,j)$, this is done by first arguing that
the pairs $(\floor{\log (\height(R))}, \floor{\log (\width(R))})$ are pairwise distinct among occurrences
of different $R^{5,5}$ assigned to $(i,j)$. This requires a careful analysis of the implied periodic structure and allows us to focus on bounding the number of such pairs.
What we do next is the main novelty of our approach for the combinatorial upper bound.
We treat the pairs as a set of points $\P \subseteq [1\dd m]^{2}$, where $m=\floor{\log n}$, and argue that, 
for each $(a,b)\in \P$, the set of points of $\P$ that are strictly dominated by $(a,b)$ can be partitioned
into at most two chains. 
Next, our goal is to upper bound the size of any set $\P$ with this property by $\cO(m)$.
To this end, we leverage a result from extremal combinatorics, namely, the proof
of the Füredi-Hajnal conjecture by Marcus and Tardos~\cite{DBLP:journals/jct/MarcusT04}.
This result states that, if an $m\times m$ binary matrix $M$ avoids a fixed permutation matrix $P$ as a submatrix, i.e., if $P$ cannot be obtained by deleting some rows and columns of $M$ and changing 1s to 0s,
then it contains at most $c_P \cdot m$ 1s, where $c_P$ is a constant if the size of $P$ is a constant. We reformulate the constraint on~$\P$ to avoid
the permutation matrix shown below as a submatrix.
Overall, this allows us to conclude that the number
of extreme occurrences of thick quartics is also $\cO(n^{2}\log n)$.

\begin{center}
\begin{tabular}{ |c|c|c|c| } 
 \hline
 & &  1 & \\ 
 \hline
 & 1 &  & \\ 
 \hline
 1 & &  & \\ 
 \hline
 & &  & 1 \\ 
 \hline
\end{tabular}
\end{center}

A high-level description of our approach for the algorithmic part is provided in~\cref{sec:over}, as it is best read after the full proof of the combinatorial upper bound.

\paragraph{Open Problem.}
An interesting follow-up question on repetitions in 2D strings is that of settling the number of 2D runs that a 2D string can
have.
Charalampopoulos et al.~\cite{CRRWZ} proved an $\cO(n^2\log^2 n)$ upper bound for the number of 2D runs that an
$n \times n$ 2D string can contain, while Gawrychowski et al.~\cite{GGL} constructed an infinite family of $n\times n$ 2D strings
(over a binary alphabet) with $\Omega(n^2 \log n)$ 2D runs.
On the algorithms' side, Amir et al.~\cite{Amir2020} devised an algorithm that computes all 2D runs in an $n \times n$ 2D
string in $\cO(n^2 \log n + |\textsf{output}|)$ time, and is thus optimal.
For 1D strings, after a long line of results~\cite{KK:99,Rytter2006,Crochemore2008,Crochemore2011,Giraud2008,Giraud2009,PuglisiSS08} the number of runs was shown to be less than $n$~\cite{runstheorem} and they can be computed in $\cO(n)$ time for strings over ordered alphabets~\cite{ellert_et_al:LIPIcs.ICALP.2021.63} (see~\cite{KK:99,runstheorem} for earlier algorithms for strings over linear-time sortable alphabets).

\section{Preliminaries}
\label{preliminaries}
For integers $i\leq j$,
we denote the integer interval $\{i, \ldots, j\}$ by either of $[i \dd j]$, $(i-1 \dd j+1)$, $[i \dd j+1)$, and $(i-1 \dd j]$.

Let us consider a string $S=S[1]S[2]\cdots S[n]$ of length $|S|=n$.
For integers $i \leq j$ in $[1 \dd n]$, we denote the \emph{fragment} $S[i]\cdots S[j]$ by $S[i \dd j]$.
A non-trivial rotation of a string $S$ of length $n$ is a string $S[j\dd n]S[1\dd j]$ for $j \in [1\dd n)$.
A positive integer $p \leq n$ is a \emph{period} of $S$  if and only if $S[i]=S[i+p]$ for all $i \in [1 \dd n-p]$.
The smallest period of $S$ is called \emph{the period} of $S$ and is denoted by $\per(S)$.
A string is called \emph{periodic} if and only if its period is at most half its length.
We will extensively use the following property of periods.

\begin{lemma}[Periodicity Lemma~\cite{fine1965uniqueness}]\label{lem:FW}
If $p$ and $q$ are periods of a string $S$ and satisfy $p + q \leq |S|$, then $\gcd(p, q)$ is also a period of $S$.
\end{lemma}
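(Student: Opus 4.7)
The plan is to prove the claim by induction on the sum $p + q$, in a manner mirroring the Euclidean algorithm. Without loss of generality assume $p \geq q$. If $p = q$, then $\gcd(p, q) = p$ is a period by hypothesis, which settles the base case. If $p > q$, the inductive step reduces to the sub-claim that $p - q$ is also a period of $S$. Given that, the pair $(q, p - q)$ still satisfies the lemma's hypothesis because $q + (p - q) = p \leq |S|$, and has strictly smaller sum, so the induction hypothesis yields that $\gcd(q, p - q) = \gcd(p, q)$ is a period of $S$, as desired.

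The heart of the argument is the sub-claim. For each index $i \in [1 \dd |S| - (p - q)]$ I would establish $S[i] = S[i + p - q]$ by a case split on whether shifting by $+p$ keeps us inside $S$. If $i + p \leq |S|$, then applying the period $p$ at position $i$ gives $S[i] = S[i + p]$, and applying the period $q$ at position $i + p - q$ (valid since $i + p - q \leq |S| - q$) gives $S[i + p - q] = S[i + p]$, so the two endpoints coincide. Otherwise $i + p > |S|$; combined with $p + q \leq |S|$, this forces $i > q$, so I can shift backwards: $S[i] = S[i - q]$ by the period $q$ applied at $i - q$, and then $S[i - q] = S[i - q + p] = S[i + p - q]$ by the period $p$ applied at $i - q$, which is valid because $i \leq |S| - (p - q)$ rearranges to $i - q \leq |S| - p$.

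The hypothesis $p + q \leq |S|$ is used exactly to ensure these two cases together cover every admissible $i$, namely to guarantee $i > q$ whenever $i + p > |S|$. Beyond this index bookkeeping I do not foresee any real obstacle; the result is classical and the recursion terminates since each inductive call strictly decreases $p + q$ while preserving the hypothesis of the lemma.
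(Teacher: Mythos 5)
Your proof is correct: the Euclidean-descent induction is sound, the sub-claim that $p-q$ is a period is established with the right index bookkeeping in both cases, and the hypothesis $p+q\leq |S|$ is preserved for the pair $(q,\,p-q)$ since $q+(p-q)=p\leq |S|$. The paper offers no proof of its own here---it states the lemma as a classical result of Fine and Wilf with a citation---so there is nothing to compare against; note only that what you prove is the weak form (hypothesis $p+q\leq |S|$ rather than $p+q-\gcd(p,q)\leq |S|$), which is exactly what the statement asserts and what the paper uses.
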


We denote the concatenation of two strings $U$ and $V$ by $UV$.
Further, for $k \in \mathbb{Z}_+$, we denote the concatenation of $k$ copies of $U$ by $U^k$.
A string $V$ that cannot be written as $U^k$ for a string $U$ and an integer $k>1$ is called \emph{primitive}.
A string of the form $UU$ is called a \emph{square}.
A square $UU$ is said to be primitively rooted if $U$ is primitive.
More generally, a string of the form $U^{k}$ is called a $k$-th power, and it is said to be primitively
rooted if $U$ is primitive.
We extensively use the following property of squares.

\begin{lemma}[Three Squares Lemma~\cite{DBLP:journals/algorithmica/CrochemoreR95}\protect\footnotemark]\label{lem:three_sq}
\footnotetext{This formulation comes from~\cite{DBLP:journals/jct/FraenkelS98}.}
If squares $U^2$ and $V^2$ are proper prefixes of a square $W^2$, $|U|<|V|$, and $U$ is primitive, then $|U|+|V|\leq |W|$.
\end{lemma}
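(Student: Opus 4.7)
The plan is to argue by contradiction: assume $|U|+|V|>|W|$ and deduce that $U$ is a proper power, contradicting primitivity. The easy subcase $2|V|\leq|W|$ is immediate, since then $|U|+|V|<2|V|\leq|W|$ already violates the assumption, so I may assume $2|V|>|W|$.

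In this regime, the first step is to extract periodic structure. Since $|V|<|W|$ and both $V,W$ are prefixes of $W^2$, the string $V$ is a prefix of $W$; write $W=VY$ with $|Y|=|W|-|V|$. The second copy of $V$ inside the prefix $V^2$ occupies positions $[|V|+1\dd 2|V|]$ of $W^2=VYVY$, and by the case assumption $2|V|>|W|$ it overlaps with the copy of $W$ starting at position $|W|+1$. Matching the overlap yields that $V$ has period $|W|-|V|$. Since $U$ is a prefix of $V$ and, by the contradiction hypothesis, $|U|>|W|-|V|$, the period $|W|-|V|$ restricts to a non-trivial period of $U$.

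The key is now to produce a second period of $U$ and apply the Periodicity Lemma. Occurrences of $U$ in $W^2$ arise at positions $1,\,|U|+1,\,|V|+1,\,|W|+1$ (from the prefixes $U^2,V^2$ and from $W$ starting with $U$). In the main subcase $2|U|\leq|V|$, the prefix $V$ itself contains $U^2$, giving further occurrences of $U$ at $|V|+|U|+1$ and $|W|+|U|+1$; the pair $(|W|+1,\,|V|+|U|+1)$ overlaps precisely under the contradiction hypothesis and yields the second period $|V|+|U|-|W|$ of $U$. The arithmetic identity $(|W|-|V|)+(|V|+|U|-|W|)=|U|$ shows that these two derived periods of $U$ sum to exactly $|U|$, so the Periodicity Lemma produces $d:=\gcd(|W|-|V|,\,|V|+|U|-|W|)$ as a period of $U$. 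Since both summands are positive and add to $|U|$, the integer $d$ divides $|U|$ and satisfies $d\leq|W|-|V|<|U|$, making $U=(U[1\dd d])^{|U|/d}$ a proper power, contradicting primitivity.

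The main obstacle is the residual subcase $2|U|>|V|$, where $U^2$ does not fit inside $V$ and the identity above is unavailable. I would instead use the overlap of the occurrences of $U$ at $|U|+1$ and $|V|+1$ (which overlap because $|V|<2|U|$) to obtain the period $|V|-|U|$, and combine it with $|W|-|V|$; their sum is now $|W|-|U|$. An additional periodic constraint, obtained by analysing how $U^2$ crosses the boundary of $W$ when $2|U|>|W|$ (yielding a third period $|W|-|U|$ of $U$), together with a careful application of the Periodicity Lemma, eventually forces a divisor of $|U|$ strictly smaller than $|U|$ to appear as a period of $U$, again contradicting primitivity. As in the main subcase, it is the tight interplay of the contradiction hypothesis and the primitivity of $U$ that rules out all exceptional configurations.
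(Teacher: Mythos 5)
Your first case ($2|U|\le |V|$) is correct and complete: the occurrences of $U$ at positions $|V|+|U|+1$ and $|W|+1$ give the period $|U|+|V|-|W|$, the occurrences at $|V|+1$ and $|W|+1$ give the period $|W|-|V|$, these two periods sum to exactly $|U|$, so the Periodicity Lemma yields a period that divides $|U|$ and is smaller than $|U|$, contradicting primitivity. Note, for calibration, that the paper itself does not prove this lemma at all --- it imports it from Crochemore--Rytter, in the Fraenkel--Simpson formulation --- so the benchmark is those proofs, whose easy case is essentially your argument.

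The residual case $|V|<2|U|$, however, is the genuinely hard heart of the Three Squares Lemma, and your sketch does not go through as stated. First, the ``third period'' $|W|-|U|$ only exists when $|W|<2|U|$; the contradiction hypothesis gives merely $|W|<|U|+|V|<3|U|$, so the sub-range $2|U|\le |W|<|U|+|V|$ (for instance $|U|=5$, $|V|=8$, $|W|=12$) is left with only the two periods $|V|-|U|$ and $|W|-|V|$, whose sum is $|W|-|U|\ge |U|$, too large for the Periodicity Lemma; the argument stalls there. Second, even in the sub-range $|W|<2|U|$, applying the Periodicity Lemma to $|V|-|U|$ and $|W|-|V|$ produces the period $d=\gcd(|V|-|U|,\,|W|-|V|)$, which divides $|W|-|U|$ but in general not $|U|$; a primitive word may well have a short period that does not divide its length (e.g.\ $ababa$ has period $2$), so no contradiction with primitivity follows. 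What made your first case work was precisely that the two periods summed to exactly $|U|$; in the overlapping case no such pair arises from the occurrences at $1$, $|U|+1$, $|V|+1$, $|W|+1$ alone, and the known proofs close this case with an additional synchronization-type argument (a primitive word occurs in its own square only as a prefix and as a suffix), which is the ingredient missing from your outline.
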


We next summarise some combinatorial properties of squares and higher powers.

\begin{proposition}\label{lem:two}
Consider a string $S$ and an integer $a$. At most two prefixes of $S$ with lengths from $[2^{a} \dd 2^{a+1})$ can be primitively
rooted squares.
\end{proposition}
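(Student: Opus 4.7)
The plan is a direct application of the Three Squares Lemma (\cref{lem:three_sq}) combined with the length window restriction.

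Suppose, toward a contradiction, that $S$ has three distinct prefixes that are primitively rooted squares with lengths in $[2^{a}\dd 2^{a+1})$; call them $U^2$, $V^2$, $W^2$ with $|U|<|V|<|W|$. Since these squares are all prefixes of $S$ and have strictly increasing lengths, $U^2$ and $V^2$ are proper prefixes of $W^2$. Moreover, $U$ is primitive by assumption, so the hypotheses of \cref{lem:three_sq} are met, giving $|U|+|V|\leq |W|$.

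I would then derive the contradiction from the length window: each of $|U|,|V|,|W|$ lies in $[2^{a}\dd 2^{a+1})$, so $|U|+|V|\geq 2^{a}+2^{a}=2^{a+1}$, while $|W|<2^{a+1}$. Hence $|U|+|V|>|W|$, contradicting the conclusion of the Three Squares Lemma. Therefore at most two such prefixes can be primitively rooted squares.

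There is essentially no obstacle here; the only thing to be careful about is checking that the hypotheses of \cref{lem:three_sq} really apply (primitivity of the shortest root $U$, and that $U^2,V^2$ are \emph{proper} prefixes of $W^2$, which follows from $|U|<|V|<|W|$ together with all three being prefixes of the common string $S$).
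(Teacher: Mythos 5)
Your proof is correct and follows essentially the same route as the paper's: assume three such prefix squares $U^2,V^2,W^2$ with $|U|<|V|<|W|$, apply the Three Squares Lemma (\cref{lem:three_sq}) using the primitivity of $U$, and contradict it via the dyadic length window. One small imprecision: it is the square lengths $2|U|,2|V|,2|W|$ that lie in $[2^{a}\dd 2^{a+1})$, so in fact $|U|,|V|,|W|\in[2^{a-1}\dd 2^{a})$; the desired inequality $|U|+|V|\geq 2^{a}>|W|$ still follows, so the argument is unaffected.
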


\begin{proof}
Assume that there are three such prefixes, and denote them by $UU$, $VV$, $WW$, where
$|U|<|V|<|W|$. Since $|UU|, |VV|, |WW|\in [2^{a} \dd 2^{a+1})$, we have $|U|+|V|>|W|$, which together with the primitivity of $|U|$
leads to a contradiction. 
\end{proof}

\begin{proposition}\label{lem:three}
Consider a string $S$ and an integer $a$. All prefixes of $S$ with lengths from $[2^{a} \dd 2^{a+1})$ that are
powers higher than 2 are of the form $U^{k}$ for the same primitive string $U$.
\end{proposition}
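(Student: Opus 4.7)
The plan is to argue directly: given two prefixes $V^j$ and $W^k$ of $S$ with $V,W$ primitive, $j,k\ge 3$, and $|V^j|,|W^k|\in [2^{a}\dd 2^{a+1})$, I will show $V=W$, which yields the claim. Without loss of generality, assume $|V^j|\le |W^k|$, so $V^j$ is a prefix of $W^k$. The strategy is to set up the Periodicity Lemma on $V^j$ with the two periods $|V|$ and $|W|$.

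The period $|V|$ is immediate from $V^j=V\cdots V$. For the period $|W|$, observe first that $|W|\le |W^k|/k\le |W^k|/3<2^{a+1}/3<2^{a}\le |V^j|$, so $W$ fits within $V^j$; since $V^j$ is a prefix of $W^k$ and $W^k$ has period $|W|$, also $V^j$ has period $|W|$. The Fine--Wilf hypothesis $|V|+|W|\le |V^j|$ follows from the dyadic constraint: $|V^j|, |W^k|$ lie in the same dyadic window, so $|W^k|<2|V^j|$, giving $|W|<2|V^j|/3$; combined with $|V|\le |V^j|/3$, this yields $|V|+|W|<|V^j|$.

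Applying \cref{lem:FW} then shows that $d:=\gcd(|V|,|W|)$ is a period of $V^j$, hence of its length-$|V|$ prefix $V$. Primitivity of $V$ forces $d=|V|$, so $|V|$ divides $|W|$; write $|W|=m|V|$. Now $W$ is a prefix of $V^j$ (since $|W|<|V^j|$), and $V^j$ has period $|V|$, so its length-$m|V|$ prefix equals $V^m$, yielding $W=V^m$; primitivity of $W$ then gives $m=1$, hence $W=V$. The only real obstacle is verifying the Fine--Wilf hypothesis, and the dyadic range constraint together with the exponents being at least $3$ is exactly what makes it hold, with a little room to spare.
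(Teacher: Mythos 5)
Your proof is correct and follows essentially the same route as the paper's: both verify the Fine--Wilf hypothesis via the dyadic window together with the exponents being at least $3$, apply \cref{lem:FW} with the two root lengths as periods, and use primitivity to force the gcd to equal the shorter root, yielding a common primitive root. Your write-up is slightly more careful in fixing the shorter prefix as the string to which the two periods are applied, but the key ideas coincide.
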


\begin{proof}
Assume that there are two such prefixes $U^{k}$ and $V^{\ell}$, where $U$ is primitive, $k,\ell \geq 3$
and $|U| < |V|$. Since $|U^{k}|, |V^{\ell}|\in [2^{a} \dd 2^{a+1})$ and $k,\ell \geq 3$, we have $|U|+|V| \leq |U^{k}|$.
Next, $|U|$ and $|V|$ are periods of $U^{k}$, so by Lemma~\ref{lem:FW} we obtain that $\gcd(|U|,|V|)$ is a period
of $U^{k}$. But $U$ is primitive, so $\gcd(|U|,|V|)=|U|$, and $V$ is a power of $U$.
By repeating this reasoning, we obtain that all such prefixes are powers of the same primitive $U$.
\end{proof}

A fragment $S[i \dd j]$ of a string $S$ is a \emph{run} if and only if it is periodic and it cannot be extended by a character in either direction with its period remaining unchanged.

An $m \times n$ 2D string $A$ is simply a two-dimensional array with $m$ rows and $n$ columns, where $\height(A)=m$ and $\width(A)=n$.
The position that lies on the $i$-th row and the $j$-th column of $A$ is position $(i,j)$. 
We say that a 2D string $P$ occurs at a position $(i,j)$ of a 2D string~$T$ if and only if the \emph{subarray} (also called \emph{fragment}) $T[i \dd i+\height(P))[j \dd j+\width(P))$ of $T$ equals $P$.
We write $\Sigma^{*,*}$ to denote the set of all 2D strings over alphabet $\Sigma$.

A positive integer $p$ is a \emph{horizontal period} of a 2D string $A$ such that $\width(A) \geq p$ if and only if the $j$-th column of $A$ is equal to the $(j+p)$-th column of $A$ for all $j \in [1 \dd \width(A)-p]$.
The smallest horizontal period of $A$ is \emph{the horizontal period} of $A$.
An integer $q$ is a (the) \emph{vertical period} of $A$ if and only if $q$ is a (resp.~the) horizontal period of the transpose of~$A$.

It will be sometimes convenient to view a 2D string as a 1D metastring by viewing each column (or row) as a metacharacter such that metacharacters are equal if and only if the corresponding columns (resp.~rows) are equal. Observe that the horizontal periods (resp.~vertical periods) of a 2D string $A$ are in one-to-one correspondence with the periods of the metastring obtained from $A$ by viewing each column (resp.~row) as a metacharacter.

For a 2D string $W$ and $x, y \in \mathbb{Z}_+$, we denote by $W^{x,y}$ the 2D string that consists of $x\times y$ copies of $W$; see Figure~\ref{fig:2x5w} for an illustration.
A 2D string $W$ is \emph{primitive} if it cannot be written as $Y^{a,b}$ for any 2D string $Y$ and $a,b \in \mathbb{Z}_+$ that are not both equal to $1$.
The primitive root of a 2D string~$X$ is the unique primitive 2D string $Y$ such that $X=Y^{a,b}$ for $a,b \in \mathbb{Z}_+$.
Note that the primitive root is indeed unique by the periodicity lemma applied to the horizontal and vertical 1D metastrings obtained from $X$.

\begin{figure}[t]
\centering
\begin{tabular}{|c|c|c|c|c|}
    \hline
   W & W &W & W&W\\
   \hline
   W & W &W & W&W\\
    \hline
\end{tabular}
\caption{2D string $W^{2,5}$ is shown for some 2D string $W$.}\label{fig:2x5w}
\end{figure}

\subparagraph{Model of computation.}
For our algorithm, we assume the standard word-RAM model of computation with word-size $\Omega(\log n)$.
\section{The Combinatorial Bound}

We consider an $n \times n$ 2D string $A$, whose entries are over an arbitrary alphabet $\Sigma$.
We say that a fragment $A[i \dd i')[j \dd j')$ is a quartic-fragment if and only if it equals some quartic $Q$; further, we say that it is an \emph{extreme} or \emph{bottom-right} quartic-fragment if $Q$ does not have any occurrence at another position $(i'',j'')$ with $i'' \geq i$ and $j'' \geq j$.
We refer to such an occurrence of $Q$ as an extreme or bottom-right occurrence.
We denote by $\BR(i,j)$ the set of extreme quartic-fragments with top-left corner $(i,j)$.
Further, we denote the union of all $\BR(i,j)$ by $\BR$.
Observe that the distinct quartics in $\BR$ are exactly the distinct quartics in $A$ as every quartic that occurs in $A$ has at least one extreme occurrence.
Note that a quartic may have $\Theta(n)$ extreme occurrences; an example is provided in Figure~\ref{fig:extreme}.

\begin{figure}[ht]
\centerline{\includegraphics[scale=.8]{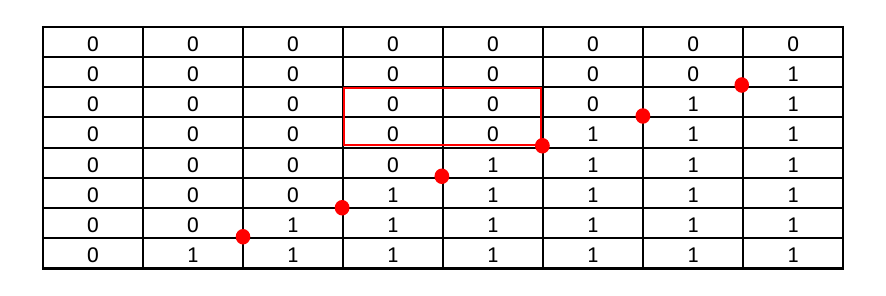}}
\caption{Consider an $n \times n$ 2D string $A$ all of whose entries that lie weakly above the main diagonal are equal to $0$ and all of whose entries that lie strictly below the main diagonal are equal to~$1$. The quartic that equals $0^{2,2}$ has $n-2$ extreme occurrences in $A$.
This is illustrated for $n=8$: the bottom-right corners of extreme occurrences of said quartic are marked.}
\label{fig:extreme}
\end{figure}

Let us consider a partition of the quartic-fragments of $A$ into $\cO(\log^2 n)$ \emph{canonical sets}, such that, for each $(a,b) \in [1 \dd \floor{\log n}]^2$,
the canonical set $\can{a}{b}$ consists of all quartic-fragments of $A$
whose height is in $[2^a \dd 2^{a+1})$ and whose width is in $[2^b \dd 2^{b+1})$.\footnote{Throughout this work, logarithms have base $2$.}

\begin{lemma}\label{lem:aspect_ratio}
For each position $(i,j)$ of $A$, $\BR(i,j)$ has a non-empty intersection with $\cO(\log n)$ canonical sets.
\end{lemma}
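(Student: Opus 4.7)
The plan is to fix a position $(i, j)$ and to bound the cardinality of the set
\[
T := \{(a,b) \in [1 \dd \floor{\log n}]^2 : \BR(i,j) \cap \can{a}{b} \neq \emptyset\}
\]
by $\cO(\log n)$. For each $(a,b) \in T$, fix an extreme quartic-fragment in $\BR(i,j) \cap \can{a}{b}$ and write its underlying block as $W_{(a,b)}$ with dimensions $h_{(a,b)} \times w_{(a,b)}$, so that $h_{(a,b)} \in [2^{a-1} \dd 2^a)$ and $w_{(a,b)} \in [2^{b-1} \dd 2^b)$. The proof will combine a geometric obstruction on pairs of indices in $T$ with an elementary counting step.

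\emph{Geometric step.} I claim that no two distinct points $(a_1, b_1), (a_2, b_2) \in T$ can simultaneously satisfy $a_2 \geq a_1 + 2$ and $b_2 \geq b_1 + 2$. Writing $h_k = h_{(a_k,b_k)}$, $w_k = w_{(a_k,b_k)}$, $W_k = W_{(a_k,b_k)}$, the canonical-set definition together with these index gaps yields $2 h_1 \leq h_2$ and $2 w_1 \leq w_2$. The larger quartic-fragment $Q_2 = A[i \dd i + 2 h_2)[j \dd j + 2 w_2) = W_2^{2,2}$ has vertical period $h_2$ and horizontal period $w_2$, so $A[r][c] = A[r + h_2][c + w_2]$ for every $(r, c) \in [i \dd i + h_2) \times [j \dd j + w_2)$. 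Under the assumed inequalities, the smaller quartic-fragment $W_1^{2,2} = A[i \dd i + 2 h_1)[j \dd j + 2 w_1)$ sits entirely inside the ``top-left tile'' $[i \dd i + h_2) \times [j \dd j + w_2)$ of $Q_2$; applying the diagonal period-shift $(r, c) \mapsto (r + h_2, c + w_2)$ therefore reproduces an exact copy of $W_1^{2,2}$ at $(i + h_2, j + w_2)$. This new occurrence is strictly south-east of $(i, j)$ and still fits inside $A$ (since $i + h_2 + 2 h_1 \leq i + 2 h_2$, analogously for columns, and $Q_2$ itself lies inside $A$), contradicting the extremality of the fragment representing $(a_1, b_1)$.

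\emph{Combinatorial step.} Writing $M = \floor{\log n}$, it remains to bound the size of any $T \subseteq [1 \dd M]^2$ admitting no pair $(a_1, b_1), (a_2, b_2)$ with $a_2 \geq a_1 + 2$ and $b_2 \geq b_1 + 2$. For each non-empty row $a$, set $B_a = \{b : (a, b) \in T\}$, $h(a) = \max B_a$, $f(a) = \min B_a$, and $H(a) = \max\{h(a') : a' \geq a, \, B_{a'} \neq \emptyset\}$ (with $H(a) = 0$ if no such $a'$ exists). The constraint implies $f(a) \geq H(a+2) - 1$ whenever $H(a+2) > 0$, whence $|B_a| \leq h(a) - f(a) + 1 \leq H(a) - H(a+2) + 2$; the at-most-two ``tail'' rows where $H(a+2) = 0$ each contribute at most $M$. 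Splitting $\sum_a (H(a) - H(a+2))$ by the parity of $a$ yields two telescoping sums each bounded by $\max_a H(a) \leq M$, giving $|T| = \cO(M) = \cO(\log n)$.

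\emph{Main obstacle.} The heart of the argument is the geometric step, whose cleanness hinges on the factor-of-two slack $a_2 - a_1 \geq 2$ and $b_2 - b_1 \geq 2$: only then do the inequalities $2 h_1 \leq h_2$ and $2 w_1 \leq w_2$ hold, which is exactly what lets the smaller quartic-fragment fit in a single fundamental tile of the larger and makes the diagonal period shift applicable. One cannot hope to upgrade the lemma to an $\cO(1)$ bound per position: already in the 1D analogue, two squares with the same rightmost occurrence can have lengths within a factor of two of each other, and this phenomenon does persist in 2D, which is precisely why the resulting count is $\cO(\log n)$ canonical sets rather than $\cO(1)$.
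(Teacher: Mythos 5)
Your proof is correct, and its string-theoretic core coincides with the paper's: both contradict extremality by observing that when one extreme quartic-fragment at $(i,j)$ is at most half the height and half the width of another one at the same position, it lies entirely inside the top-left quadrant of the larger one, so the diagonal shift by the larger block's dimensions produces an occurrence strictly to the south-east. Where you diverge is the counting. The paper slices canonical sets by aspect ratio $a-b$ (of which there are $\cO(\log n)$ values) and notes that three contributing sets with the same aspect ratio would force exactly such a containment between the two extreme ones, so each slice contributes at most two sets. You instead extract the general structural constraint that no two indices $(a_1,b_1),(a_2,b_2)$ of contributing canonical sets satisfy $a_2\geq a_1+2$ and $b_2\geq b_1+2$, and then bound the size of any subset of $[1\dd \floor{\log n}]^2$ avoiding this pattern by $\cO(\log n)$ via a staircase/telescoping count. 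Both routes are valid; your forbidden-pattern formulation is slightly more general (it is not restricted to equal aspect ratios), and it in fact implies the paper's per-aspect-ratio bound directly, since among three points with $a-b$ constant the two extreme ones differ by at least $2$ in both coordinates—so your combinatorial step could be replaced by that one-line count if desired. The paper's aspect-ratio slicing buys a shorter finish; your version is more self-contained but relies on the (correct, if terse) telescoping argument.
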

\begin{proof}
We say that the \emph{aspect ratio} of a quartic $Q$ is equal to $2$ raised to the power $\floor{\log(\height(Q))} - \floor{\log(\width(Q))}$.
The aspect ratio of all quartic-fragments in a canonical set $\can{a}{b}$ is $2^{a-b}$.
Observe, that there are $2\cdot\floor{\log n}-1$ different possible values for the aspect ratio of a quartic.
For each $d \in [-\floor{\log n}+1 \dd \floor{\log n}-1]$, let $\BR_d(i,j)$ be the subset of $\BR(i,j)$ that contains exactly the elements of $\BR(i,j)$ with aspect ratio $2^d$.

Next, we show that, for each $d$, we have at most two canonical sets contributing to $\BR_{d}(i,j)$.
Let us suppose towards a contradiction that we have three canonical sets $\can{a}{b}$, $\can{a'}{b'}$, and $\can{a''}{b''}$ that contribute to $\BR_{d}(i,j)$.
In other words, there are quartic-fragments
\begin{itemize}
\item $Q\in\BR_{d}(i,j)$ with height in $[2^a \dd 2^{a+1})$ and width in $[2^b \dd 2^{b+1})$,
\item $Q'\in\BR_{d}(i,j)$ with height in $[2^{a'} \dd 2^{a'+1})$ and width in $[2^{b'} \dd 2^{b'+1})$, and
\item $Q''\in\BR_{d}(i,j)$ with height in $[2^{a''} \dd 2^{a''+1})$ and width in $[2^{b''} \dd 2^{b''+1})$.
\end{itemize}
Since $a-b=a'-b'=a''-b''=d$, we can assume without loss of generality that $a<a'<a''$ and $b<b'<b''$.
We thus have that $a+1<a''$ and $b+1<b''$, which implies that $Q$ is fully contained in the top left quarter of $Q''$.
Thus, $Q$ has an occurrence at position $(i+\height(Q'')/2,j+\width(Q'')/2)$; see Figure~\ref{fig:aspect_ratio}.
This contradicts our assumption that the occurrence of $Q$ at position $(i,j)$ is an extreme occurrence.

Thus, $\cO(\log n)$ canonical sets contribute to $\BR(i,j)$: at most two for each aspect ratio.
\end{proof}
\begin{figure}[ht]
\centerline{\includegraphics[scale=.5]{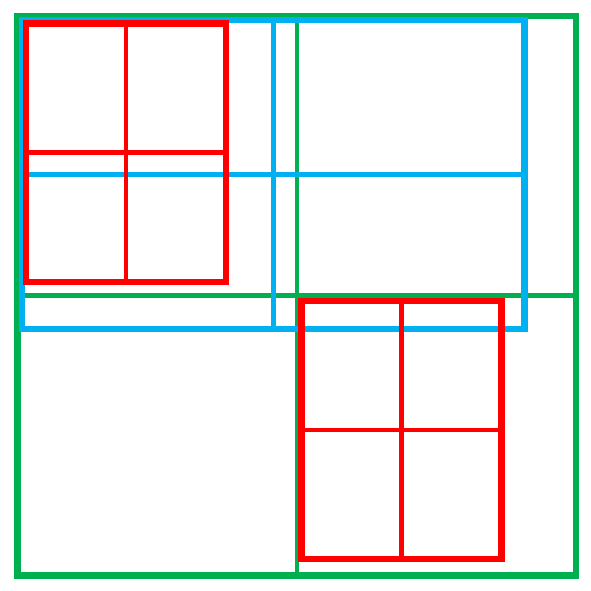}}
\caption{An illustration of the proof of \cref{lem:aspect_ratio} with
quartics $Q$, $Q'$, and $Q''$ drawn in red, blue, and green, respectively.}
\label{fig:aspect_ratio}
\end{figure}

Henceforth, we call a quartic $Q$ with primitive root $P$ \emph{thick} if $Q = P^{x,y}$ for $x,y \geq 5$ and \emph{thin} otherwise.

\begin{lemma}\label{lem:thick_period}
For any position $(i,j)$ of $A$ and any pair $(a,b) \in [1 \dd \floor{\log n}]^2$, $\can{a}{b} \cap \BR(i,j)$ can contain at most 10 thin quartics.
\end{lemma}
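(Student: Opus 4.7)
I aim to bound the thin quartics in $\can{a}{b} \cap \BR(i,j)$ by reducing to one-dimensional distinct-squares arguments after a case split on which multiplicity of the 2D primitive root is small. I begin with a parity observation: if $Q = W^{2,2}$ has 2D primitive root $P$ with $Q=P^{x,y}$, then both $x$ and $y$ are even. Indeed $h_Q/2$ is a vertical period of $Q$ equal to $\height(W)$, so $\height(P)$---the smallest vertical period of $Q$---divides $h_Q/2$, whence $x = h_Q/\height(P)$ is even; symmetrically for $y$. Consequently, thin is equivalent to $\min(x,y) \in \{2,4\}$. I partition the thin quartics as $\mathcal{Q}_V = \{Q : x_Q \in \{2,4\}\}$ and $\mathcal{Q}_H = \{Q : y_Q \in \{2,4\}\}$; every thin quartic lies in $\mathcal{Q}_V \cup \mathcal{Q}_H$, and by the obvious symmetry between rows and columns of $A$ (apply the argument to the transpose of $A$), it suffices to bound $|\mathcal{Q}_V| \le 5$.

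For each $Q \in \mathcal{Q}_V$, the first column of $Q$---the prefix of $c := A[i \dd)[j]$ of length $h_Q \in [2^a, 2^{a+1})$---is an $x_Q$-th power (with $x_Q \in \{2,4\}$) of the 1D string $P_Q[\,\cdot\,][0]$, and in particular a 1D square. By \cref{lem:two}, at most two square prefixes of $c$ of length in $[2^a,2^{a+1})$ are primitively rooted, giving at most two candidate heights $h$. For each such height I apply \cref{lem:two} to the 1D meta-string obtained from $A[i\dd i+h)$ by collapsing each column into a meta-character: the identity $Q = W^{2,2}$ becomes a meta-square of meta-length $w_Q \in [2^b, 2^{b+1})$ at meta-position $j$, which is primitively rooted when $y_Q = 2$ because the 2D primitivity of $P_Q$ translates to meta-primitivity of the meta-root (no $Z$ with $P_Q = Z^{1,k}$). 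This gives at most $2 \times 2 = 4$ quartics in the ``doubly primitively rooted'' subcase.

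The remaining $Q \in \mathcal{Q}_V$ either have a non-primitively-rooted first column or have $y_Q \ge 4$ (so the meta-prefix is a higher meta-power rather than a primitively rooted meta-square). In both scenarios \cref{lem:three} applies, forcing the offending first-column prefixes to be powers $U^m$ of a single primitive 1D string $U$ (and analogously in the meta-string). I then invoke the bottom-right-extremeness condition: if two such extreme thin quartics $Q,Q' \in \mathcal{Q}_V$ coexisted, combining the $|U|$-periodicity of column $j$ with the 2D horizontal period $\width(P_{Q'})$ of the larger quartic---shifted by a carefully chosen vector $(dh,dw)$ aligned with these periods---produces a reoccurrence of the smaller quartic $Q$ at a position $(i',j') > (i,j)$ componentwise, contradicting $Q \in \BR(i,j)$. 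This contributes at most one further quartic, so $|\mathcal{Q}_V| \le 4 + 1 = 5$, and the symmetric argument bounds $|\mathcal{Q}_H|$ by $5$, totalling $10$.

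The main obstacle is this last step. The 1D periodicity (of the first column, or of the meta-string) by a primitive~$U$ does \emph{not} extend to a 2D periodicity of the entire quartic: the 2D primitivity of $P_Q$ forces some column of $P_Q$ to be non-$|U|$-periodic, which is precisely what keeps $P_Q$ primitive. Hence the reoccurrence of the smaller quartic must be produced via a combined shift that uses both the 1D structure (provided by \cref{lem:three}) and the 2D horizontal period $\width(P_{Q'})$ of the larger quartic, and it is here that the bottom-right-extremeness condition of $\BR(i,j)$ is genuinely used.
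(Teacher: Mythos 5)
Your parity observation and overall case structure are reasonable, but there is a genuine gap at the decisive step. The bound for your ``remaining'' bucket (quartics with $x_Q=4$, or $y_Q\geq 4$, or an imprimitive first column) is simply asserted: you claim a ``carefully chosen vector $(dh,dw)$ aligned with these periods'' produces another occurrence of the smaller quartic weakly to the bottom-right, but you never construct it, and your own closing paragraph concedes the exact reason this is hard --- the $|U|$-periodicity of a single column (or of the meta-string) says nothing about the rest of the quartic, since 2D primitivity of $P_Q$ does not transfer to its first column. Beyond the missing construction, the numerology is doubtful: quartics of the form $P^{2,2x}$ with $x>1$ alone can a priori occupy two slots (one per candidate height), and $P^{4,2x}$ with $x>1$ a third, and all of these land in your remaining bucket, as do $P^{2,2}$ quartics whose first column is imprimitive; so ``at most one further quartic'' would itself require a substantial new argument, which is not given. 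In short, the part of the proof that genuinely uses extremeness --- the part you flag as the main obstacle --- is missing.

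The paper's proof avoids this obstacle by never working with a single column: it views the full strip $A[i \dd n][j \dd j+2^{b})$ as a metastring with \emph{rows} as metacharacters, and the strip $A[i \dd i+h)[j \dd n]$ with \emph{columns} as metacharacters. Because $\width(P)\leq \width(Q)/2 < 2^{b}$, the strip contains a full horizontal period of the quartic, so 2D primitivity of $P$ \emph{does} transfer to the meta-root there (an imprimitive meta-square would give $P=Z^{k,1}$). Hence \cref{lem:two} bounds the candidate heights for $P^{2,2}$ and $P^{2,2x}$ by two, and \cref{lem:three} applies to the higher-power forms: all $(2x)$-power prefixes of the column-metastring share one primitive meta-root $U$ (a full-height block), so every quartic shorter than the longest one reoccurs at $(i,j+|U|)$ --- a genuine 2D occurrence --- contradicting extremeness. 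This yields the per-form counts $4+2+1$ plus the symmetric $2+1$, i.e.\ $10$. If you want to salvage your column-based route, you would have to prove the reoccurrence claim for your mixed bucket, which in effect amounts to redoing the paper's strip-metastring argument.
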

\begin{proof}
The possible forms of \emph{thin} quartics are
$P^{2,2}$,
$P^{2,2x}$,
$P^{2y,2}$,
$P^{4,2x}$, and
$P^{2y,4}$ for $x > 1$ and $y > 1$.
We will consider each form separately.

First, we consider quartics of the form $P^{2,2}$ in $\can{a}{b} \cap \BR(i,j)$. We analyse the fragment $A[i \dd n][j \dd j+2^{b})$
and treat it as a metastring by viewing rows as metacharacters.
We observe that each considered quartic defines a prefix that is a square there. Further, all those squares need to be
primitive, as otherwise $P$ could be written as $P=Q^{k,1}$, for some $k>1$, contradicting the primitivity of $P$.
Thus, by \cref{lem:two} we have at most two possible heights for the considered quartics. By a symmetric argument,
we have at most two possible widths, and so at most 4 quartics.

Second, we consider quartics of the form $P^{2,2x}$ in $\can{a}{b} \cap \BR(i,j)$. By the same reasoning as above,
we have at most two possible heights for the considered quartics; let $h$ by one of them.
We analyse the fragment $A[i \dd i+h)[j \dd n]$ and treat it as a metastring by viewing columns as metacharacters.
Each considered quartic with height $h$ corresponds to a prefix
that is a $(2x)$-th power, for some $x>1$. By \cref{lem:three}, all such prefixes are powers
of the same $U$; let $U^{2x}$ be the longest such prefix. Then, for any $x'<x$, the prefix $U^{2x'}$
also occurs at position $(i,j+|U|)$, so the occurrence at position $(i,j)$ cannot be an extreme occurrence. Therefore, for every possible height,
we have at most one quartic, so at most 2 in total.

Third, we consider quartics of the form $P^{4,2x}$ for $x>1$ in $\can{a}{b} \cap \BR(i,j)$. We (again)
analyse the fragment $A[i \dd n][j \dd j+2^{b})$ and treat it as a metastring by viewing its rows as metacharacters.
We observe that each considered quartic defines a prefix that is a primitively rooted fourth power there. 
Thus, by \cref{lem:three} we have at most one possible height, and by the same reasoning as above
at most one quartic.

Symmetric arguments bound the number of quartics of the forms $P^{2y,2}$ and $P^{2y,4}$.
\end{proof}

\begin{lemma}\label{lem:nonthick}
The number of distinct thin quartics in $A$ is $\cO(n^2 \log n )$.
\end{lemma}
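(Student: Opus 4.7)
The plan is to combine the two preceding lemmas with a direct double counting over positions and canonical sets. Since every distinct quartic in $A$ has at least one extreme (bottom-right) occurrence, and in particular every distinct thin quartic corresponds to at least one thin quartic-fragment in $\bigcup_{(i,j)} \BR(i,j)$, it suffices to upper bound the total number of thin quartic-fragments across all sets $\BR(i,j)$.

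First, I would fix a position $(i,j)$ of $A$ and count the thin quartic-fragments in $\BR(i,j)$. Partition these by canonical set: for each pair $(a,b) \in [1 \dd \floor{\log n}]^2$, look at $\can{a}{b} \cap \BR(i,j)$. By \cref{lem:thick_period}, at most $10$ of the elements of $\can{a}{b} \cap \BR(i,j)$ are thin. By \cref{lem:aspect_ratio}, only $\cO(\log n)$ canonical sets contribute any element to $\BR(i,j)$ at all, so a fortiori only $\cO(\log n)$ canonical sets can contribute any thin quartic-fragment to $\BR(i,j)$. Multiplying, the number of thin quartic-fragments in $\BR(i,j)$ is $\cO(\log n)$.

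Next, I would sum over the $n^2$ positions of $A$: the total number of extreme thin quartic-fragments in $A$ is $n^2 \cdot \cO(\log n) = \cO(n^2 \log n)$. Since each distinct thin quartic of $A$ is counted at least once (through any one of its extreme occurrences), the number of distinct thin quartics of $A$ is $\cO(n^2 \log n)$, as claimed.

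No real obstacle is expected here: both ingredients (the aspect-ratio bound and the per-canonical-set bound for thin quartics) are already established, and the argument is just a clean product of $n^2$, $\cO(\log n)$, and $\cO(1)$. The only minor care needed is to note that the $\cO(\log n)$ bound on contributing canonical sets is derived independently of thickness, so it applies verbatim when restricting attention to thin quartic-fragments.
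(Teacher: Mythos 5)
Your proof is correct and follows exactly the paper's argument: combine \cref{lem:aspect_ratio} (at most $\cO(\log n)$ canonical sets meet $\BR(i,j)$) with \cref{lem:thick_period} (at most $10$ thin quartics per such intersection), then sum over the $n^2$ positions, noting every distinct thin quartic has an extreme occurrence. Nothing is missing.
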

\begin{proof}
For each position $(i,j)$ of $A$, $\BR(i,j)$ has a non-empty intersection with at most $\cO(\log n)$ canonical sets due to \cref{lem:aspect_ratio}.
Further, by \cref{lem:thick_period}, there are at most 10 thin quartics in each such intersection.
Since $A$ has $n^2$ positions, the stated bound follows.
\end{proof}
\subsection{Reduction to a Geometric Problem}
We next partition the thick elements of $\BR$ by primitive root.
For each primitive 2D string~$R$, let us denote the thick elements of $\BR$ with primitive root $R$ by $\BR_R$.
Additionally, let us denote by $\OCR(R)$ the set of all positions $(i,j)$ of $A$ where $R^{5,5}$ occurs such that there is an element of $\BR_R$ that fully contains this occurrence of $R^{5,5}$ and has top-left corner equal to $(i-x \cdot \height(R), j-y \cdot \width(R))$ for some non-negative integers $x$ and $y$.

The proof of the following lemma proceeds almost exactly as the proof of Claim 18 in~\cite{CRRWZ}, except that 
we work with  occurrences of $R^{5,5}$ instead of $R^{3,3}$ and do not need the notion of special points.
We provide a detailed description for completeness.

\begin{lemma}[cf.~the proof of~{\cite[Claim 18]{CRRWZ}}]\label{lem:single_area}
For any 2D string $R$, $|\BR_R| \leq |\OCR(R)|$.
\end{lemma}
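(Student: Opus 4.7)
The plan is to construct an injective map $f\colon\BR_R\to\OCR(R)$, which immediately yields the desired inequality. For each element $Q=R^{x,y}\in\BR_R$ at top-left position $(i_0,j_0)$ (so that $x,y\geq 5$ and $Q$ is extreme), I would set $f(Q)=(i_0,j_0)$, the position of the top-left $R^{5,5}$-sub-block inside the $R$-tiling of $Q$. This point belongs to $\OCR(R)$ because $R^{5,5}$ certainly occurs at $(i_0,j_0)$ and $Q$ itself serves as the required witness in the definition of $\OCR(R)$ with the two shift parameters both equal to $0$.

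The bulk of the argument is to show that $f$ is injective, i.e.\ that no two distinct elements of $\BR_R$ can share a top-left corner. Suppose $Q_1=R^{x_1,y_1}$ and $Q_2=R^{x_2,y_2}$ both lie in $\BR_R$ with common top-left $(i_0,j_0)$, and assume without loss of generality $x_1\leq x_2$. In the comparable subcase ($y_1\leq y_2$, with some inequality strict), $Q_1\subsetneq Q_2$, and the $R$-periodic tiling of $Q_2$ exhibits an occurrence of the underlying 2D string of $Q_1$ at every grid-shifted position $(i_0+k\,\height(R),\,j_0+l\,\width(R))$ with $(k,l)\in[0,x_2-x_1]\times[0,y_2-y_1]$. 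Any non-zero choice of $(k,l)$ dominates $(i_0,j_0)$ coordinate-wise and therefore contradicts the extremeness of $Q_1$, forcing $(x_1,y_1)=(x_2,y_2)$.

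The main obstacle will be the incomparable subcase ($x_1<x_2$ and $y_1>y_2$). Here $Q_1$ and $Q_2$ jointly cover only an L-shaped $R$-periodic region with a concave corner at the rectangle $[i_0+x_1\height(R),i_0+x_2\height(R))\times[j_0+y_2\width(R),j_0+y_1\width(R))$. Mirroring Claim~18 of~\cite{CRRWZ} with $R^{5,5}$ in place of $R^{3,3}$, I would examine the tiling at this corner: if the $R$-tiling extended into the corner, then $R^{x_2,y_1}$ would occur at $(i_0,j_0)$, and within it grid-shifted occurrences of $R^{x_1,y_1}$ would again contradict the extremeness of $Q_1$. Hence the concave corner must contain a gap in the $R$-tiling, and the resulting structural constraints --- together with the additional slack afforded by working with $R^{5,5}$ rather than $R^{3,3}$ (which obviates the need for the ``special points'' device of~\cite{CRRWZ}) --- either rule out the incomparable configuration or permit us to redirect one of the two mappings to a distinct $R^{5,5}$-position inside the relevant quartic while still lying in $\OCR(R)$. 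Once injectivity is established across both subcases, the bound $|\BR_R|\leq|\OCR(R)|$ follows.
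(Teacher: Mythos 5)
Your core claim---that no two distinct elements of $\BR_R$ can share a top-left corner---is false, and your own ``incomparable subcase'' is exactly where it breaks. Take an L-shaped region tiled by $R$ (say the union of $[i_0 \dd i_0+6\height(R))\times[j_0 \dd j_0+10\width(R))$ and $[i_0 \dd i_0+10\height(R))\times[j_0 \dd j_0+6\width(R))$, with unrelated content elsewhere): then $R^{6,10}$ and $R^{10,6}$ both have extreme occurrences at $(i_0,j_0)$, so the configuration you hope to ``rule out'' genuinely occurs, and the concave-corner gap you derive is consistent with it rather than contradictory. Worse, a staircase-shaped tiled region makes an entire antichain $R^{6,y_1}, R^{8,y_2},\dots$ with $y_1>y_2>\cdots$ extreme at one and the same corner, so there is no bound like ``two'' on the collision multiplicity, and ``redirect one of the two mappings'' cannot be a local patch: you would need a systematic scheme guaranteeing globally distinct targets in $\OCR(R)$. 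That scheme is the actual content of the lemma, and your proposal leaves it entirely unspecified.

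For comparison, the paper does not use top-left corners at all. It processes the heights $x\cdot\height(R)$, $x=6,8,\dots$, in order; for each $x$ it takes one extreme occurrence $(i,j)$ of the widest quartic $Q=R^{x,y}$ of that height in $\BR_R$, and maps \emph{all} quartics of that height (at most $y/2-2$ of them, since their widths are distinct even multiples of $\width(R)$ up to $y$) to occurrences of $R^{5,5}$ at positions $(i+k\cdot\height(R),\,j+\ell\cdot\width(R))$ lying in a single fresh row $k$ of the tiling of this one occurrence of $Q$. The counting $x/2-3<x-4$ guarantees an unused row exists, and $y/2-2\leq y-4$ guarantees the row has enough $R^{5,5}$-occurrences; all these targets are in $\OCR(R)$ because they sit inside an element of $\BR_R$ aligned with its $R$-tiling. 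Your ``comparable subcase'' argument is fine but is not where the difficulty lies; without an explicit global assignment of this kind your proof does not go through.
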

\begin{proof}
We will map each $Q\in\BR_R$ to an occurrence of $R^{5,5}$ in such a way that two distinct quartic-fragments
$Q,Q'\in\BR_R$ are mapped to distinct occurrences. This will imply that the number of occurrences of $R$ is at least
as large as the number of elements of $\BR_R$. 

For each $x=6,8,\dots$ in this order, we select $Q\in\BR_R$ such that $\height(Q)=x\cdot\height(R)$ and $\width(Q)=y\cdot\width(R)$
is the largest among all $Q'\in\BR_R$ with $\height(Q')=x\cdot\height(R)$. We note that the number of $Q'\in \BR_{R}$ with
$\height(Q')=x\cdot \height(R)$ is at most $y/2-2$, and our goal is to map them to occurrences of $R^{5,5}$ that
have not been used so far. Additionally, we will ensure that those occurrences are all in the same row.
Let $(i,j)$ be the position of an extreme occurrence of $Q$. We observe that $R^{5,5}$ occurs at every position $(i',j')$
with $i'=i+k\cdot \height(R)$ and $j'=j+\ell\cdot \width(R)$, for every $k\in [0\dd x-5]$ and $\ell\in [0\dd y-5]$.
We choose $k\in [0\dd x-5]$ such that none of the occurrences of $R^{5,5}$ at positions $(i+k\cdot \height(R),j+\ell\cdot \width(R))$,
for $\ell=0,1,\ldots,y-3$ have been used so far. This is possible because so far we have used occurrences
of $R^{5,5}$ in only $x/2-3 < x-4$ rows. Then, we map every $Q'\in \BR_{R}$ with $\height(Q')=x\cdot \height(R)$
to an occurrence of $R^{5,5}$ at position $(i+k\cdot \height(R),j+\ell\cdot \width(R))$, for some $\ell\in [0\dd y-5]$,
which is possible due to $y/2-2 \leq y-4$.
\end{proof}

Thus, it remains to upper bound $\sum_R|\OCR(R)|$, i.e., the sum, over all $R$, of the number of occurrences of $R^{5,5}$ which are contained in some element of $\BR_R$.

Consider an occurrence of a 2D string of the form $R^{5,5}$, for a primitive string $R$, at a position $(i,j)$ of $A$.
We call position $(i + 2 \cdot\height(R), j + 2 \cdot\width(R))$ the \emph{anchor} of this occurrence; see Figure~\ref{fig:anchor}.

\begin{figure}[t]
\centerline{\includegraphics[scale=.8]{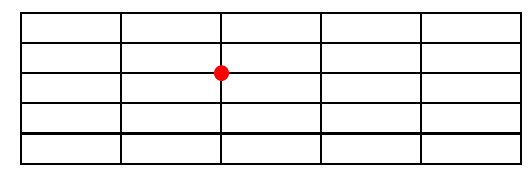}}
\caption{The red point corresponds to the anchor of the shown occurrence of $R^{5,5}$.}
\label{fig:anchor}
\end{figure}

Now, for each primitive string $R$, for each element of $\OCR(R)$, we assign the corresponding occurrence of $R^{5,5}$ to its anchor.
Let $\assign(i,j)$ be the set of primitive 2D strings~$R$ such that occurrences of $R^{5,5}$ have been assigned to position $(i,j)$.
We have
\begin{equation}\label{eq:distr}
\sum_R|\OCR(R)|=\sum_{i=1}^{n}\sum_{j=1}^{n}|\assign(i,j)|.
\end{equation}

It now suffices to show that $\sum_{i=1}^{n}\sum_{j=1}^{n}|\assign(i,j)|=\cO(n^2 \log n)$.
We will in fact show that $|\assign(i,j)|=\cO(\log n)$ for all $i,j$, which straightforwardly yields the desired bound.

Let us fix a position $(i,j)$.
By applying \cref{lem:three} horizontally and vertically one easily obtains the following fact.

\begin{fact}[{\cite[Corollary 13]{CRRWZ}}]\label{fact:threethree}
Let $a$, $b$ be non-negative integers and $W, Z$
be different 2D strings with height in $[2^a \dd 2^{a+1})$ and width in $[2^b\dd 2^{b+1})$.
If $W^{3,3}$ and $Z^{3,3}$ occur at some position of $A$, then at least one of $W$ and $Z$
is not primitive.
\end{fact}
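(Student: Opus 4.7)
The plan is a proof by contradiction: assume $W \neq Z$ and that both $W$ and $Z$ are primitive, and derive $W = Z$. By translation, I may place the common occurrence at position $(1,1)$.

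\textbf{Horizontal step.} For every row $r \in [1 \dd h_W]$, the occurrence of $W^{3,3}$ at $(1,1)$ implies that the length-$3w_W$ prefix of row $r$ of $A$ is the cube of its length-$w_W$ prefix, and likewise $Z^{3,3}$ implies the analogous cube identity of length $3w_Z$. The latter is meaningful on row $r$ because $r \leq h_W < 2h_Z \leq 3h_Z$, using $h_W, h_Z \in [2^a \dd 2^{a+1})$. Hence the length-$3\min(w_W, w_Z)$ prefix of row $r$ has both $w_W$ and $w_Z$ as periods, and $w_W, w_Z \in [2^b \dd 2^{b+1})$ gives $w_W + w_Z < 3\min(w_W, w_Z)$. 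Exactly as in the proof of \cref{lem:three}, the periodicity lemma (\cref{lem:FW}) then forces $g := \gcd(w_W, w_Z)$ to be a period of the length-$\min(w_W, w_Z)$ prefix of row $r$.

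If $w_W < w_Z$ then, since both values lie in the same dyadic interval $[2^b \dd 2^{b+1})$, $w_W \nmid w_Z$, so $g$ is a \emph{proper} divisor of $w_W$. Running the previous paragraph for every $r \in [1 \dd h_W]$ upgrades $g$ to a horizontal period of the 2D string $W$ itself, whence $W$ decomposes nontrivially as $(W')^{1, w_W/g}$ with $W'$ its leftmost $g$ columns and $w_W/g \geq 2$, contradicting the primitivity of $W$. The symmetric case $w_W > w_Z$ analogously contradicts the primitivity of $Z$. We therefore must have $w_W = w_Z$.

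\textbf{Vertical step and conclusion.} The identical argument, now reading columns as 1D strings, yields $h_W = h_Z$. But then $W$ and $Z$ have the same dimensions and both start at $(1,1)$, so $W = Z$, contradicting $W \neq Z$. The step I expect to be the main obstacle is the careful use of the dyadic bracketing of both pairs $(h_W, h_Z)$ and $(w_W, w_Z)$: this is simultaneously what lets the second cube-periodicity apply on the full row range $[1 \dd h_W]$, what supplies the Fine--Wilf hypothesis $w_W + w_Z \leq 3\min(w_W,w_Z)$, and what guarantees $\gcd(w_W, w_Z) < \min(w_W,w_Z)$ whenever $w_W \neq w_Z$. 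Everything else is exactly the 1D argument underlying \cref{lem:three}, applied once horizontally and once vertically.
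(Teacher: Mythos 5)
Your proof is correct, and it follows essentially the same route as the paper, which derives this fact (citing~\cite{CRRWZ}) by applying the periodicity-lemma argument behind \cref{lem:three} horizontally and vertically: the dyadic bracketing of heights and widths lets each cube structure cover the other's rows/columns, Fine--Wilf forces the gcd of the two widths (resp.\ heights) to be a period, and primitivity then forces equal widths and heights, hence $W=Z$. Your only deviation is cosmetic---you unroll the argument row-by-row and column-by-column instead of invoking \cref{lem:three} on metastrings---and the symmetric case ($w_W>w_Z$, needing rows up to $h_Z\leq 3h_W$) is covered by the same dyadic observation you already made explicit.
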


This, together with the fact that, for each $R \in \assign(i,j)$, $R^{3,3}$ occurs at position $(i,j)$, implies the following.

\begin{fact}\label{lem:unique}
For each pair $(a,b) \in [1 \dd \floor{\log n}]^2$, for each position $(i,j)$ of $A$, the set $\assign(i,j)$ contains at most one element with height in $[2^a \dd 2^{a+1})$ and width in $[2^b \dd 2^{b+1})$.
\end{fact}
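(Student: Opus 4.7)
The plan is to obtain the statement as an essentially immediate corollary of \cref{fact:threethree}, once we verify the correct alignment at the anchor.

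First, I would make explicit the following simple geometric fact about the anchoring: if $R \in \assign(i,j)$, then by definition $R$ is primitive and an occurrence of $R^{5,5}$ is assigned to $(i,j)$, which means this occurrence sits at position $(i - 2\cdot\height(R), j - 2\cdot\width(R))$. Consequently, the $5\times 5$ tiling by copies of $R$ places the block with indices $(k,\ell) \in \{0,\ldots,4\}^2$ at position $(i + (k-2)\cdot\height(R), j + (\ell-2)\cdot\width(R))$. Taking $(k,\ell) \in \{2,3,4\}^2$, we read off an occurrence of $R^{3,3}$ whose top-left corner is exactly $(i,j)$. Thus, for every $R \in \assign(i,j)$, the 2D string $R^{3,3}$ occurs at position $(i,j)$ of $A$.

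Now I would conclude by contradiction. Suppose there are two distinct $W, Z \in \assign(i,j)$, both with $\height \in [2^a \dd 2^{a+1})$ and $\width \in [2^b \dd 2^{b+1})$. By the previous paragraph, both $W^{3,3}$ and $Z^{3,3}$ occur at position $(i,j)$. But $W$ and $Z$ are primitive (being members of $\assign(i,j)$), so \cref{fact:threethree} is violated. Hence at most one such element can belong to $\assign(i,j)$, which is what was to be shown.

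There is no real obstacle here: the only thing that could go wrong is an off-by-one in the anchor alignment, so the main care required is in verifying that the 2-block shift in each coordinate is precisely what places a $3\times 3$ subtiling with its top-left corner at $(i,j)$, and this is exactly how the anchor was defined.
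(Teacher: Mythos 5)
Your proof is correct and matches the paper's argument: the paper likewise derives the fact directly from \cref{fact:threethree} together with the observation that, for each $R \in \assign(i,j)$, the anchoring convention places an occurrence of $R^{3,3}$ with top-left corner exactly at $(i,j)$. Your explicit verification of the anchor alignment is just a spelled-out version of what the paper leaves implicit.
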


Let us define a map $\Sigma^{*,*} \rightarrow [1 \dd \floor{\log n}]^2$ as $ g(R) \xmapsto{}{} (\floor{\log (\height(R))}, \floor{\log (\width(R))})$.
Let $f$ be the restriction of $g$ to the domain $\assign(i,j)$.
Due to \cref{lem:unique}, $f$ is an injective function.
We henceforth identify each element $R$ of $\assign(i,j)$ with point~$f(R)$.
We denote the image of $f$ by $\P$.

We say that a point $(a,b) \in \mathbb{Z}^2$ \emph{dominates} a point $(a',b')$ if and only if $a' \leq a$ and $b' \leq b$;
the dominance is said to be \emph{strict} if and only if $a' < a$ and $b' < b$;
the dominance is said to be \emph{strong} if and only if $a' \leq a$ and $b' < b$ or $a' < a$ and $b' \leq b$;
the dominance is said to be \emph{weak} if and only if $a' \leq a$ and $b' \leq b$.
A set of points on which the domination relation forms a total order is called a \emph{chain}.
A set of points such that none dominates another is called an \emph{antichain}.
We are going to use Dilworth's theorem \cite{dilworth}, which states that, in any finite partially ordered set, the size of the largest antichain is equal to the minimum number of chains in which the elements of the set can be decomposed.

For two primitive 2D strings $S$ and $T$,
with $3\cdot \height(S)\leq 2\cdot \height(T)$ and $\width(S)< \width(T)$,
we say that
$S$ \emph{horizontally spans} $T$ when the 2D string $\hspan{S}{T}$, consisting of the $3\cdot \height(S)$ topmost rows of $T^{2,2}$, equals $S^{3,y}$ for some integer $y \geq 2$.
Similarly, when $3\cdot \width(S)\leq 2\cdot \width(T)$ and $\height(S)< \height(T)$, we say that $S$ \emph{vertically spans} $T$ when the 2D string $\vspan{S}{T}$, consisting of the $3\cdot \width(S)$ leftmost columns of $T^{2,2}$, equals $S^{x,3}$ for some integer $x \geq 2$.

\begin{fact}\label{fact:span}
Let $S$ and $T$ be two primitive 2D strings.
If $S$ spans $T$ horizontally, then the horizontal period of $\hspan{S}{T}$ is $\width(S)$.
Symmetrically, if $S$ spans $T$ vertically, then the vertical period of $\vspan{S}{T}$ is $\height(S)$.
\end{fact}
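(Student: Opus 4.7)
The plan is to handle only the horizontal statement, as the vertical one then follows by an entirely symmetric argument (transposing rows and columns). I will view $\hspan{S}{T}$ as a 1D metastring of length $2\cdot\width(T)$ whose metacharacters are the columns of $\hspan{S}{T}$; then a horizontal period of $\hspan{S}{T}$ is the same thing as a period of this metastring. The upper bound $p \leq \width(S)$ for the horizontal period $p$ of $\hspan{S}{T}$ is immediate from the factorization $\hspan{S}{T} = S^{3,y}$, so all the work goes into showing equality.

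The key observation is that $\hspan{S}{T}$ has two obvious horizontal periods: namely $\width(S)$ (from the factorization as $S^{3,y}$) and $\width(T)$ (because $\hspan{S}{T}$ consists of the top $3\cdot\height(S)$ rows of $T^{2,2}$, and $T^{2,2}$ trivially has horizontal period $\width(T)$). I then plan to feed each of these in turn, paired with $p$, into the Periodicity Lemma (\cref{lem:FW}) applied to the column-metastring. Since $p \leq \width(S) \leq \width(T)$ and the metastring has length $2\cdot\width(T)$, both sums $p + \width(T)$ and $p + \width(S)$ are at most this length, so both applications are legal. By the minimality of $p$, I conclude that $p \mid \width(T)$ and $p \mid \width(S)$.

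The final step exploits the primitivity of $S$. Restricting $\hspan{S}{T}$ to its first $\width(S)$ columns recovers $S$ stacked vertically three times, so $p$ is a horizontal period of $S$ as well. Combined with $p \mid \width(S)$, this means $S$ is the $(\width(S)/p)$-fold horizontal concatenation of its first $p$ columns; if $p$ were strictly less than $\width(S)$ then $S$ would decompose as $Y^{1,\width(S)/p}$ for that prefix $Y$, contradicting primitivity. Hence $p = \width(S)$, as required.

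I do not anticipate a genuine obstacle: the whole argument reduces to a careful bookkeeping of two evident periods plus one application of the Periodicity Lemma, followed by extracting primitivity. The only mildly delicate point is verifying that the hypotheses of the Periodicity Lemma are satisfied by the available length $2\cdot\width(T)$, which is precisely why the definition of horizontal spanning takes the top rows of $T^{2,2}$ rather than of a single copy of $T$; that extra factor of two in the width is exactly the slack that makes the Fine--Wilf hypothesis hold.
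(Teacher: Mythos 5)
Your proof is correct and takes essentially the same route as the paper's: view $\hspan{S}{T}$ as a column metastring, apply the Periodicity Lemma (\cref{lem:FW}) to its minimal period $p$ together with the period $\width(S)$ to conclude $p \mid \width(S)$, and then contradict the primitivity of $S$. Your extra application yielding $p \mid \width(T)$ is valid but unused.
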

\begin{proof}
We only prove the first statement as the second one follows by symmetry.
Let us view $\hspan{S}{T}$ as a metastring $Z$ by viewing each of its columns as a metacharacter; the horizontal period of $\hspan{S}{T}$ equals $p:=\per(Z)$.
Note that $\width(S)$ is a period of $Z$.
Towards a contradiction, suppose that $p<\width(S)$.
Then, an application of the periodicity lemma to $Z$ implies that $p$ must divide $\width(S)$.
This fact contradicts the primitivity of $S$, as we would have that $S=(S[1 \dd \height(S)][1 \dd p])^k$ for $k=\width(S)/p$.
\end{proof}

When reading the following lemma, one can think of $M$ being in $\assign(i,j)$.
However, the lemma is slightly more general, as needed for the algorithm that is presented in \cref{sec:algo}.

\begin{lemma}\label{lem:span}
Let $R \in \assign(i,j)$ and $M$ be a primitive 2D string such that:
\begin{itemize}
\item $M^{5,5}$ has an occurrence with anchor $(i,j)$;
\item $g(M)$ strictly dominates $g(R)$.
\end{itemize}
Then, $R$ spans $M$ either horizontally or vertically (or both).
\end{lemma}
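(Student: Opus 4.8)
The plan is to exploit the two nested occurrences around $(i,j)$. Write $h_R=\height(R)$, $w_R=\width(R)$, $h_M=\height(M)$, $w_M=\width(M)$; since $g(M)$ strictly dominates $g(R)$ we have $h_R<h_M$ and $w_R<w_M$. As $R^{5,5}$ and $M^{5,5}$ occur with anchor $(i,j)$, their central $3\times3$ blocks $R^{3,3}$ and $M^{3,3}$ both occur at $(i,j)$, and these inequalities force the occurrence of $R^{3,3}$ to be exactly the top-left $3h_R\times 3w_R$ corner of the occurrence of $M^{3,3}$, while the occurrence of $R^{5,5}$ is the sub-block of the occurrence of $M^{5,5}$ at offset $(2(h_M-h_R),2(w_M-w_R))$. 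I would first record the size bounds $2h_R\le h_M$ and $2w_R\le w_M$, which in particular give $3h_R\le 2h_M$ and $3w_R\le 2w_M$ as required by the definition of spanning: viewing the $R^{3,3}$-corner of the occurrence of $M^{3,3}$ as a one-dimensional metastring over its rows, it has period $h_R$ coming from $R$ (which is primitive), and if $h_M\le 2h_R$ it also inherits the period $h_M$ from $M^{3,3}$; since then $h_R+h_M\le 3h_R$, \cref{lem:FW} together with the primitivity of $R$ yields $h_R\mid h_M$, hence $h_M=2h_R$, so $h_M\ge 2h_R$ in all cases. The horizontal statement is symmetric.

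The heart of the argument is a dichotomy. (a) If $4w_R\ge w_M$, then $R$ spans $M$ horizontally: inside the occurrence of $M^{5,5}$ consider the band $B$ of $5h_R$ rows covered by the occurrence of $R^{5,5}$; as a metastring over columns, $B$ has period $w_M$ throughout (it is a horizontal slice of an $M$-tiled block) and period $w_R$ on the $5w_R$ columns covered by the occurrence of $R^{5,5}$. When $4w_R\ge w_M$ these $5w_R$ columns are wide enough that, feeding the period $w_R$ through the cyclic structure induced by $w_M$ and invoking \cref{lem:FW} and the primitivity of $R$, the period $w_R$ propagates to all of $B$, $w_R\mid w_M$, and $B$ equals $R^{5,\,5w_M/w_R}$; restricting $B$ to the top $3h_R$ rows of the copy of $M^{2,2}$ occupying rows $[i\dd i+2h_M)$ and columns $[j\dd j+2w_M)$ of $A$ gives $\hspan{R}{M}=R^{3,\,2w_M/w_R}$, i.e. horizontal spanning (and \cref{fact:span} then even pins down the period). (b) Symmetrically, if $4h_R\ge h_M$ then $R$ vertically spans $M$.

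It remains to treat the case $w_M>4w_R$ and $h_M>4h_R$, and this is where I would use $R\in\assign(i,j)$: it supplies an extreme thick quartic-fragment $Q=R^{x,y}$ with $x,y\ge 6$ that fully contains the occurrence of $R^{5,5}$ with an $R$-grid-aligned top-left corner, so the region of $A$ tiled by $R$ extends past that occurrence. Note $Q$ always covers rows $[i\dd i+3h_R)$ and columns $[j\dd j+3w_R)$. If $Q$ in addition covers columns $[j\dd j+2w_M)$ or columns $[j-2w_M\dd j)$, then it covers the top $3h_R$ rows of a suitable $M$-grid-aligned copy of $M^{2,2}$ inside the occurrence of $M^{5,5}$, and one more application of \cref{lem:FW} (yielding $w_R\mid w_M$) gives horizontal spanning; if instead $Q$ covers rows $[i\dd i+2h_M)$ or rows $[i-2h_M\dd i)$, the analogous argument gives vertical spanning. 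In the only remaining subcase $Q$ is confined to rows $(i-2h_M\dd i+2h_M)$ and columns $(j-2w_M\dd j+2w_M)$, so—using precisely the bounds $2h_R\le h_M$ and $2w_R\le w_M$—the translate of $Q$ by $(h_M,0)$ still lies inside the occurrence of $M^{5,5}$, hence is an occurrence of the 2D string $Q$ strictly below the one we started from, contradicting that $Q$ is an extreme (bottom-right) quartic-fragment. Thus this subcase is vacuous and $R$ spans $M$ horizontally or vertically.

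The main obstacle is case (c): the quartic $Q$ furnished by $R\in\assign(i,j)$ need not be large, so one must play the way $Q$ protrudes from—or fits inside—the occurrence of $M^{5,5}$ against both the two-dimensional periodicity of $M$ and the extremality of $Q$, and get the quantitative thresholds to line up so that ``$Q$ not wide/tall enough to span'' forces ``enough slack to translate $Q$''. This is exactly the point where using the $5\times5$ windows rather than $3\times3$ windows, and the fact that the enclosing quartic is simultaneously thick and extreme, are essential. By contrast, cases (a) and (b) are routine Fine–Wilf computations once the correct metastring slice is singled out; the only care needed there is bookkeeping of which aligned copy of $M^{2,2}$ one restricts the tiled region to.
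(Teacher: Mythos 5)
Your proof is correct and, at its core, is the same argument as the paper's: your case (c) uses the bottom-right extremality of the enclosing thick quartic $Q$ to show it cannot be confined to the $M^{4,4}$-block around the anchor (else translating it by one period of $M$ inside the occurrence of $M^{5,5}$ would give a forbidden occurrence), so $Q$ must contain one of the four wide/tall fragments, after which the periodicity lemma plus primitivity of $R$ yields spanning, exactly as in the paper. Your preliminary bounds $2\cdot\height(R)\leq\height(M)$, $2\cdot\width(R)\leq\width(M)$ and the extremality-free cases (a) and (b) are correct but redundant, since your case-(c) argument nowhere uses the assumptions $\width(M)>4\cdot\width(R)$ and $\height(M)>4\cdot\height(R)$ and hence already covers all cases.
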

\begin{proof}
By the definition of $\assign(i,j)$, the occurrence of $R^{5, 5}$ assigned to position $(i,j)$ appears inside an element of $\BR_{R}$, that is, a bottom-right occurrence of a thick quartic with primitive root $R$.
Let us denote this quartic by $Q$.
Observe that the considered occurrence of $Q$ cannot be fully contained inside the occurrence of $M^{4,4}$ at position $(i-2\cdot \height(M),j-2\cdot \width(M))$
as this would contradict the fact that the considered occurrence of $Q$ is bottom-right: there would be another occurrence $\width(M)$ positions to the right.
Therefore, $Q$ must contain at least one of the following four fragments of $A$, depicted in \cref{fig:fourfrag}:
\textcolor{blue}{$A[j \dd j + 3\cdot \height(R))[i \dd i + 2\cdot \width(M))$},
\textcolor{green!70!black}{$A[j \dd j + 3\cdot \height(R))[i - 2 \cdot \width(M) \dd i)$},
\textcolor{red}{$A[j \dd j + 2 \cdot \height(M))[i \dd i + 3\cdot \width(R))$},
\textcolor{black!60!white}{$[j - 2 \cdot \height(M) \dd j)A[i \dd i + 3\cdot \width(R))$}.
\begin{figure}[htpb]
\centering
\includegraphics[width=5cm]{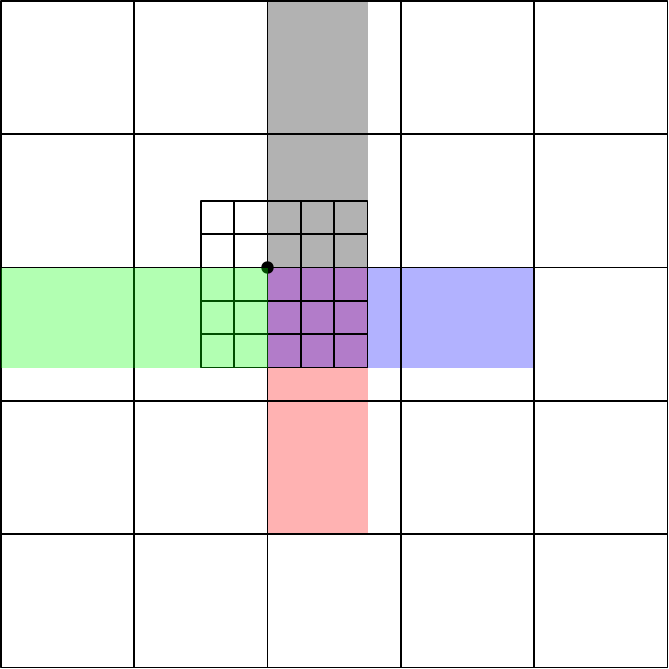}
\caption{The considered occurrences of each of $M^{5,5}$ and $R^{5,5}$ are shown, together with the four specified fragments, at least one of which must be fully contained in $Q$.}\label{fig:fourfrag}
\end{figure}

We next show that in either of the first two cases, $R$ horizontally spans $M$. In the remaining cases, a symmetric argument yields that $R$ vertically spans $M$.
We first argue that, in either of the first two cases, we have $3\cdot \height(R) \leq 2 \cdot \height(M)$.
If this were not the case, $\height(R)$ would be a vertical period of $M^{2,2}$ since $\height(R)+\height(M) \leq 2\height(M)$.
Then, a direct application of Lemma~\ref{lem:FW} to the metastring obtained from $M^{2,2}$ by viewing its rows as metacharacters, would contradict the primitivity of $M$. 
The fact that $\width(R)<\width(M)$ is a direct consequence of $g(M)$ strictly dominating $g(R)$. 
Then, in either of the considered cases, we have that both $\width(M)$ and $\width(R)$ are horizontal periods of $\hspan{R}{M}$.
Hence, by the periodicity lemma applied to the metastring obtained from $\hspan{R}{M}$ by viewing each of its columns as a metacharacter, we have that $\width(R)$ divides $\width(M)$.
This completes the proof.
\end{proof}

\begin{lemma}\label{lem:onechain}
Two primitive 2D strings $R_1$ and $R_2$ such that
$g(R_1)$ and
$g(R_2)$
form an antichain cannot both horizontally span a 2D string $R$.
\end{lemma}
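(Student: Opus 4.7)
My plan is to argue by contradiction, leveraging \cref{fact:span} to transfer periods between nested span regions. Suppose both $R_1$ and $R_2$ horizontally span $R$. Since $g(R_1)$ and $g(R_2)$ form an antichain, they are distinct and neither weakly dominates the other, so (up to swapping the roles of $R_1$ and $R_2$) we may assume $\floor{\log \height(R_1)} < \floor{\log \height(R_2)}$ and $\floor{\log \width(R_1)} > \floor{\log \width(R_2)}$. This pins down the strict inequalities $\height(R_1) < \height(R_2)$ and $\width(R_1) > \width(R_2)$, which is the only way I will use the antichain hypothesis.

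Next, I would observe that the two span regions $\hspan{R_1}{R}$ and $\hspan{R_2}{R}$ are both top-aligned subarrays of $R^{2,2}$ with the same column range $[1 \dd 2\,\width(R)]$, differing only in their numbers of rows. Because $\height(R_1) < \height(R_2)$, the region $\hspan{R_1}{R}$ consists of the first $3\,\height(R_1)$ rows of $\hspan{R_2}{R}$; in particular, every horizontal period of $\hspan{R_2}{R}$ is automatically a horizontal period of $\hspan{R_1}{R}$.

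To conclude, I would apply \cref{fact:span} twice: the horizontal period of $\hspan{R_2}{R}$ equals $\width(R_2)$, so by the inclusion above, $\width(R_2)$ is a horizontal period of $\hspan{R_1}{R}$; but \cref{fact:span} also says that the (i.e.\ smallest) horizontal period of $\hspan{R_1}{R}$ is $\width(R_1)$, yielding $\width(R_1) \le \width(R_2)$. This directly contradicts $\width(R_1) > \width(R_2)$ derived from the antichain hypothesis, so $R_1$ and $R_2$ cannot both horizontally span $R$.

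The only substantive point is making sure the antichain hypothesis forces the cross relation $\height(R_1) < \height(R_2)$, $\width(R_1) > \width(R_2)$ between the \emph{actual} dimensions (not just their floor-logs); this is immediate because $\floor{\log x} < \floor{\log y}$ implies $x < 2^{\floor{\log x}+1} \le 2^{\floor{\log y}} \le y$. Everything else is a one-line invocation of \cref{fact:span}, so I do not anticipate any real obstacle.
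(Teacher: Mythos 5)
Your proof is correct and follows essentially the same route as the paper: apply \cref{fact:span} to both spans and use the fact that $\hspan{R_1}{R}$ is a row-prefix of $\hspan{R_2}{R}$ (so the horizontal period cannot decrease when rows are added) to contradict $\width(R_1)>\width(R_2)$. Your explicit check that the floor-log antichain forces strict inequalities on the actual dimensions is a nice touch, but the argument is the paper's.
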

\begin{proof}
Let $g(R_1)=(a_1,b_1)$ and $g(R_2)=(a_2,b_2)$.
Without loss of generality, we can assume that
$a_1 < a_2$ and
$b_1 > b_2$.
Suppose towards a contradiction that both $R_1$ and $R_2$ horizontally span~$R$.
By applying \cref{fact:span}, we obtain that
\begin{itemize}
\item $\width(R_1)$ is the horizontal period of $\hspan{R_1}{R}$ and
\item $\width(R_2)$ is the horizontal period of $\hspan{R_2}{R}$.
\end{itemize}
Note that, for any $k \in [1 \dd \height(R)]$, the horizontal period of the string comprised of the $k$ topmost rows of $R^{1,2}$ equals the least common multiple of the periods of those $k$ rows.
Hence, the period cannot decrease as we increase the number of considered rows.
We thus have $\width(R_2) \leq \width(R_1)$ since our assumption that $a_2>a_1$ implies that $\height(R_2) > \height(R_1)$.
This is a contradiction to our assumption that $b_1>b_2$, which implies that $\width(R_1)>\width(R_2)$.
\end{proof}

The above lemma, \cref{lem:unique}, and Dilworth's theorem together imply the following.

\begin{corollary}\label{cor:twochains}
All primitive 2D strings that span a primitive 2D string $R$ can be decomposed to two sets $H$ and $V$, such that
\begin{itemize}
\item the elements of $H$ span $R$ horizontally;
\item the elements of $V$ span $R$ vertically;
\item the restriction of $g$ to $H \cup V$ is an injective function;
\item each of the sets $g(H)$ and $g(V)$ is a chain.
\end{itemize}
\end{corollary}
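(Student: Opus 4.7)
The plan is to construct the partition $H \sqcup V$ directly. I would place in $H$ every string in the collection that horizontally spans $R$, and in $V$ every remaining string; each such remainder must then vertically span $R$, since by hypothesis every element of the collection spans $R$ in at least one direction. The first two bullet points hold by construction.

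For injectivity of $g$ on $H \cup V$, I would invoke \cref{lem:unique}. The corollary inherits the context of the surrounding discussion, namely the fixed position $(i,j)$, so the spanners under consideration all belong to $\assign(i,j)$; consequently no two distinct elements can share a $g$-value.

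To show that $g(H)$ is a chain, I would take any two distinct $R_1, R_2 \in H$. Both horizontally span $R$, so by \cref{lem:onechain} their images $g(R_1)$ and $g(R_2)$ cannot form an antichain, i.e., one dominates the other. Since injectivity forces the images to be distinct, any two elements of $g(H)$ are comparable under dominance, which is exactly the chain condition. The proof that $g(V)$ is a chain is entirely symmetric, using the vertical analog of \cref{lem:onechain}; that analog is obtained by transposing rows and columns throughout the proof of \cref{lem:onechain}, invoking the second half of \cref{fact:span} in place of its first.

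The only ingredient not already present in the excerpt is the vertical analog of \cref{lem:onechain}, whose verification is routine and therefore not the main obstacle. As an aside, Dilworth's theorem, flagged in the overview, provides an alternative route: the combined antichain bound from \cref{lem:onechain} and its vertical analog shows that any antichain in $g(H \cup V)$ contains at most one horizontal and at most one vertical spanner, hence has size at most $2$, so the poset decomposes into two chains. The direct construction above, however, already yields a decomposition labelled in accordance with the bullet points.
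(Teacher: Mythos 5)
Your proposal is correct and follows essentially the same route as the paper, which derives the corollary from \cref{lem:onechain} (and its symmetric vertical counterpart), \cref{lem:unique}, and Dilworth's theorem; your direct partition into horizontal and vertical spanners makes the Dilworth step optional, as you note, but the decomposition and the key lemmas are the same.
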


Finally, as mentioned in the introduction, we need the following purely geometric lemma that follows from the result of
Marcus and Tardos~\cite{DBLP:journals/jct/MarcusT04} on the number of 1s in an $m\times m$ binary matrix~$M$ that
avoids a fixed permutation $P$ as a submatrix.

\begin{lemma}\label{lem:geom}
Consider a positive integer $m$ and a set $\P \subseteq [1 \dd m]^2$.
If, for each $p \in \P$, the set
of points of $\P$ that are strictly dominated by $p$
can be partitioned into at most two chains,
then $|\P|=\cO(m)$.
\end{lemma}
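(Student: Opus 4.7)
The plan is to encode $\P$ as an $m \times m$ binary matrix $M$ with $M[a][b] = 1$ if and only if $(a,b) \in \P$, and then to invoke the Marcus and Tardos theorem on the particular $4 \times 4$ permutation pattern displayed in the introduction.

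First, I would translate the hypothesis using Dilworth's theorem: the set of points of $\P$ that are strictly dominated by a given $p$ admits a partition into at most two chains (under weak dominance) if and only if it contains no antichain of size~$3$. Hence the hypothesis is equivalent to the statement that $\P$ contains no $4$-tuple $(q_1, q_2, q_3, p)$ such that $\{q_1, q_2, q_3\}$ is a weak-dominance antichain and $p$ strictly dominates each of $q_1, q_2, q_3$.

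Next I would unpack the geometric shape of such a forbidden $4$-tuple. Any three points forming a weak-dominance antichain must have pairwise distinct first coordinates and pairwise distinct second coordinates (otherwise one of them would weakly dominate another), and ordering them by first coordinate $a_1 < a_2 < a_3$ forces $b_1 > b_2 > b_3$. Together with $p = (a_4, b_4)$ satisfying $a_4 > a_3$ and $b_4 > b_1$, the four points occupy four distinct rows and four distinct columns of $M$; restricted to rows $a_1, a_2, a_3, a_4$ and columns $b_3, b_2, b_1, b_4$ (each listed in increasing order), the induced $4 \times 4$ submatrix is precisely the permutation matrix displayed in the introduction, with $1$s at positions $(1,3)$, $(2,2)$, $(3,1)$, $(4,4)$. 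Consequently $M$ avoids this fixed $4 \times 4$ permutation matrix as a submatrix, and the Marcus and Tardos theorem immediately yields $|\P| = \sum_{a,b} M[a][b] \leq c \cdot m = \cO(m)$ for the constant $c$ associated with that pattern.

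The only subtlety I anticipate is the row/column-distinctness check needed to guarantee a genuine $4 \times 4$ submatrix; as noted, this follows automatically from the definition of weak dominance (for the antichain triple) and from the strict inequalities $a_4 > a_3$ and $b_4 > b_1$ (placing $p$ in a fresh row and column). Beyond that, the argument is a direct reduction to the extremal result of Marcus and Tardos, with no further calculations required.
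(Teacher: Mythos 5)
Your proposal is correct and follows essentially the same route as the paper: encode $\P$ as an $m\times m$ $0$/$1$ matrix, use Dilworth's theorem to turn the two-chain hypothesis into the avoidance of the displayed $4\times 4$ permutation matrix, and conclude via the Marcus--Tardos theorem. Your write-up simply spells out the row/column-distinctness and ordering details that the paper leaves implicit.
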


\begin{proof}
We think of $\P$ as an $m\times m$ matrix $M[1\dd m][1\dd m]$, where $M[a][b]=1$ when $(a,b)\in \P$ and
$M[a,b]=0$ otherwise. Next, we say that $M$ contains a matrix $P$ as a submatrix when $P$ can be obtained
from $M$ by removing rows, removing columns, and changing 1s into 0s. We claim that, by the assumptions
in the lemma, $M$ does not contain the following matrix $P$ as a submatrix:
\begin{center}
\begin{tabular}{ |c|c|c|c| } 
 \hline
 & &  1 & \\ 
 \hline
 & 1 &  & \\ 
 \hline
 1 & &  & \\ 
 \hline
 & &  & 1 \\ 
 \hline
\end{tabular}
\end{center}
To establish this, assume otherwise. Then, there exists $(a,b)\in \P$ and $(a_{1},b_{1}),(a_{2},b_{2}),(a_{3},b_{3})\in \P$
such that $(a,b)$ strictly dominates $(a_{1},b_{1}),(a_{2},b_{2}),(a_{3},b_{3})$ and further
$(a_{1},b_{1}),(a_{2},b_{2}),(a_{3},b_{3})$ create an antichain. By Dilworth's theorem, this implies that
the points in $\P$ dominated by $(a,b)$ cannot be partitioned into two chains, a contradiction.
Thus, $M$ indeed does not contain $P$ as a submatrix. Because $P$ is a permutation matrix,
this implies $|\P|=\cO(m)$.
\end{proof}

We now complete the proof of our main result with the aid of \cref{lem:geom}.

\begin{theorem}\label{thm:main}
An $n \times n$ 2D string has $\cO(n^2 \log n)$ distinct quartics.
\end{theorem}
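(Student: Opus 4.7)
The plan is to split the distinct quartics of $A$ into thin and thick and bound each class separately. The thin case is already handled: \cref{lem:nonthick} gives $\cO(n^2 \log n)$ distinct thin quartics. So the remaining task is to prove the same bound for thick quartics, which is where the geometric machinery built above the theorem will be applied.

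For the thick case, my plan is to chain together the reductions already assembled. First, partition the thick extreme quartic-fragments of $\BR$ by primitive root; \cref{lem:single_area} yields $\sum_R |\BR_R| \leq \sum_R |\OCR(R)|$, and identity~(\ref{eq:distr}) rewrites this as $\sum_{i,j} |\assign(i,j)|$. Since $A$ has $n^2$ positions, it suffices to show the pointwise bound $|\assign(i,j)| = \cO(\log n)$.

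Fix any position $(i,j)$, and let $\P = f(\assign(i,j)) \subseteq [1 \dd \floor{\log n}]^2$. By \cref{lem:unique}, $f$ is injective, so $|\assign(i,j)| = |\P|$, and it remains to show $|\P| = \cO(\log n)$. The plan is to apply \cref{lem:geom} with $m = \floor{\log n}$; to do so, I need to verify that for each $p = g(M) \in \P$, the points of $\P$ strictly dominated by $p$ admit a decomposition into at most two chains. For any such strictly dominated $p' = g(R)$ with $R \in \assign(i,j)$, \cref{lem:span} guarantees that $R$ spans $M$ either horizontally or vertically (the hypothesis that $M^{5,5}$ has an occurrence with anchor $(i,j)$ holds by the very definition of $\assign(i,j)$). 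Thus \cref{cor:twochains}, applied to the set of such $R$ viewed as primitive 2D strings that span $M$, partitions them into two groups whose $g$-images form two chains in $\P$ (transferred via the injectivity of $f$). Applying \cref{lem:geom} then yields $|\P| = \cO(\log n)$, which, combined with the previous reductions, gives $\cO(n^2 \log n)$ distinct thick quartics and hence the theorem.

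The bulk of the combinatorial work sits in the lemmas already proved, so I do not anticipate a real obstacle; the only point needing care is checking that the strict-domination hypothesis of \cref{lem:span} matches the strict-domination requirement in \cref{lem:geom}, and that the two-chain decomposition on the primitive-string side pushes through $f$ to a two-chain decomposition on $\P$, which is precisely where \cref{lem:unique} intervenes.
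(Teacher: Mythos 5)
Your proposal is correct and follows essentially the same route as the paper's own proof: thin quartics via \cref{lem:nonthick}, then the chain \cref{lem:single_area} $\to$ (\ref{eq:distr}) $\to$ pointwise bound $|\assign(i,j)|=\cO(\log n)$ via \cref{lem:unique}, \cref{lem:span}, \cref{cor:twochains}, and \cref{lem:geom}. The points you flag as needing care (strict domination matching, transferring the two-chain decomposition through $f$) are handled exactly as you describe, so nothing is missing.
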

\begin{proof}
The number of distinct quartics is at most $|\BR|$.
We then have
\begin{align*}
|\BR| & =\cO(n^2 \log n) + \sum_R |\BR_R| & \text{(\cref{lem:thick_period,lem:nonthick})} \\
    & = \cO(n^2 \log n) + \sum_R |\OCR(R)| & \text{(\cref{lem:single_area})} \\
    & = \cO(n^2 \log n) + \sum_{i=1}^{n}\sum_{j=1}^{n}|\assign(i,j)|. & \text{(\ref{eq:distr})}
\end{align*}
To conclude the proof, it remains to show that $|\assign(i,j)|=\cO(\log n)$ for all $(i,j)\in [1\dd n]^2$.
Let $m=\lfloor\log n\rfloor$, and
recall that $\P \subseteq [1 \dd m]^2$ was defined as the image of $f$, which in turn was the restriction of $g$ to
the domain $\assign(i,j)$.
By \cref{lem:unique}, we only need to show that $|\P|=\cO(m)$.
By \cref{lem:span} and \cref{cor:twochains}, for each $p\in \P$, the set of all points of $\P$
that are strictly dominated by $p$ can be partitioned into at most two chains.
Thus, by \cref{lem:geom} we conclude that indeed $|\P|=\cO(m)$, concluding the proof.
\end{proof}

\section{The Optimal Algorithm}
\label{sec:algo}

In this section, we consider an $n\times n$ 2D string $A$ over an ordered alphabet $\Sigma$ and describe our $\cO(n^2 \log n)$-time algorithm for computing distinct quartics in $A$.
After an $\cO(n^2 \log n)$-time preprocessing, that consists of sorting the characters and renaming them, we can assume that the entries of $A$ are in $[1 \dd n^{2}]$.

\subsection{Technical Overview}
\label{sec:over}
Our algorithm computes thin and thick quartics separately.
Here, we provide an overview of these computations.

\subsubsection*{Computation of Thin Quartics}

For thin quartics, our algorithm is quite similar to the combinatorial analysis.
For each position $(i,j)$, we compute an $\cO(\log n)$-size superset $\mathcal{C}$ of the canonical sets that have a non-empty intersection with $\BR(i,j)$.
We do this by relating extreme occurrences of quartics with occurrences of squares in metastrings obtained by viewing the columns of $A[i \dd i+2^a)[1 \dd n]$, where $a \in [1 \dd \floor{\log n}]$, as metacharacters.
These squares can be efficiently computed and give us a handle on the sought thin quartics.
Then, we compute the intersection of each canonical set in $\mathcal{C}$ with $\BR(i,j)$ in constant time
using known tools that allow us to efficiently operate on the metastrings. We do this by fixing $(a,b) \in [1 \dd \floor{\log n}]^2$ and computing all quartics with height in $[2^a \dd 2^{a+1})$ and width $[2^b \dd 2^{b+1})$ that occur at position $(i,j)$ of $A$, of the form $P^{2y,2}$, $P^{2y,4}$, $P^{2,x}$, and $P^{4,4x}$, for a primitive 2D string $P$ and $x,y\geq 1$.
A detailed description can be found in~\cref{ComputeThin}.

\subsubsection*{Computation of Thick Quartics}

For thick quartics, our algorithmic approach follows our combinatorial approach in a more relaxed sense.
The main technical challenge is to compute, for each position $(i,j)$, an $\cO(\log n)$-size set $\mathcal{R}$ of primitive 2D strings $R$, such that
$\assign(i,j) \subseteq \mathcal{R}$.
Then, those supersets can be postprocessed as in \cite{CRRWZ} in time linear in their total size to yield the sought distinct thick quartics.
The computation of $\mathcal{R}$ is split into two major steps which we outline next. A detailed description can be found in~\cref{ComputeThick}.

\paragraph*{Skyline Computation.}
First, we compute a set $\S$ of \emph{skyline primitive 2D strings} such that $S\in \mathcal{S}$ when
(a) $S^{5,5}$ has an occurrence anchored at position $(i,j)$, and (b) there is no other primitive 2D string $T$ with $g(T)\geq g(S)$ such that $T^{5,5}$ has an occurrence anchored at position $(i,j)$.
This part of the proof is quite technical:
it heavily relies on the analysis of periodicity for 1D (meta)strings and, roughly speaking, on the analysis of the evolution of the horizontal periodic structure of a 2D string as rows are appended to it.
We show that, for each $a \in [1 \dd \floor{\log n}]$, there is a single candidate $h \in [2^a \dd 2^{a+1})$ to be considered as the height of an element of $\mathcal{S}$.
Then, using runs in 1D metastrings whose origins in $A$ have sufficient overlap and bit-tricks, we can compute the widest 2D string $S$ with height $h$ such that an occurrence of $S^{5,5}$ is anchored at position $(i,j)$ in constant time (using batched computations), if one exists.

\paragraph*{Computation of Dominated 2D strings.}
This turned out to be the most challenging part of our approach.
For this exposition, let us treat $\Sigma^{*,*}$ as a partially ordered set, in the order of decreasing widths.
Let $\S=\{S_1, \ldots , S_\ell\}$ in accordance with this order.
To obtain $\mathcal{R}$ from $\S$, we need to add to it the 2D strings $R \in \assign(i,j) \setminus \S$. By the construction of $\S$ we know
that there exists $S\in \S$ such that $g(R) \leq g(S)$.

Our combinatorial analysis implies that each $R \in \assign(i,j)$ spans each element $S \in \mathcal{S}$ for which $g(S)$ stricly dominates $g(R)$ either vertically or horizontally.
It turns out that if $R$ spans all of these elements of $\S$ either vertically or horizontally, it is easy to compute it efficiently.
This is, unfortunately, not the case in general.
However, we observe that $R$ spans vertically (resp.~horizontally) a contiguous subset of $\S$.
We show that the problem boils down to computing the union, over all~$k$, of sets~$I_k$, where $I_k$ contains exactly those primitive 2D strings that span $S_{k-1}$ vertically and span~$S_k$ horizontally.
Crucially, we observe that due to a strong form of transitivity of the spanning property, the intersection of any two such $I_k$ and $I_{k'}$ consists of a number of the smallest elements of both (i.e., their longest common prefix if viewed as strings).
Hence, by computing the elements of $I_k$ from the largest to the smallest, we can stop whenever we encounter an element that has already been reported by this procedure.
This allows us to reduce the computation of $\mathcal{R}$ to the problem of efficiently computing sets $I_k$.
To this end, we prove that an $\cO(\log n)$-bits representation of the evolution of the periodic structure of certain fragments of $A$ as rows and columns are appended to them suffices for inferring $I_k$; we then use tabulation to infer it efficiently.

\subsection{Preprocessing}

For each $a \in [1 \dd \floor{\log n}]$, for each $i \in [1 \dd n-2^a]$, let us denote by $\hmeta{i}{a}$ the metastring obtained from $A[i \dd i+2^a)[1 \dd n]$ by viewing each column as a metacharacter; let us denote the collection of those strings by $\mathcal{H}$. 
Similarly, for each $b \in [1 \dd \floor{\log n}]$, for each $j \in [1 \dd n-2^b]$, let us denote by $\vmeta{j}{b}$ the metastring obtained from $A[1 \dd n][j \dd j+2^b)$ by viewing each row as a metacharacter; let us denote the collection of those strings by $\mathcal{V}$. See Figure~\ref{fig:H-V}.
Overall, we have $\cO(n\log n)$ metastrings, each of length $n$.
The metastrings in $\mathcal{H}$ and $\mathcal{V}$ can be computed in $\cO(n^2 \log n)$ time in total as follows.
It suffices to explain how to compute all strings of the form $\hmeta{i}{a}$ as the computation of the strings of the form $\vmeta{j}{b}$ is symmetric.
For any $i$, $\hmeta{i}{1}$ can be computed by radix-sorting two-tuples of characters and renaming them, while $\hmeta{i}{a}$ for $a>1$ can be similarly computed using two-tuples consisting of characters from $\hmeta{i}{a-1}$ and $\hmeta{i+2^{a-1}}{a-1}$.

\begin{figure}[htpb]
\centering
\includegraphics[scale=0.4]{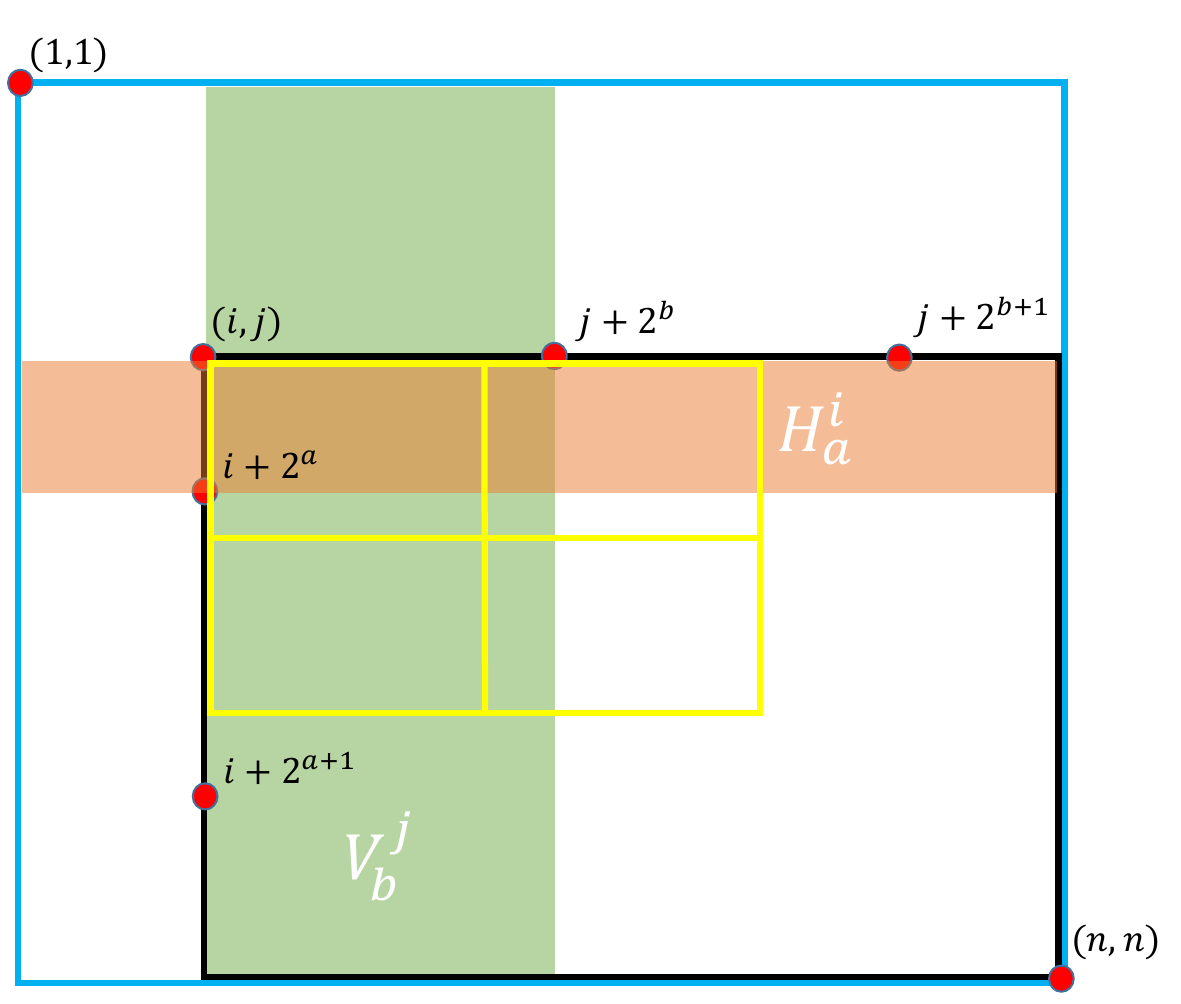}
\caption{$A$ is presented in blue. The black rectangle corresponds to $A[i\dd n][j\dd n]$. The orange rectangle corresponds to $\hmeta{i}{a}$. The green rectangle corresponds to $\vmeta{j}{b}$. A quartic in $\BR(i,j)\cap\can{a}{b}$ is shown in yellow.}\label{fig:H-V}
\end{figure}

All runs in a string of length $n$ over an alphabet of integers from $[n^{c}]$, for some constant $c$ (for short: polynomial alphabet),
can be computed in $\cO(n)$ time~\cite{KK:99,ellert_et_al:LIPIcs.ICALP.2021.63}.
We compute the runs in each of the $\cO(n \log n)$ length-$n$ strings in $\mathcal{H}$ and $\mathcal{V}$ in $\cO(n^2 \log n)$ time in total.

Let us now define the following primitive operations for fragments of a text $T$.
\begin{itemize}
\item An \emph{internal pattern matching} query takes as input two fragments $U$ and $V$ of a text, and returns all occurrences of $U$ in $V$ in the form of $|V|/|U|$ arithmetic progressions with difference $\per(U)$; such a query is denoted by $\IPM(U,V)$.
\item A \emph{2-period} query takes as input a substring $U$ of a text, checks if $U$ is periodic and, if so, it also returns $U$'s period.
\item A \emph{periodic extension query} takes as input a periodic fragment of a text and returns the (unique) run that has the same period and contains it~\cite{tomeksthesis}.
\end{itemize}

We preprocess each of the strings in $\mathcal{H}$ and $\mathcal{V}$ according to the following theorem, in $\cO(n^2 \log n)$ time in total.

\begin{theorem}[\cite{DBLP:conf/soda/KociumakaRRW15,tomeksthesis}]\label{thm:IPM}
A string $T$ of length $n$ over a polynomial alphabet can be preprocessed in $\cO(n)$ time so that queries of the form $\IPM(U,V)$, for any two fragments $U$ and $V$ of $T$, can be answered in $\cO(|V|/|U|)$ time, and 2-period queries and period extension queries can be answered in $\cO(1)$ time.
\end{theorem}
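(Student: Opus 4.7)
\smallskip

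\noindent\textbf{Proof proposal.} The plan is to combine three classical ingredients, all achievable in $\cO(n)$ time for polynomial alphabets: a linear-time suffix array with LCP, a linear-time computation of all runs in $T$, and constant-time range minimum queries on the LCP array.

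First, I would build the suffix array together with the LCP array using a linear-time construction for integer alphabets (e.g., the SA-IS or DC3 algorithm). Equipping the LCP array with an $\cO(n)$-time, $\cO(1)$-query RMQ data structure yields $\cO(1)$ longest common extension (LCE) queries between any two suffixes of $T$. In parallel, I would compute all runs of $T$ in $\cO(n)$ time using the linear-time runs algorithm, together with, for each run, its starting position, ending position, and period; the number of runs is $\cO(n)$ by the runs theorem.

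For IPM queries: given $U$ and $V$ with $|V| \leq 2|U|$, the periodicity lemma forces the occurrences of $U$ inside $V$ to form an arithmetic progression with common difference $\per(U)$, and one can identify this progression in $\cO(1)$ using $\cO(1)$ LCE queries to locate the first and last occurrences and using the run structure to identify the step. For general $V$, I would partition $V$ into $\cO(|V|/|U|)$ overlapping windows of length $2|U|$ (with overlap $|U|$) and process each window in $\cO(1)$ by the above, concatenating the resulting progressions. The total cost is $\cO(|V|/|U|)$.

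For $2$-period queries and period extension queries on a fragment $U = T[i\dd j]$: the key observation is that $U$ is periodic if and only if there is a run of $T$ that contains $[i\dd j]$ as a subinterval and whose period is at most $|U|/2$, in which case the period of $U$ equals the period of that (unique) run, and the extension of $U$ to a maximal periodic fragment is exactly that run. So both queries reduce to the following geometric problem: given $i$ and $j$, report a run $(l,r,p)$ with $l \leq i \leq j \leq r$ and $p \leq (j-i+1)/2$. I would build a constant-query data structure for this by stratifying runs by the length range $[2^k \dd 2^{k+1})$ and, for each row of the stratification, using orthogonal point-location structures (or suitable RMQ reductions on arrays indexed by $i$) that answer in $\cO(1)$ after $\cO(n)$ preprocessing, since the total number of runs across strata is $\cO(n)$. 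The lookup then returns a pointer to the containing run, from which the period and its extension are read off in $\cO(1)$.

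The main obstacle is the $\cO(1)$ $2$-period query: combining the runs structure with a constant-query geometric lookup while staying within $\cO(n)$ preprocessing needs some care. The cleanest route is to exploit the fact that, among runs containing a fixed position $i$, the ones with period at most $\ell/2$ for a given fragment length $\ell$ form a highly structured family (their starting and ending points are monotone in $\ell$ under a suitable ordering), which allows replacing the orthogonal range search with one-dimensional RMQ on arrays derived from the run list, and thereby achieving the claimed bounds.
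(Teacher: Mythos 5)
The paper does not prove this statement at all: it is imported as a black box from Kociumaka, Radoszewski, Rytter and Wale\'n (SODA 2015) and Kociumaka's thesis, so the comparison here is against those works, where the data structure is considerably more involved than what you sketch.

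The genuine gap is in your treatment of the IPM query itself, which is the heart of the theorem. For a window $V$ with $|V|\leq 2|U|$ you claim one can ``locate the first and last occurrences'' of $U$ in $V$ with $\cO(1)$ LCE queries and then read off the step from the runs structure. LCE queries (suffix array + LCP + RMQ) only let you \emph{verify} a given candidate starting position in $\cO(1)$; they give no way to \emph{find} an occurrence among the $\Theta(|U|)$ possible starting positions inside the window without scanning, and the runs structure only helps once $U$ is known to be periodic and an occurrence is already in hand. Getting this search down to $\cO(1)$ time after only $\cO(n)$ preprocessing is exactly the difficulty that the cited works overcome, using a nontrivial hierarchical sampling/factorization of the text (storing, per level, occurrences of selected sample fragments so that a query reduces to a constant amount of table lookup plus verification); nothing in your toolkit of suffix array, LCP, RMQ and runs substitutes for that. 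A suffix-array range for $U$ restricted to text positions in $V$ is an orthogonal range query, which also cannot be answered in $\cO(1)$ within linear preprocessing by standard means, so that escape route does not work either. Your windowing of a general $V$ into $\cO(|V|/|U|)$ overlapping windows is fine, but it rests entirely on the missing $\cO(1)$ primitive.

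The second, smaller gap is the $\cO(1)$ 2-period and periodic-extension queries. Your characterization via the unique containing run with period at most half the length is correct, but the stabbing structure you propose is only gestured at: the natural stratification of runs by period range gives $\cO(n\log n)$ preprocessing (each position can lie in one run per period scale), and you acknowledge without resolving how to compress this to $\cO(n)$ with $\cO(1)$ lookups. In the cited works, 2-period queries are in fact derived from the IPM machinery itself (testing whether a half-length prefix reoccurs within the fragment), and periodic extension uses a separate carefully built linear-size structure; so this part, too, ultimately leans on the component your proposal leaves unproved.
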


We also apply the following lemma to each of the strings in $\mathcal{H}$ and~$\mathcal{V}$ in $\cO(n^2 \log n)$ time in total.

\begin{lemma}\label{lem:long_sq}
Given a string $T$ of length $n$ over a polynomial alphabet, we can compute the length of the longest square that occurs at each position of $T$ in $\cO(n)$ time.
In addition, a query of the form $(i,x)$, asking for the primitive squares that occur at some position $i$ and have length in $[2^x \dd 2^{x+1})$ can be answered in $\cO(1)$ time after an $\cO(n)$-time preprocessing.
\end{lemma}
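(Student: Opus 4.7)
The plan is to leverage the $\cO(n)$-time algorithm for computing all runs of a string over a polynomial alphabet~\cite{ellert_et_al:LIPIcs.ICALP.2021.63}, together with the constant-time 2-period and periodic-extension queries afforded by \cref{thm:IPM}.

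For the first part, I would use that a square of length $2k$ starts at position $i$ if and only if $T[i\dd i+2k-1]$ has period $k$, which in turn holds if and only if there is a run $(\ell,r,p)$ with $p\mid k$, $\ell\le i$, and $i+2k-1\le r$; hence the longest square at $i$ equals $\max_R 2p_R\lfloor (r_R-i+1)/(2p_R)\rfloor$, the maximum taken over runs $R$ with $\ell_R\le i\le r_R-2p_R+1$. For a fixed run, this quantity, viewed as a function of $i$, is a non-increasing step function with $\lfloor L_R/(2p_R)\rfloor$ constant-value pieces; summed over all runs, the total number of pieces is $\cO(n)$ thanks to the known linear bound on the sum of exponents of the runs of a string. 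Computing the pointwise maximum of $\cO(n)$ such (interval, value) pairs is a standard offline task solvable in $\cO(n)$ time, for instance by processing the pairs in decreasing order of value and using a union-find structure to skip positions that are already resolved.

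For the second part, I would assign each run $R$ the index $x_R := \lfloor\log(2p_R)\rfloor$, so that $R$ witnesses primitive squares of length $2p_R\in[2^{x_R},2^{x_R+1})$ at positions in $[\ell_R,\,r_R-2p_R+1]$. By \cref{lem:two}, for each $(i,x)$ at most two such primitive squares can start at $i$, and the query task reduces to identifying which (up to two) runs of index $x_R=x$ make $i$ a primitive-square position. To answer a query in $\cO(1)$, I would characterise the candidates through a constant number of 2-period and periodic-extension queries: the smaller candidate, of length exactly $2^x$, exists if and only if $T[i\dd i+2^x-1]$ has minimum period $2^{x-1}$, detectable via a single 2-period query; the larger candidate, of length in $(2^x,2^{x+1})$, if present, is of the form $T[i\dd i+2q-1]$, where $q$ is the minimum period of a suitably chosen fragment of $T$ starting at $i$ and of length at most $2^{x+1}-1$, recoverable by a constant number of periodic-extension queries.

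The main obstacle will be in formally justifying the characterisation of the larger candidate, namely, pinning down a constant-size family of fragments whose minimum periods are guaranteed to expose the second primitive square whenever it exists. This reduces to a case analysis that combines \cref{lem:FW} with the primitivity of the roots of the candidate squares, and it incurs no additional asymptotic cost since every involved query runs in $\cO(1)$ time after the $\cO(n)$-time preprocessing of \cref{thm:IPM}.
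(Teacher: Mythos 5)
Your first part is correct and takes a genuinely different route from the paper. You reduce the longest square at $i$ to $\max_R 2p_R\lfloor (r_R-i+1)/(2p_R)\rfloor$ over runs $R$ covering $[i\dd i+2p_R)$, bound the total number of constant-value pieces by the sum of exponents of runs, and take a pointwise maximum offline; this is sound (the decreasing-value sweep with interval union--find is a standard linear-time step). The paper instead computes all distinct squares in $\cO(n)$ time, makes their loci explicit in the suffix tree, marks them as primitive or not via 2-period queries, and reads off the longest square at $i$ from a precomputed nearest-marked-ancestor of the leaf for $T[i\dd n]$; both routes work for this half.

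The second part, however, has a genuine gap, and it sits exactly where you flag "the main obstacle". Your dichotomy --- at most one candidate of length exactly $2^x$ plus at most one candidate of length in $(2^x\dd 2^{x+1})$ --- is false: the two primitive squares permitted by \cref{lem:two} can both have lengths strictly inside $(2^x\dd 2^{x+1})$. For instance, $U=aba$ and $V=abaab$ are primitive, $U^2$ is a prefix of $V^2$, and $|V|<2|U|$; applying a letter-to-block substitution with blocks of length $3$ (over an enlarged alphabet) gives primitive roots of lengths $9$ and $15$, i.e., two primitive squares of lengths $18$ and $30$, both in $[2^4\dd 2^5)$, starting at the same position. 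So your query procedure must recover two distinct root lengths, both exceeding $2^{x-1}$, and the proposed characterisation cannot do this: a 2-period query reveals the minimum period only when it is at most half the queried length, so exposing a root $p\in(2^{x-1}\dd 2^x)$ requires querying a fragment of length at least $2p$, which you do not know in advance (there are $2^{x-1}$ candidates), and the minimum period of any fixed longer fragment may be attained by a different run with a smaller period and thus need not equal the sought $p$. Periodic-extension queries do not help either, since they presuppose knowledge of the period to extend. Without additional precomputed per-position, per-level information (which naively costs $\Theta(n\log n)$ preprocessing, violating the $\cO(n)$ bound), a constant-size family of fragments is not guaranteed to expose both roots. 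The paper sidesteps this entirely: after marking the loci of primitive squares in the suffix tree, a query $(i,x)$ is answered by one weighted-ancestor query locating the node of $T[i\dd i+2^{x+1}-1)$ and then reading its at most two nearest blue-marked ancestors, which are precomputed for all nodes in $\cO(n)$ time.
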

\begin{proof}
We first compute all distinct squares in $T$ in $\cO(n)$ time.~\cite{BannaiIK17,CrochemoreIKRRW14}.
Then, we build the suffix tree of $T$ in $\cO(n)$ time~\cite{DBLP:conf/focs/Farach97}, with nodes weighted by string-depth.
We say that a node $v$ is the \emph{weighted ancestor} of a node $u$ at depth $\ell$ if $v$ is the highest ancestor of $u$ with weight of at least $\ell$.
Belazzougui et al.~\cite{DBLP:conf/cpm/BelazzouguiKPR21} have shown that the suffix tree of a length-$n$ string can be preprocessed in $\cO(n)$ time so that weighted ancestor queries (with weights equal to string-depths) can be answered in $\cO(1)$ time.
After this preprocessing, we compute the locus of each square (in decreasing order with respect to length) in the suffix tree of $T$ in $\cO(1)$ time in total and make the corresponding node explicit. This takes $\cO(n)$ time in total.
Additionally, we mark the nodes corresponding to primitive squares with blue and the nodes corresponding to non-primitive squares with red---we can distinguish between the two cases in $\cO(1)$ time using a 2-period query (see \cref{thm:IPM}).

Then, the length of the longest square that occurs at a position $i$ is equal to the string-depth of the nearest ancestor of the node with path-label $T[i \dd n]$ that is marked (either red or blue).
Similarly, the lengths of the primitive squares that occur at some position $i$ and have length in $[2^x \dd 2^{x+1})$ can be computed in constant time given the string-depths of the at most two nearest ancestors of the node $v$ with path-label $T[i \dd i+2^{x+1}-1)$ that are marked with blue; node $v$ can be computed in $\cO(1)$ time with a weighted ancestor query.
The nearest blue or red marked node for each node in a tree can be precomputed in $\cO(n)$ time. This concludes the proof.
\end{proof}

\subsection{Computation of Thin Quartics}
\label{ComputeThin}
\subsubsection{Computation of $\cO(\log n)$ Canonical Sets}

The following observation and fact are counterparts of \cref{lem:aspect_ratio} that are easier to exploit computationally.

\begin{observation}\label{obs:sq}
An occurrence of a quartic $Q$ with height in $[2^a \dd 2^{a+1})$, where $a \in \mathbb{Z}_{+}$,
at a position $(i,j)$ of $A$ implies an occurrence of a square of length $\width(Q)$ at position $j$ in string $\hmeta{i}{a}$.
\end{observation}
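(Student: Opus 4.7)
The plan is to unfold the definitions and observe that the horizontal repetition $W^{1,2}$ sitting in the top half of $Q$ translates directly into a square in the metastring $\hmeta{i}{a}$; no deep combinatorial input is needed.

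First I would write $Q=W^{2,2}$ for the 2D block $W$ whose primitive/non-primitive status is irrelevant here. From $\height(Q)=2\cdot\height(W)\in[2^{a}\dd 2^{a+1})$, I get $\height(Q)\ge 2^{a}$, so the $2^{a}$ rows $A[i\dd i+2^{a})$ used to build $\hmeta{i}{a}$ are entirely contained in the vertical extent of the occurrence of $Q$.

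Second, the assumed occurrence of $Q=W^{2,2}$ at $(i,j)$ means that
\[
A[r][c]=A[r][c+\width(W)] \quad \text{for all } r\in[i\dd i+\height(Q)),\ c\in[j\dd j+\width(W)).
\]
Restricting the row range to $[i\dd i+2^{a})\subseteq[i\dd i+\height(Q))$, the column-fragments $A[i\dd i+2^{a})[c]$ and $A[i\dd i+2^{a})[c+\width(W)]$ coincide for every $c\in[j\dd j+\width(W))$. By the very definition of $\hmeta{i}{a}$, equality of these column-fragments is equality of the corresponding metacharacters, so $\hmeta{i}{a}[c]=\hmeta{i}{a}[c+\width(W)]$ in that range.

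Finally, this says that $\hmeta{i}{a}[j\dd j+2\cdot\width(W))$ has period $\width(W)$ and length exactly $2\cdot\width(W)=\width(Q)$, hence is a square of length $\width(Q)$ occurring at position $j$ in $\hmeta{i}{a}$, as claimed. There is no real obstacle: the statement is a direct translation of the horizontal repetition inside $Q$ into metastring language, and the only condition that needs to be invoked is $\height(Q)\ge 2^{a}$, which is guaranteed by the interval assumption on the height.
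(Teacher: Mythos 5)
Your proof is correct and is exactly the intended argument: the paper states this as an observation without proof, and your unfolding of the definitions (the left half of $Q$ equals its right half, restricted to the top $2^a$ rows, which become metacharacters of $\hmeta{i}{a}$) is the straightforward reasoning the authors had in mind.
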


\begin{fact}\label{fct:sq}
An occurrence of a square with length in $[2^b \dd 2^{b+1})$, where $b \in \mathbb{Z}_{+}$, at some position~$j$ in string $\hmeta{i}{a}$, where $a, i \in \mathbb{Z}_{+}$, implies that no quartic with height in $[2 \dd 2^{a-1})$ and width in $[2 \dd 2^{b-1})$ can have an extreme occurrence at position $(i,j)$ of $A$.
\end{fact}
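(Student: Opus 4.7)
The plan is to translate the hypothesised square in $\hmeta{i}{a}$ into a horizontal periodicity statement about $A$, and then observe that any quartic satisfying the stated size bounds must reappear after a rightward shift, contradicting extremity.

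First, I would unpack the definition of $\hmeta{i}{a}$: since the $j'$-th metacharacter of $\hmeta{i}{a}$ is the column $A[i\dd i+2^{a})[j']$, an occurrence of a square of length $\ell\in[2^b\dd 2^{b+1})$ at position $j$ of $\hmeta{i}{a}$ is exactly the statement that $\ell/2$ is a horizontal period of the fragment $A[i\dd i+2^{a})[j\dd j+\ell)$ of $A$, i.e., $A[i'][j']=A[i'][j'+\ell/2]$ for all $i'\in[i\dd i+2^{a})$ and all $j'\in[j\dd j+\ell/2)$.

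Then I would assume, for contradiction, that some quartic $Q$ with $\height(Q)<2^{a-1}$ and $\width(Q)<2^{b-1}$ has an extreme occurrence at $(i,j)$. The size bounds immediately give $\height(Q)<2^{a}$ and $\width(Q)<2^{b-1}\leq \ell/2$, so the occurrence $A[i\dd i+\height(Q))[j\dd j+\width(Q))=Q$ lies entirely inside $A[i\dd i+2^{a})[j\dd j+\ell/2)$, that is, within a single horizontal period of the above fragment. Applying the period cell by cell then yields $A[i\dd i+\height(Q))[j+\ell/2\dd j+\ell/2+\width(Q))=Q$ as well; since $\ell/2\geq 2^{b-1}\geq 1$, the position $(i,j+\ell/2)$ satisfies $i\geq i$ and $j+\ell/2>j$, contradicting the assumption that the occurrence of $Q$ at $(i,j)$ is extreme.

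I do not anticipate any genuine obstacle: the argument is essentially a one-step shift once the definition of $\hmeta{i}{a}$ has been unwound. The only point requiring some care is why the stated width bound on $Q$ is $2^{b-1}$ rather than $2^{b}$, and likewise for the height: this is precisely what guarantees $\width(Q)<\ell/2$ (and $\height(Q)<2^{a}$), so that $Q$ fits strictly inside one horizontal period and the shift by $\ell/2$ stays within the region on which the period acts.
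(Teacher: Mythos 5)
Your proof is correct and follows essentially the same route as the paper's: both translate the square $UU$ at position $j$ of $\hmeta{i}{a}$ into horizontal periodicity of the corresponding fragment of $A$ and conclude that any quartic of the stated dimensions occurring at $(i,j)$ also occurs at $(i,j+|U|)$, contradicting extremity. Your write-up merely spells out the period-shift argument that the paper states in one line.
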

\begin{proof}
Let us denote the specified square by $UU$.
The claim follows by the observation that any quartic with height in $[2 \dd 2^{a-1})$ and width in $[2 \dd 2^{b-1})$ that occurs at position $(i,j)$ of $A$
must also have an occurrence at position $(i,j+|U|)$.
\end{proof}

We next combine \cref{obs:sq,fct:sq} to efficiently compute a small superset of the canonical sets that have a non-empty intersection with $\BR(i,j)$ for a fixed position $(i,j)$.

\begin{lemma}
\label{lem:canonicalsets}
After an $\cO(n^2 \log n)$-time preprocessing of $A$, we can compute, for any position $(i,j)$ of $A$, an $\cO(\log n)$-size superset of the canonical sets that have a non-empty intersection with $\BR(i,j)$ in $\cO(\log n)$ time.
\end{lemma}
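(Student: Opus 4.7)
The plan is to base the algorithm on the longest-square data for the metastrings in $\mathcal{H}$ and exploit the aspect-ratio partition of \cref{lem:aspect_ratio}. Since \cref{lem:long_sq} has already been applied to every metastring in $\mathcal{H}$ as part of the $\cO(n^2 \log n)$ preprocessing, for any $(i,j)$ and any $a \in [1 \dd \floor{\log n}]$, the length $L_a$ of the longest square at position $j$ of $\hmeta{i}{a}$ is retrievable in $\cO(1)$ time. Given a query $(i,j)$, I first compute all $\floor{\log n}$ values of $L_a$ in $\cO(\log n)$ total time.

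Set $\ell_a := \lfloor \log L_a \rfloor$ (with the convention $\ell_a := -\infty$ when $L_a = 0$) and, for each aspect ratio $d$ for which the set below is non-empty, define
\[
a^\dagger_d := \max\bigl\{a \in [1 \dd \floor{\log n}] : \ell_a \geq a - d\bigr\}.
\]
The output is $\mathcal{C} := \bigcup_d \{(a^\dagger_d, a^\dagger_d - d),\ (a^\dagger_d - 1, a^\dagger_d - 1 - d)\}$, with out-of-range coordinate pairs discarded; since there are $\cO(\log n)$ aspect ratios, $|\mathcal{C}| = \cO(\log n)$.

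The central verification is that $\mathcal{C}$ is a superset of the canonical sets with non-empty intersection with $\BR(i,j)$. Fix such a $\can{a}{b}$ and set $d := a - b$. \cref{obs:sq} yields a square of length in $[2^b \dd 2^{b+1})$ at position $j$ of $\hmeta{i}{a}$, whence $\ell_a \geq b = a - d$ and so $a^\dagger_d \geq a$. For the reverse bound, suppose $a^\dagger_d \geq a + 2$: then there is a square at position $j$ of $\hmeta{i}{a^\dagger_d}$ whose length lies in some range $[2^{b'} \dd 2^{b'+1})$ with $b' \geq a^\dagger_d - d \geq b + 2$, and \cref{fct:sq} applied to it rules out any extreme quartic of $\can{a}{b}$ at $(i,j)$ (the relevant inclusions $[2^a \dd 2^{a+1}) \subseteq [2 \dd 2^{a^\dagger_d - 1})$ and $[2^b \dd 2^{b+1}) \subseteq [2 \dd 2^{b' - 1})$ hold precisely under the assumptions $a^\dagger_d \geq a+2$ and $b' \geq b+2$), a contradiction. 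Thus $a \in \{a^\dagger_d - 1, a^\dagger_d\}$ and $(a,b) \in \mathcal{C}$.

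Finally, I need to compute the $a^\dagger_d$'s within the $\cO(\log n)$ time budget. Setting $\delta_a := a - \ell_a$, the defining condition rewrites as $\delta_a \leq d$. I process $a$ from $\floor{\log n}$ downward, maintaining the running minimum $\delta^{\min}$ of $\{\delta_{a'} : a' > a\}$; whenever $\delta_a < \delta^{\min}$, the aspect ratios $d \in [\delta_a,\, \delta^{\min} - 1]$ are exactly those attaining $a^\dagger_d = a$, so I record them. Since each of the $\cO(\log n)$ aspect ratios is assigned at most once, the total work is $\cO(\log n)$. The main obstacle throughout is striking the right balance between including enough pairs to cover every contributing canonical set and keeping $|\mathcal{C}| = \cO(\log n)$; the combination of \cref{obs:sq} as a necessary condition from below and \cref{fct:sq} as a ruling-out tool from above is precisely what enables this balance.
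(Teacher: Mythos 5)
Your proof is correct and takes essentially the same approach as the paper: the same $\cO(n^2\log n)$ preprocessing via \cref{lem:long_sq} on the metastrings in $\mathcal{H}$, with \cref{obs:sq} forcing $\ell_a \geq b$ and \cref{fct:sq} ruling out canonical sets lying too far below the attainable squares. Your per-aspect-ratio bookkeeping (keeping $a^\dagger_d$ and $a^\dagger_d-1$ for each $d$, computed by a downward sweep) is just a slightly tighter repackaging of the paper's staircase set $Z'$ shifted by offsets in $\{0,1,2\}^2$.
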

\begin{proof}
Let us fix a position $(i,j)$.
For each $a \in [1 \dd \floor{\log n}]$, let $\ell_a$ be the length of the longest square at position $j$ in $\hmeta{i}{a}$; all of these lengths can be retrieved in $\cO(\log n)$ time after an $\cO(n^2 \log n)$-time preprocessing due to \cref{lem:long_sq}.
The sequence $\ell_1, \ldots, \ell_{\floor{\log n}}$ is non-increasing since, for each square in $\hmeta{i}{x}$, where $x \in (1 \dd \floor{\log n}]$, there is a square of the same length at the same position in $\hmeta{i}{y}$ for each $y \in [1 \dd x)$.
Now, let $\alpha$ and $\beta$ be integers such that the canonical set $\can{\alpha}{\beta}$ has a non-empty intersection with $\BR(i,j)$.
By \cref{obs:sq}, $(\alpha,\beta)$ must by dominated by some point in $Z = \{(a, \floor{\log \ell_a}): a \in [1 \dd \floor{\log n}]\}$.
Let $Z'$ be the union of $Z$ with the set of all points that are weakly dominated by some point in $Z$, but are not strongly dominated by any point in $Z$; the elements of $Z'$ form a staircase of size $\cO(\log n)$.
Then, by \cref{fct:sq}, $(\alpha,\beta)$ must be in the set $\{(a-x,b-y) : (a,b) \in Z' \text{ and } x,y \in \{0,1,2\}\}$, which is of size $\cO(\log n)$ and can be naively computed in $\cO(\log n)$ time given $\ell_1, \ldots, \ell_{\floor{\log n}}$. See Figure~\ref{fig:lemma25}.
\end{proof}

\begin{figure}[htpb]
\centering
\includegraphics[scale=0.3]{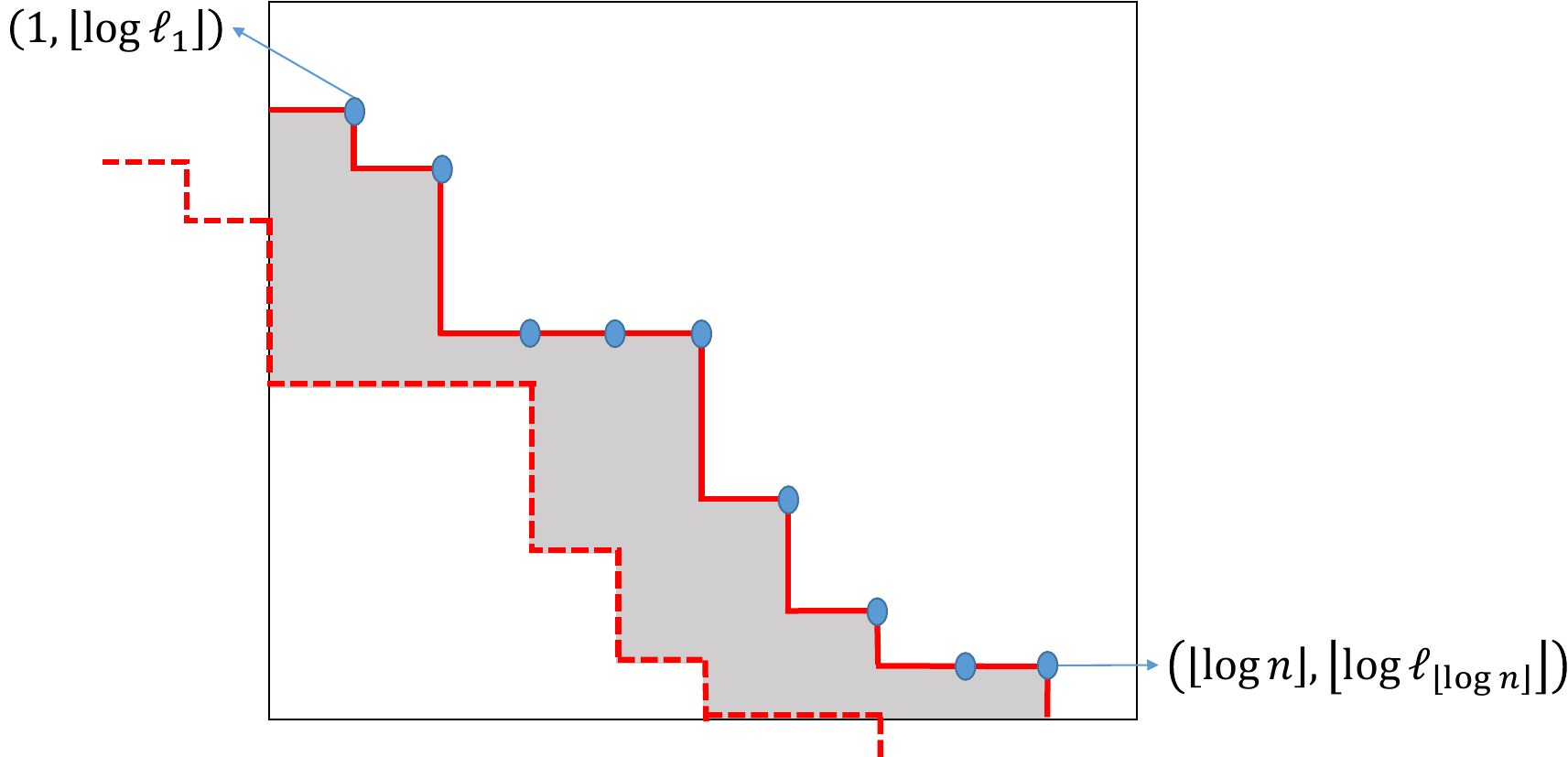}
\caption{The elements of $Z$ are the blue points. The elements of $Z'$ are the points on the solid red line. The dashed red line corresponds to the set of points $\{(a-2,b-2) : (a,b) \in Z'\}$. The set $\{(a-x,b-y) : (a,b) \in Z' \text{ and } x,y \in \{0,1,2\}\}$ is the grey area.}\label{fig:lemma25}
\end{figure}
\subsubsection{Thin Quartics from a Canonical Set}
Fix $(a,b) \in [1 \dd \floor{\log n}]^2$ such that $(a,b)$ corresponds to a canonical set that was computed earlier by~\cref{lem:canonicalsets}.
We wish to find all thin quartics with height in $[2^a \dd 2^{a+1})$ and width $[2^b \dd 2^{b+1})$ that occur at position $(i,j)$ of $A$, recall that we have at most 10 such quartics by Lemma~\ref{lem:thick_period}.
That is, those quartics $Q$, such that $g(Q)=(a,b)$ and $Q$ equals one of $P^{2y,2}$, $P^{2y,4}$, $P^{2,x}$, and $P^{4,4x}$, for a primitive 2D string $P$ and $x,y\geq 1$ (so $P^{2,2}$ will be counted twice).
Below, we show how to compute all quartics that fall in the first two cases, i.e., they are of the form $P^{2y,2}$, $P^{2y,4}$; the remaining ones can be computed symmetrically.

\paragraph{Computation of Three Candidate Widths.}

Let us set $S:=\hmeta{i}{a}$.
\begin{itemize}
\item Similarly to \cref{obs:sq}, an occurrence of a quartic $Q\in \can{a}{b}$ of the form $P^{2y,2}$ at position $(i,j)$, where $P$ is a primitive 2D string, implies an occurrence of a primitive square of length $\width(Q)$ at position $j$ of $S$. At most two primitive squares with length in $[2^b \dd 2^{b+1})$ start at any position $j$ of $S$, cf.~\cref{lem:two}.
\item Analogously, an occurrence of a quartic $Q\in \can{a}{b}$ of the form $P^{2y,4}$ at position $(i,j)$, where $P$ is a primitive 2D string, implies an occurrence of a string $U^4$, where $U$ is a primitive string, at position $j$ of $S$.
There is at most one primitive string $U$ such that $U^4$ occurs at position $j$ of~$S$ and $4\cdot |U| \in [2^b \dd 2^{b+1})$, due to the periodicity lemma (\cref{lem:FW}), since $|U|<2^{b-1}$ is a period of $S[j \dd j+2^b)$.
\end{itemize}

The above analysis implies that there are three candidate widths for quartics in $\BR(i,j) \cap \can{a}{b}$. We next show how to compute them efficiently.
The primitive squares that occur at position $j$ of $S$ and have length in $[2^b \dd 2^{b+1})$ are computed in $\cO(1)$ time due to  \cref{lem:long_sq}.
Additionally, we compute a primitive string~$U$ such that $U^4$ occurs at position $j$ of $S$ and $4\cdot |U| \in [2^b \dd 2^{b+1})$, if one exists, as follows.
We first perform, in $\cO(1)$ time, a 2-period query for $S[j \dd j+2^b)$. If this fragment is not periodic there is no such $U$. Otherwise, we have a candidate length for $U$. We check if $U^4$ occurs at position~$j$ of $S$ by performing a periodic extension query for $S[j \dd j+2^b)$.
Thus, in $\cO(n^2 \log n)$ time in total, we obtain three candidate widths for each of the $\cO(n^2 \log n)$ considered intersections of some $\BR(i,j)$ and some canonical set $\can{a}{b}$.

\paragraph{Processing each Candidate Width.}
We next show how to process one of the three candidate widths, say $w$, in time $\cO(1)$.
Observe that $G:=A[i \dd i + 2^{a-1})[j \dd j + 2^{b})$ has to be contained in the top half of the sought quartic(s).
We start by computing $G$'s occurrences in $A[i \dd i + 2^{a+1})[j \dd j + 2^{b})$.
This can be done in $\cO(1)$ time by employing an IPM query for fragments of $\vmeta{j}{b}$.
It would then be enough to consider, for each occurrence $A[i+h/2 \dd i + h/2 + 2^{a-1})[j \dd j + 2^{b})$ of $G$ the integer $h$ as a candidate height for the sought quartic(s).
Given two integers $w$ and $h$, we can check whether there is a quartic with height $h$ and width $w$ at position $(i,j)$ in $\cO(1)$ time by checking whether the following equalities hold:
    \begin{itemize}
        \item $A[i \dd i + h/2)[j \dd j + 2^{b})=A[i+h/2 \dd i + h)[j \dd j + 2^{b})$;
        \item $A[i \dd i + h/2)[j+w-2^b \dd j + w)=A[i+h/2 \dd i + h)[j+w-2^b \dd j + w)$;
        \item $A[i \dd i + 2^a)[j \dd j + w/2)=A[i \dd i + 2^a)[j+w/2 \dd j + w)$;
        \item $A[i+h-2^a \dd i + h)[j \dd j + w/2)=A[i+h-2^a \dd i + h)[j+w/2 \dd j + w)$.
    \end{itemize}
The latter can be done in $\cO(1)$ time using $4$ IPM queries, one for each of the fragments: $\vmeta{j}{b}$, $\vmeta{j+w-2^b}{b}$, $\hmeta{i}{a}$, and $\hmeta{i+h-2^a}{a}$. Note that, if the first two equalities hold then the two left quarters of the sought quartic are equal and the two right quarters of the sought quartic are equal, see Figure~\ref{fig:check_quartic1}. Moreover, if the last two equalities hold then the two upper quarters of the sought quartic are equal and the two lower quarters of the sought quartic are equal, see Figure~\ref{fig:check_quartic2}.

\begin{figure}[t!]
\begin{center}
\begin{subfigure}[b]{0.99\textwidth}
\centering
\includegraphics[height=6cm]{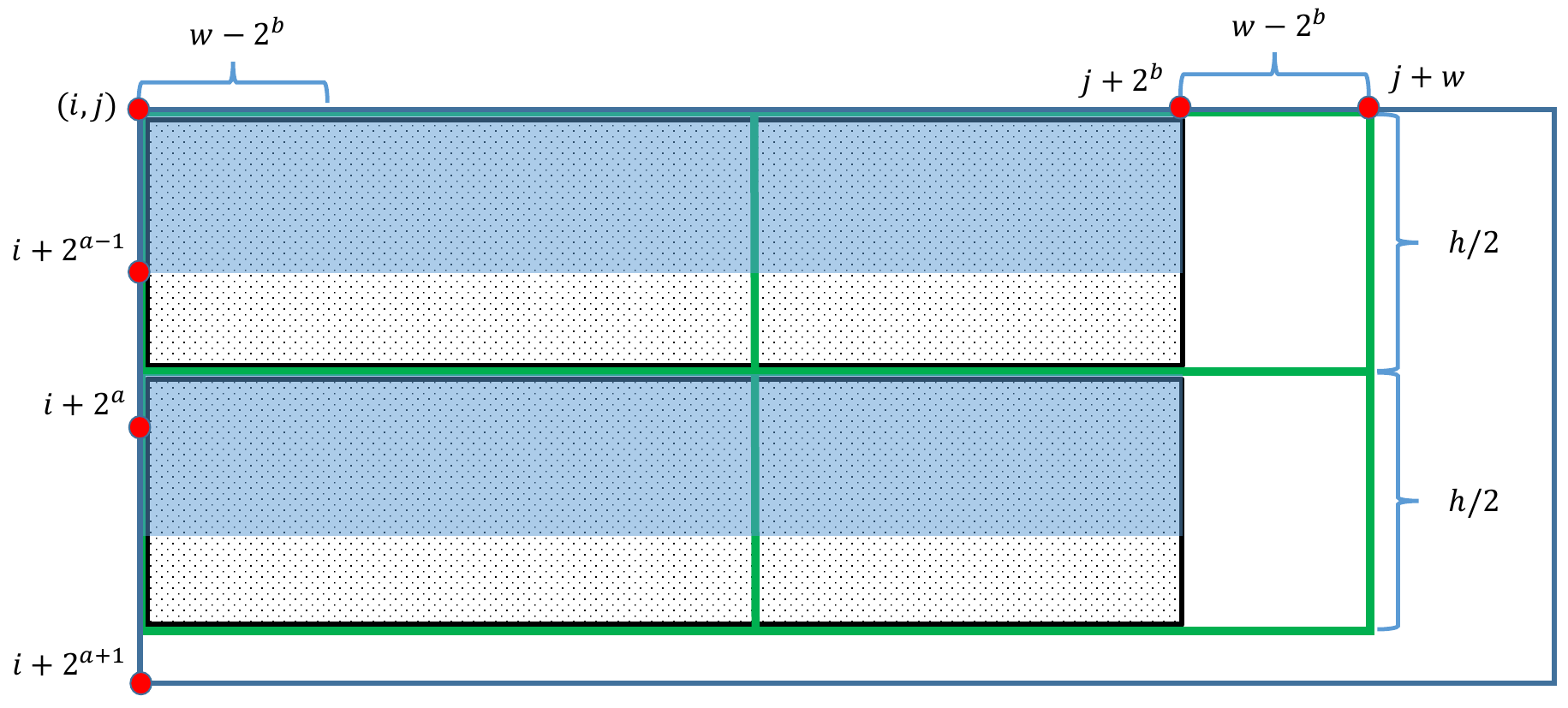}
\caption{The black textured rectangles correspond to $A[i \dd i + h/2)[j \dd j + 2^{b})$ and $A[i+h/2 \dd i + h)[j \dd j + 2^{b})$.}
\end{subfigure}
  \begin{subfigure}[b]{0.99\textwidth}
  \centering
\includegraphics[height=6cm]{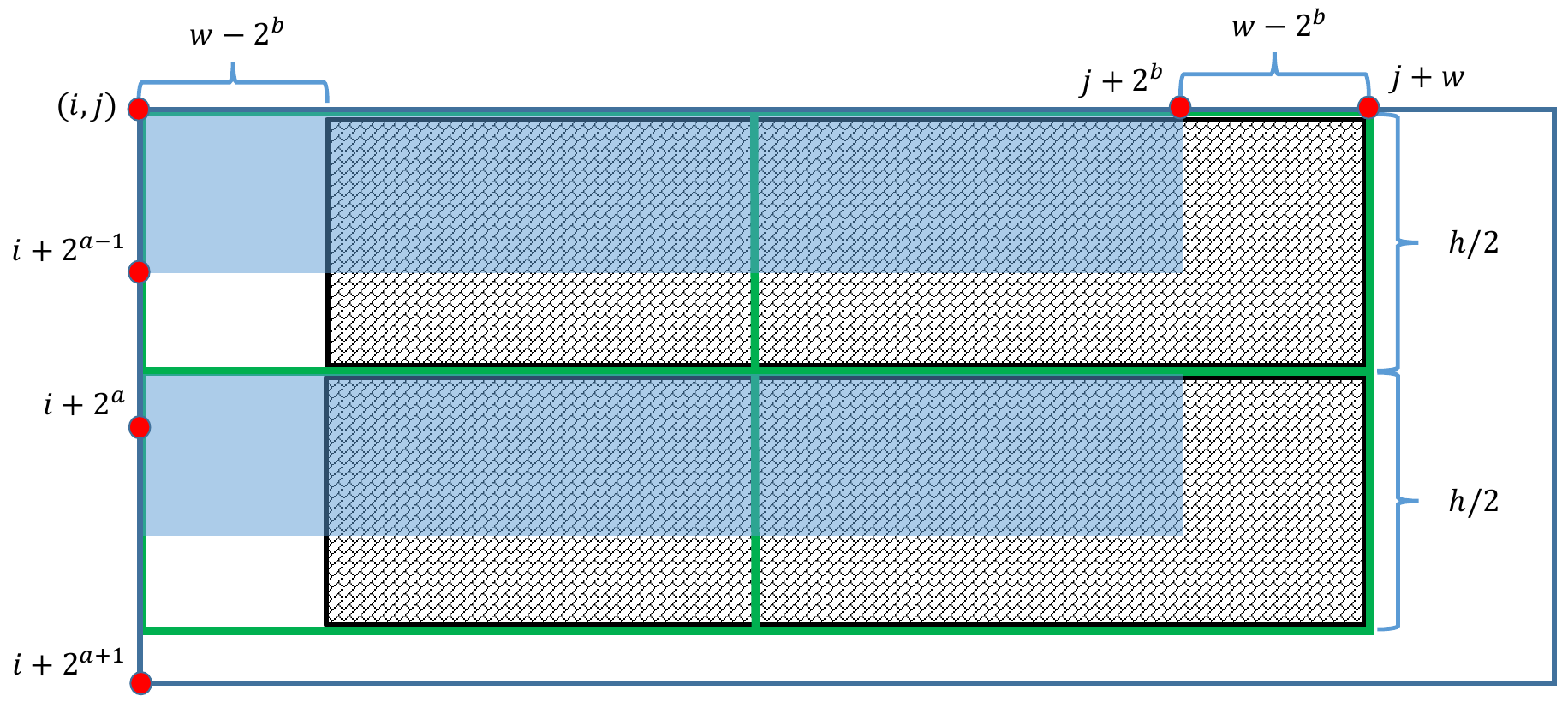}
\caption{The black textured rectangles correspond to $A[i \dd i + h/2)[j+w-2^b \dd j + w)$ and $A[i+h/2 \dd i + h)[j+w-2^b \dd j + w)$.}
\end{subfigure}
\end{center}
\caption{In both (a) and (b) $G:=A[i \dd i + 2^{a-1})[j \dd j + 2^{b})$ is the top rectangle filled with blue that equals to $A[i+h/2 \dd i + h/2+2^{a-1})[j \dd j + 2^{b})$, i.e., the bottom rectangle filled with blue. The sought quartic of height $h$ and width $w$ is shown in green.}
\label{fig:check_quartic1}
\end{figure}
\begin{figure}[t!]
\begin{center}
\begin{subfigure}[b]{0.99\textwidth}
\centering
\includegraphics[height=6cm]{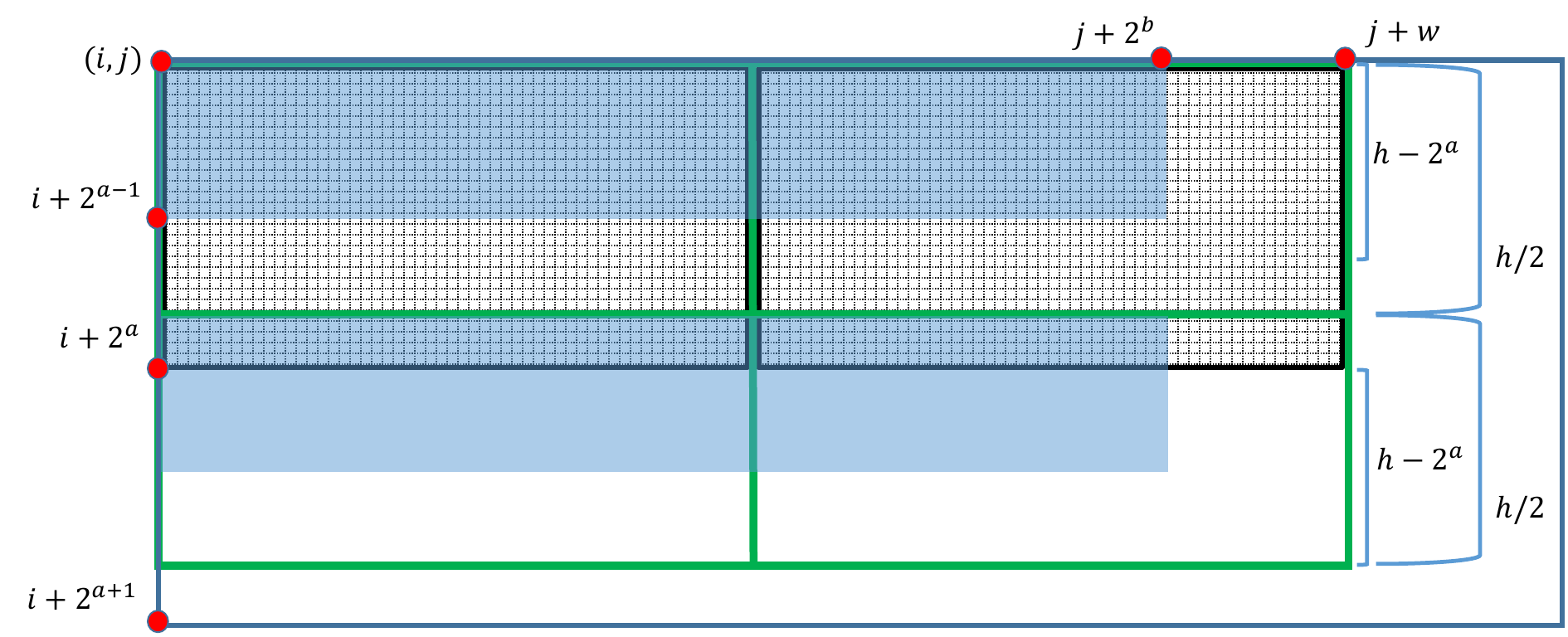}
\caption{The black textured rectangles correspond to $A[i \dd i + 2^a)[j \dd j + w/2)$ and $A[i \dd i + 2^a)[j+w/2 \dd j + w)$.}
\end{subfigure}
  \begin{subfigure}[b]{0.99\textwidth}
  \centering
\includegraphics[height=6cm]{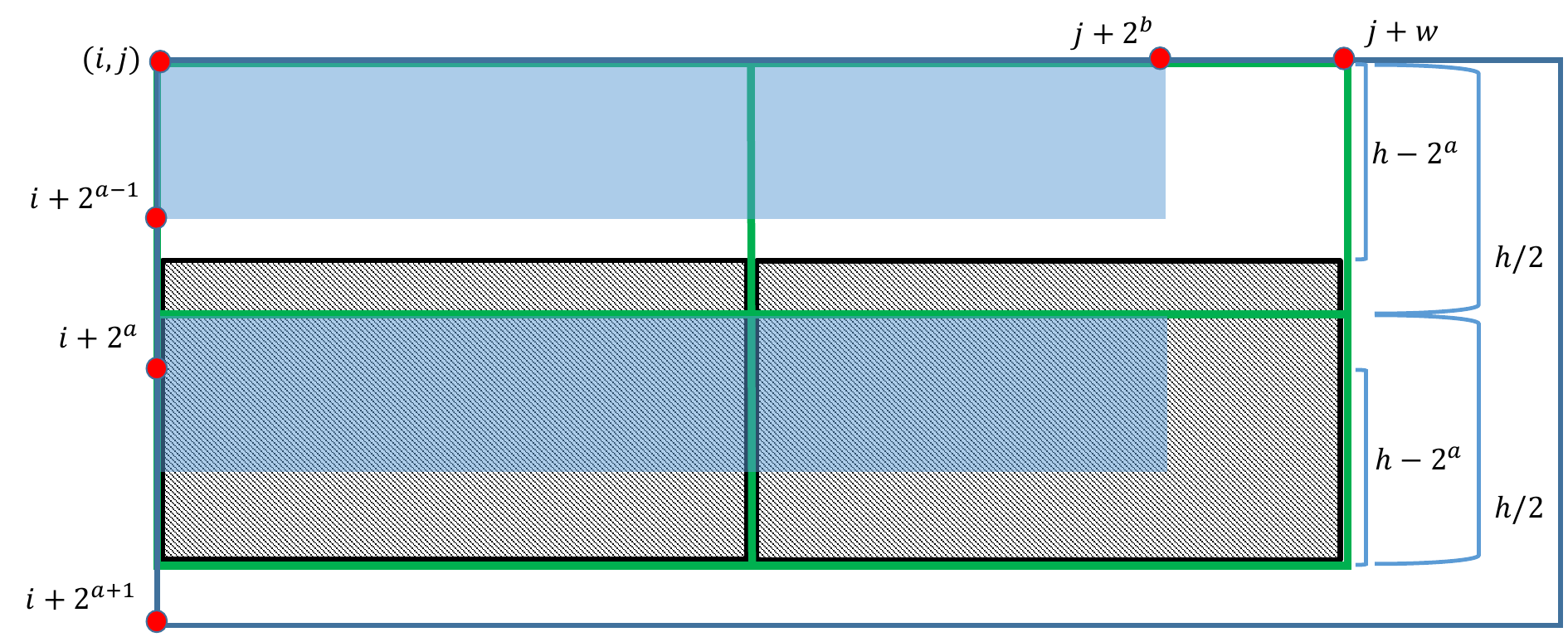}
\caption{The black textured rectangles correspond to $A[i+h-2^a \dd i + h)[j \dd j + w/2)$ and $A[i+h-2^a \dd i + h)[j+w/2 \dd j + w)$.}
\end{subfigure}
\end{center}
\caption{In both (a) and (b) $G:=A[i \dd i + 2^{a-1})[j \dd j + 2^{b})$ is the top rectangle filled with blue that equals to $A[i+h/2 \dd i + h/2+2^{a-1})[j \dd j + 2^{b})$, i.e., the bottome rectangle filled with blue. The sought quartic of height $h$ and width $w$ is shown in green.}
\label{fig:check_quartic2}
\end{figure}

However, if $G$ has a small vertical period, it may have more than a constant number of occurrences in $A[i \dd i + 2^{a+1})[j \dd j + 2^{b})$.
One can still process all of them in constant time, as we describe in detail below.
In what follows, we do not explicitly mention that before reporting any quartic obtained from either of the cases, we first verify that it is in $\can{a}{b}$, i.e., its height is in $[2^a\dd2^{a+1})$ and its width is in $[2^b\dd2^{b+1})$.
\begin{enumerate}
\item If the IPM query returns at most 10 occurrences, we verify the candidate height $h$ in $\cO(1)$ time as explained above.
\item Suppose that the IPM query returns more than 10 occurrences. These occurrences are represented as a constant number of arithmetic progressions such that the difference equals the vertical period of $G$, which we denote by $p$.
Note that, if we consider $G$ as a metastring by viewing its rows as metacharacters then each arithmetic progression corresponds to a run with period $p$ of this metastring.

\begin{enumerate}
\item Let us first focus on the case where $p$ is not a vertical period of the sought quartic(s).
We obtain $\cO(1)$ candidate heights from the returned arithmetic progressions as follows, distinguishing between two cases:
\begin{enumerate}
\item If the top-left quarter of the sought quartic(s) has vertical period $p$, we argue next that it suffices to consider the differences between $i$ and the first element of each other (maximal) arithmetic progression with difference $p$, denoted by $h/2$, i.e., the height of the quarter(s), and as a result, we consider $h$ as a candidate height of the quartic(s). See Figure~\ref{fig:height-ai}.
Suppose that there is such a quartic $A[i \dd i+h)[j \dd j + w)$.
Observe that the occurrences of $G$ at positions $(i,j)$ and $(i+h/2,j)$ must belong to different arithmetic progressions, as otherwise, the whole sought quartic(s) would have vertical period $p$.
Then, the runs corresponding to these arithmetic progressions must overlap. However, two runs with period $p$ cannot overlap by $p$ or more positions and it hence suffices to consider only the first element of the arithmetic progression that corresponds to the rightmost of the two considered runs.
\item In the complementary case it suffices to consider as candidate heights the differences between the last element of the arithmetic progression
that contains position $i$ in~$\vmeta{j}{b}$ and the last element of every other returned arithmetic progression.
This is sufficient because by the assumption on $p$ not being a vertical period of the top-left quarter
the last element of the former arithmetic progression corresponds to an occurrence of $G$ that is fully within the top-left quarter.

\end{enumerate}
In the end, we verify each candidate height as in the case where the IPM query returns at most 10 occurrences.
\item Finally, we need to consider the case where the sought quartic(s) has/have a vertical period equal to $p$.
We already know the candidate horizontal period, which is either $w/2$ or $w/4$, depending on whether the origin of the candidate width $w$ is a primitive square or a string $U^4$ for a primitive $U$.
This means that we already know the primitive 2D string $P$ for which we look for quartic(s) of the form $P^{2y,x}$, where $x \in\{2,4\}$.
Next, we employ periodic extension queries for $\vmeta{j}{b}$ and $\vmeta{j+w-2^b}{b}$ to compute the maximum~$t$ such that $A[i \dd t)[j \dd j + w)$ has a vertical period equal to $p$.
Let $y_{\max}:=\floor{t/2p}$.
Then, the quartics of the form $P^{2y,x}$ that occur at position $(i,j)$ are those for $y\in [1 \dd y_{\max}]$.
Observe that only the quartic of maximal height among these quartics may have an extreme occurrence at position $(i,j)$, as all the other ones have another occurrence at position $(i+p,j)$. See Figure~\ref{fig:thin-per}.
\end{enumerate}
\end{enumerate}

\begin{figure}[htpb]
\centering
\includegraphics[scale=0.4]{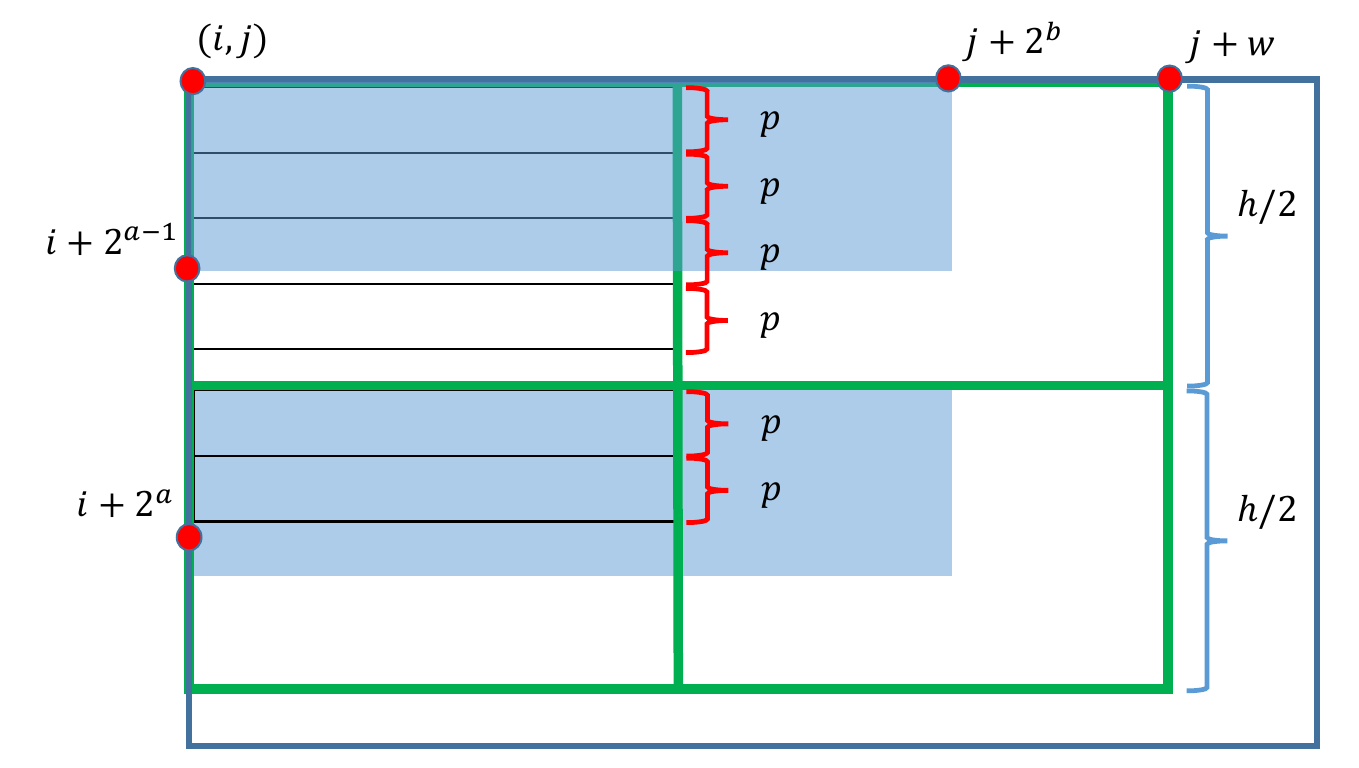}
\caption{$G:=A[i \dd i + 2^{a-1})[j \dd j + 2^{b})$ is the top rectangle filled with blue and a candidate quartic $A[i \dd i+h)[j \dd j + w)$ is shown in green, where the top-left quarter has vertical period $p$. The height $h/2$ corresponds to the difference between $i$ and some occurrence of $G$.}\label{fig:height-ai}
\end{figure}

\begin{figure}[htpb]
\centering
\includegraphics[scale=0.3]{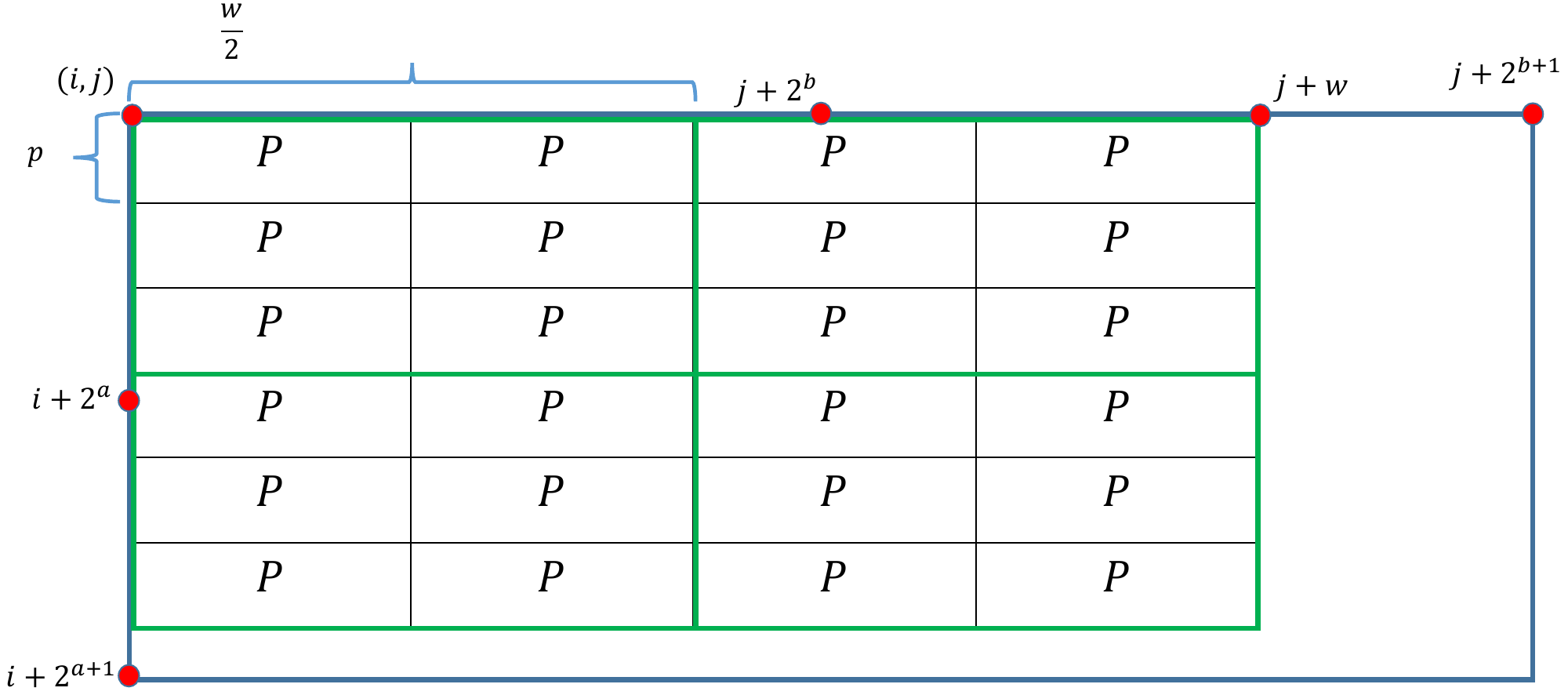}
\caption{$A[i \dd i + 2^{a+1})[j \dd j + 2^{b+1})$ is the blue border rectangle and a candidate quartic of maximal height, i.e., $P^{6,4}$, is shown in green.}\label{fig:thin-per}
\end{figure}

\subsection{Computation of Thick Quartics}
\label{ComputeThick}
\subsubsection{Computation of Supersets of $\assign(i,j)$}

In this subsection, we show how to compute, for each position $(i,j)$ of $A$, an $\cO(\log n)$-size superset of $\assign(i,j)$, in $\cO(n^2 \log n)$ time in total.

Let us fix a position $(i,j)$ of~$A$.
We say that a primitive 2D string $R$ is a \emph{skyline 2D string (for position $(i,j)$)} if and only if
\begin{itemize}
\item $R^{5,5}$ has an occurrence anchored at position $(i,j)$, and
\item There is no other primitive 2D string $P$, such that $g(P)$ dominates $g(R)$ and $P^{5,5}$ has an occurrence anchored at position $(i,j)$.
\end{itemize}
Note that a skyline 2D string $R$ does not have to be in $\assign(i,j)$.

\paragraph{Computation of Skyline 2D Strings.}
We wish to efficiently compute, for each $a \in [1 \dd \floor{\log n}]$, the widest 2D primitive string $R$, if there is any, with $\height(R) \in [2^a \dd 2^{a+1})$ and for which there is an occurrence of $R^{5,5}$ anchored at position $(i,j)$.
To this end, we start with a combinatorial result about highly periodic runs in strings $\vmeta{j}{b}$ for a fixed~$j$ and variable~$b$.

\begin{lemma}\label{lem:runs_evol}
Let $(i,j)$ be a position of $A$. Suppose that there are primitive strings $U$ and~$V$ with $|U| < |V|$ and (not necessarily distinct) integers $b_1$ and $b_2$ such that 
$U^3$ occurs at position $i$ of $\vmeta{j}{b_1}$ and $V^3$ occurs at position $i$ of $\vmeta{j}{b_2}$.
Then, we have that $2|U|\leq |V|$.
\end{lemma}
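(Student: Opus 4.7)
My plan is to argue by contradiction: I will assume $|V| < 2|U|$ and, via the periodicity lemma, exhibit a proper-power factorization of whichever of $U$ or $V$ remains primitive over a common coarser alphabet. I would first set $b := \min(b_1, b_2)$ and work in $M := \vmeta{j}{b}$. Because narrowing the row-width preserves row-equalities, the occurrences of $U^3$ and $V^3$ (viewed over the metacharacter alphabet of $M$) persist at position $i$, so $|U|$ is a period of $M[i \dd i+3|U|)$ and $|V|$ is a period of $M[i \dd i+3|V|)$.

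Next, assuming $|V| < 2|U|$, I would observe that $|V|$ is also a period of the shorter prefix $M[i \dd i+3|U|)$: the equalities at distance $|V|$ required over $[i, i+3|U|-|V|)$ form a subset of those guaranteed over $[i, i+2|V|)$, because $|U| \leq |V|$ gives $3|U|-|V|\leq 2|V|$. Since $|U| + |V| < 3|U|$, \cref{lem:FW} applied to $M[i \dd i+3|U|)$ yields that $d := \gcd(|U|, |V|)$ is a period there. Setting $P := M[i \dd i+d)$, I will thus have the exact structural identity $M[i \dd i+3|U|) = P^{3|U|/d}$, which I can then use to locate a proper-power inside either $U$ or $V$.

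To conclude, I would split into cases according to which of $b_1, b_2$ achieves the minimum, because this is where primitivity is preserved in $M$. In the case $b = b_1$, the string $U$ is primitive in $M$, so $U = P^{|U|/d}$ forces $d = |U|$, hence $|U| \mid |V|$; but no multiple of $|U|$ lies strictly between $|U|$ and $2|U|$, a contradiction. In the case $b = b_2$, the string $V$ is primitive in $M$, and since $|V| \leq 3|U|$ together with $d \mid |V|$ imply that the prefix $V = M[i \dd i+|V|)$ equals $P^{|V|/d}$ with $|V|/d \geq 2$ (as $d \leq |U| < |V|$), contradicting primitivity of $V$. The main subtlety that forces this case split is that restricting the row-width can destroy the primitivity of either $U$ or $V$ in $M$, so the Fine--Wilf step must be matched to whichever of the two strings retains its primitivity over the coarser alphabet.
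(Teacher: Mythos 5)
Your proof is correct and follows essentially the same route as the paper's: assume $|V|<2|U|$, pass to $b=\min(b_1,b_2)$, where both periodicities survive, and split into cases according to which of $U$, $V$ keeps its primitivity in $\vmeta{j}{b}$. The only difference is cosmetic: you conclude via a single application of \cref{lem:FW} to the periods $|U|$ and $|V|$ of the length-$3|U|$ prefix (forcing $\gcd(|U|,|V|)$ to be a period and hence a power structure in the primitive one), whereas the paper reaches the same contradiction by noting that a primitive string cannot equal a non-trivial rotation of itself.
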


\begin{proof}
Towards a contradiction, suppose that $|V|<2|U|$.
Let $b:=\min\{b_1,b_2\}$ and $S:=\vmeta{j}{b}$.

\emph{Case I}: $b_1 \leq b_2$.
By our assumptions, $|V|+|U| < 3|U|$.
Then, as $|V|$ is not a multiple of $|U|$, $S[i+|V| \dd i+|V|+|U|)$ must equal some non-trivial rotation of $U$ (defined in  Section~\ref{preliminaries}).
On the other hand, $S[i+|V| \dd i+|V|+|U|)=S[i \dd i+|U|)=U$ as $|V|$ is a period of $S[i \dd i+3|V|)$.
We obtain a contradiction, as the primitive string $U$ cannot match any of its non-trivial rotations.

\emph{Case II}: $b_1 > b_2$.
The period $p$ of $S[i \dd i+3|U|)$ divides $|U|$.
Our assumptions imply that $|V|+p < 3|U| < 3|V|$.
Further, $S[i+|V| \dd i+|V|+p)=S[i \dd i+p)$ as $|V|$ is a period of $S[i \dd i+3|V|)$.
Since $S[i \dd i+p)$ is primitive, it cannot match any of its non-trivial rotations, and hence $S[i \dd i+|V|)=S[i \dd i+p)^k$ for some $k$. This contradicts the primitivity of $V$ and thus concludes the proof.
\end{proof}

The above lemma implies that there is at most one candidate for the height of $R$.
We next show how to efficiently compute this candidate or conclude that none exists.
We compute as a batch of $\cO(n \log n)$ candidates for a fixed $j$ and all the $\cO(n\log n)$ choices of $i$ and $a$.

\begin{lemma}\label{lem:vert_run}
Let $j \in [1 \dd n]$.
In $\cO(n \log n)$ time in total, we can compute, for each $i \in [1 \dd n]$, for each $a \in [1 \dd \floor{\log n}]$, the unique integer $h$ with $\floor{\log h}=a$, for which there exists some $b \in [1 \dd \floor{\log n}]$ such that a string $U^5$, where $U$ is a primitive string of length $h$, occurs at position $i-2h$ of~$\vmeta{j}{b}$. The subinterval $I_h$ of $[1 \dd \floor{\log n}]$ that contains exactly those integers $b$ for which this holds is also returned alongside $h$.
\end{lemma}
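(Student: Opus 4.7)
The plan is to enumerate all primitive $5$-th power occurrences in the metastrings $\vmeta{j}{b}$ for $b \in [1, \floor{\log n}]$ using their precomputed runs, and then aggregate these occurrences by $(p^*, h)$ pair. The bijection $(p^*, h) \leftrightarrow (i = p^* + 2h,\; a = \floor{\log h})$ then supplies the output entries: each pair $(p^*, h)$ for which there exists some $b$ with primitive $U^5$ of length $h$ at position $p^*$ of $\vmeta{j}{b}$ becomes exactly one output entry.

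\emph{Uniqueness of $h$ for each $(i, a)$.} Suppose primitive $U_k^5$ occurs at $i - 2h_k$ of $\vmeta{j}{b_k}$ for $k \in \{1,2\}$, both with $h_k \in [2^a, 2^{a+1})$. The third, fourth, and fifth copies of each occurrence sit at positions $i,\, i + h_k,\, i + 2h_k$ of $\vmeta{j}{b_k}$, so a primitive cube $U_k^3$ occurs at position $i$ in each $\vmeta{j}{b_k}$. Applying \cref{lem:runs_evol} at position $i$ then forces $h_1 = h_2$, proving that for each $(i, a)$ at most one $h$ can appear in the output.

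\emph{Interval structure of $I_h$.} For fixed $(p^*, h)$, the condition ``$h$ is a vertical period of $\vmeta{j}{b}[p^* \dd p^* + 5h)$'' is downward-closed in $b$, since widening metacharacters can only destroy equalities and hence only removes periods. Dually, the condition ``$h$ is primitive in that fragment, i.e., no proper divisor of $h$ is a period'' is upward-closed in $b$, since smaller periods progressively drop out as more columns are encoded. The intersection $I_h$ of these two monotone conditions is therefore a (possibly empty) contiguous subinterval of $[1, \floor{\log n}]$, justifying the interval output form.

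\emph{Algorithm and main obstacle.} The algorithm iterates $b \in [1, \floor{\log n}]$ and, for each precomputed run $[\ell, r]$ of $\vmeta{j}{b}$ with primitive period $p$ satisfying $r - \ell + 1 \geq 5p$, emits the triple $(p^*, p, b)$ for every $p^* \in [\ell, r - 5p + 1]$; these are exactly the starting positions of primitive $5$-th power occurrences in $\vmeta{j}{b}$. The triples are then radix-sorted by $(p^*, p)$ using integer keys in $[1, n]^2$, and a single linear scan reads off the per-group minimum and maximum $b$, which by the interval structure equal the endpoints of $I_p$. For each group, the entry $(i = p^* + 2p,\; a = \floor{\log p},\; p,\; I_p)$ is returned, and by uniqueness each output $(i, a)$ is hit at most once. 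The main obstacle is bounding the total number of emitted triples by $\cO(n \log n)$: a standard combinatorial bound shows that a string of length $n$ contains only $\cO(n)$ occurrences of primitively rooted $k$-th powers for $k \geq 3$ (derivable from Kolpakov-Kucherov-style sum-of-exponent arguments on runs), so across the $\floor{\log n}$ strings $\vmeta{j}{b}$ the total is $\cO(n \log n)$, and linear-time radix sort on integer keys then completes the aggregation within the stated time bound.
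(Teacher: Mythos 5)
Your uniqueness argument (reducing $U^5$ at $i-2h$ to a primitive cube at position $i$ and invoking \cref{lem:runs_evol}) and your observation that $I_h$ is an interval (periods are downward-closed in $b$, primitivity of the length-$h$ root is upward-closed) are both correct and match the paper's reasoning. The gap is in the time analysis: the claim that a length-$n$ string contains only $\cO(n)$ occurrences of primitively rooted $k$-th powers for $k\geq 3$ is false. The per-position argument (roots of primitive cubes starting at a fixed position at least double in length) gives $\cO(\log n)$ per position, i.e.\ $\cO(n\log n)$ per string, and this is tight: taking $w_{k+1}=w_k^5\,\tilde w_k$, where $\tilde w_k$ is $w_k$ with its last character altered, produces at every scale runs of exponent close to $6$, each contributing about $\per$-many fifth-power occurrences, for a total of $\Theta(n\log n)$ occurrences of primitively rooted fifth powers. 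The sum-of-exponents theorem does not help here because the relevant quantity is the \emph{length-weighted} excess $\sum_D \per(D)\cdot(e_D-5)$, not $\sum_D e_D$. Consequently, your enumeration can emit $\Theta(n\log n)$ triples for \emph{each} of the $\floor{\log n}$ metastrings $\vmeta{j}{b}$: e.g., if every row of $A$ is constant, with row $i$ filled by the $i$-th character of the bad string above, then $\vmeta{j}{b}$ is isomorphic to that bad string for every $b$, so a single $j$ already yields $\Theta(n\log^2 n)$ triples. Both the emission and the radix sort then cost $\Theta(n\log^2 n)$ per $j$, i.e.\ $\Theta(n^2\log^2 n)$ overall, which is exactly the bound the paper is trying to beat, so the lemma's $\cO(n\log n)$ budget is not met.

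The missing idea is that one must not list the pairs $(h,b)$ individually. The paper links the runs of $\vmeta{j}{b}$ across consecutive values of $b$ (each run $D$ in $\vmeta{j}{b}$ has a parent run $\pi(D)$ in $\vmeta{j}{b-1}$, found by a periodic-extension query), obtaining a forest in which maximal paths of runs sharing the same period are compressed via ``special pointers'' to the lowest ancestor with a different period. A left-to-right sweep over $i$ then activates/deactivates runs and, at each position, traverses only special pointers; by \cref{lem:runs_evol} the active runs have $\cO(\log n)$ distinct periods, so the traversal costs $\cO(\log n)$ per position and each answer $(h,I_h)$ is emitted in $\cO(1)$ time as a range of depths along one compressed path, rather than once per value of $b$. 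If you want to salvage your aggregation-based scheme, you would need an analogous compression (e.g., emitting one record per maximal $b$-interval of a run family rather than one per $(p^*,p,b)$ triple), at which point you have essentially rebuilt the paper's forest construction.
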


\begin{proof}
For each $b>1$, for each run $D=\vmeta{j}{b}[x \dd y]$ with period $p$, there is a unique run $\pi(D)$ in string $\vmeta{j}{b-1}$ that contains $\vmeta{j}{b-1}[x \dd y]$ and whose period divides $p$.
It can be computed in $\cO(1)$ time using a periodic extension query in $\vmeta{j}{b-1}$.

We build a forest $\mathcal{F}$ whose nodes are runs $D$ with $|D|\geq 5\cdot \per(D)$ of $\vmeta{j}{b}$, for $b \in [1 \dd \floor{\log n}]$.
Each run of $\vmeta{j}{1}$ is a root, while the parent of any other run $D$ is $\pi(D)$.
We preprocess $\mathcal{F}$ in $\cO(n\log n)$ time so that each run/node in $\mathcal{F}$ stores its depth as well as a \emph{special pointer} to its lowest ancestor that has a different period.

We process the runs that are represented in this forest in a left-to-right manner.
That is, we consider positions $i=1,2,\ldots,n$ in this order while maintaining the information about runs in all
strings $\vmeta{j}{b}$, for $b \in [1 \dd \floor{\log n}]$, simultaneously.
A run~$D$ is activated $2\cdot\per(D)$ positions after it starts and deactivated $3\cdot \per(D)-1$ positions before it ends.
When processing a position $i \in [1\dd n]$, we first deactivate runs as necessary (bottom-up) and then activate runs as necessary (top-down), maintaining the leaves of the subforest induced by the active nodes.
This guarantees that all ancestors of an active node are also active.
Then, we traverse the subforest induced by the active nodes by following special pointers, starting from the leaves.
The number of traversed special pointers is $\cO(\log n)$ due to \cref{lem:runs_evol}: for any $k\in [1 \dd \floor{\log n}]$,
all active runs with periods from $[2^{k}\dd 2^{k+1})$ have the same period $p$, thus forming a single path
in the subforest, and the traversal can be implemented in time proportional to the number of such paths.
Each traversed special pointer gives us a period $p$ and a subinterval $I_h \subseteq [1 \dd \floor{\log n}]$ containing all $b$
such that there is an active run $D$ at position $i$ in $\vmeta{j}{b}$ with $\per(D)=p$.
The value of $p$ is stored at the tail of the special pointer, and $I_{h}$ consists of the depths of all nodes between
its tail (inclusive) and head (exclusive).
\end{proof}

\subparagraph{Processing the candidate vertical period.}
Let us now fix a position $(i,j)$ and an $a \in [1 \dd \floor{\log n}]$ for which \cref{lem:vert_run} returns a candidate $h$, accompanied by a subinterval $I_h$ of $[1 \dd \floor{\log n}]$.
Now, we need to compute the widest primitive 2D string $R$ with $\height(R)=h$ such that $R^{5,5}$ has an occurrence anchored at position $(i,j)$ if one exists.

Let $\ell=\floor{\log (5h)}$.
Recall that $\mathcal{H}$ is the collection of all strings $\hmeta{i}{a}$.
There is a subset $\mathcal{C}$ of $\mathcal{H}$ of size at most three, such that the union of the fragments of $A$ from which the metastrings in $\C$ originate equals $A[i-2h \dd i+3h)[1 \dd n]$, and each of these fragments overlaps with the next one (in the order induced by the indices of their topmost rows) by at least $h$ rows.
We can assume that $\mathcal{C}$ contains $T:=\hmeta{i-2h}{\ell}$ and the elements of some subset of $\{\hmeta{i+2^{\ell}-h}{\ell}, \hmeta{i+3h-1-2^{\ell}}{\ell}\}$.

In a preprocessing step, we apply the following lemma with $k=2$ to each of the strings in $\mathcal{H}$ in $\cO(n^2 \log n)$ time in total.
\begin{lemma}\label{lem:bit_runs}
Given a string $S$ of length $n$ and a non-negative integer $k$, we can compute, in $\cO(n)$ time, for each $j \in [1 \dd n]$, a bitvector $\beta(S,j)$ of size $\floor{\log n}$, such that the $b$-th bit of $\beta(S,j)$ is set if and only if there exists a primitive string $U$ such that $S[j-k|U| \dd j+3|U|)=U^{k+3}$ and $\floor{\log |U|}=b$.
\end{lemma}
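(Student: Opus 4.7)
The plan is to reduce everything to a linear-time sweep over the runs of $S$. First I would compute all runs of $S$ in $\cO(n)$ time using the linear-time run-finding algorithm for strings over polynomial alphabets~\cite{KK:99,ellert_et_al:LIPIcs.ICALP.2021.63}. The key observation is that a primitive string $U$ satisfying $S[j-k|U|\dd j+3|U|)=U^{k+3}$ exists if and only if the fragment $S[j-k|U|\dd j+3|U|)$ is entirely contained in a run of $S$ whose period equals $|U|$; such a $U$ is then forced to be the primitive root of that run. Indeed, if such a $U$ exists, then $U^{k+3}$ has smallest period $|U|$, because any strictly smaller period would, by the periodicity lemma (\cref{lem:FW}), induce a period of $U$ and contradict its primitivity; hence the fragment extends uniquely to a run of period $|U|$. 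The converse is immediate from the definition of a run.

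Given this equivalence, for each run $S[\ell\dd r]$ with period $p$ I would set $b=\floor{\log p}$ and compute the interval of valid anchors $j\in[\ell+kp,\,r-3p+1]$; for every $j$ in this (possibly empty) interval, bit $b$ of $\beta(S,j)$ must be set, and these are the only bits that should ever be set. Since $S$ has $\cO(n)$ runs, this produces $\cO(n)$ (interval, bit) pairs in total. I would then turn each pair $([L,R],b)$ into two events: a \emph{start-bit-$b$} event at position $L$ and an \emph{end-bit-$b$} event at position $R+1$, and bucket-sort all $\cO(n)$ events by position in $\cO(n)$ time.

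During the sweep I would maintain a bitvector $\beta$ of length $\floor{\log n}$ (fitting in $\cO(1)$ machine words) together with an auxiliary counter array $c[0\dd\floor{\log n}]$. At position $j$, I would process all its events, then emit $\beta(S,j):=\beta$: on a start event for bit $b$, I increment $c[b]$ and set bit $b$ of $\beta$ if $c[b]$ becomes $1$; on an end event, I decrement $c[b]$ and clear bit $b$ of $\beta$ if $c[b]$ becomes $0$. Each event and each output take $\cO(1)$ time via word-level bit operations, so the sweep runs in $\cO(n)$ time. Summing with the $\cO(n)$ preprocessing for runs gives the claimed bound.

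I do not foresee a real obstacle here: modulo the black-box linear-time runs computation, the argument is just a careful reduction to a 1D interval-union problem on $\cO(n)$ intervals solved by a standard sweep. The one subtle point worth flagging is the bijection between the sought primitive powers $U^{k+3}$ and runs of $S$ with period $|U|$; it is precisely this bijection that lets me enumerate candidate primitive roots through runs rather than by searching for each pair $(j,b)$ individually, which is what drives the complexity down to $\cO(n)$.
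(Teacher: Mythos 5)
Your proposal is correct and follows essentially the same route as the paper: compute all runs in $\cO(n)$ time, translate each sufficiently long run $S[\ell \dd r]$ with period $p$ into the interval $[\ell+kp \dd r-3p+1]$ of positions at which bit $\floor{\log p}$ must be set, and recover the bitvectors by a left-to-right sweep. The only (harmless) difference is that the paper justifies a plain set/unset toggle per bit by invoking \cref{lem:runs_evol} to argue that at most one run per dyadic period class is active at any position, whereas you sidestep that uniqueness argument with a per-bit counter.
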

\begin{proof}
We perform a line-sweeping algorithm on $S$ that processes all runs whose length is at least $k+3$ times larger than their period.
These runs can be computed in $\cO(n)$ time~\cite{KK:99,runstheorem}.
Now, let us start by setting $\beta(S,j)$ to be an all-zeroes bitvector of size $\floor{\log n}$.
As we sweep over $S$, in a left-to-right manner, for each of the specified runs $D=S[x \dd y]$, we set the $\floor{\log(\per(D))}$-th bit of the maintained bitvector, i.e., $\beta(S,j)$, when position $x+k\cdot\per(D)$ in $S$ is processed and unset it when position $y-3\cdot \per(D)+1$ in $S$ is processed.
For each $j$, we set $\beta(S,j)$ to be equal to the maintained bitvector just after position $j$ in $S$ is processed.
The correctness of this approach follows by \cref{lem:runs_evol}, which implies that, for each position $j$ in $S$, for each $a \in [1 \dd \floor{\log n}]$ there is at most one run $D$ at a time such that $S[j \dd j+3\cdot\per(D))$ is contained in $D$ and $\floor{\log(\per(D))}=a$.
\end{proof}

\begin{observation}\label{obs:bit}
If we have an occurrence of a 2D string $R^{5,5}$ for a primitive 2D string $R$ of height $h$ anchored at position $(i,j)$, then, for each $X \in \C$, the $\floor{\log(\width(R))}$-th bit of $\beta(X,j)$ is $1$.
\end{observation}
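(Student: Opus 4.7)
The plan is to exhibit, for each $X \in \mathcal{C}$, a primitive length-$w$ metafactor of $X$ sitting around column $j$ that witnesses the definition of $\beta$, where $w := \width(R)$.

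Fix $X \in \mathcal{C}$. By the definition of the anchor, $R^{5,5}$ occurs in $A$ with top-left corner $(i-2h, j-2w)$, so it spans rows $[i-2h \dd i+3h)$ and columns $[j-2w \dd j+3w)$. The fragment of $A$ from which $X$ originates has its row range inside $[i-2h \dd i+3h)$ and height $2^\ell$, where $\ell = \floor{\log 5h}$; in particular $2^\ell \geq 5h/2 > h$. Restricted to columns $[j-2w \dd j+3w)$, this slice of $A$ is a horizontal strip of $R^{5,5}$, and since the columns of $R^{5,5}$ repeat with period $w$, viewing $X$ as a metastring of columns we obtain $X[j-2w \dd j+3w) = U^5$, where $U := X[j-2w \dd j-w)$.

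The main step is to verify that $U$ is primitive as a metastring. Suppose, for contradiction, that $U = V^m$ with $m > 1$. Then the $2^\ell \times w$ subarray of $A$ from which $U$ originates would have horizontal period $w/m$, a proper divisor of $w$. Since $2^\ell > h$, this subarray contains a block of $h$ consecutive rows of $R^{5,5}$. But every block of $h$ consecutive rows of $R^{5,5}$ is a cyclic row-rotation of $R$, and such a rotation preserves horizontal period, as it does not affect which pairs of columns are equal. Hence $R$ itself would have horizontal period $w/m$, and together with $(w/m) \mid w$ this would force $R = Y^{1,m}$ for some 2D string $Y$, contradicting the primitivity of $R$.

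Therefore $U$ is primitive and $|U| = w$, so $U$ witnesses the condition of \cref{lem:bit_runs} (applied with parameter $2$ in place of its $k$), and the $\floor{\log w}$-th bit of $\beta(X, j)$ is set. The argument only used that the row range of $X$'s originating fragment lies in $[i-2h \dd i+3h)$ and has height exceeding $h$, so the conclusion holds uniformly for every $X \in \mathcal{C}$. The main technical point is the primitivity of $U$; the key observation that unlocks it is the invariance of horizontal period under cyclic row-rotation.
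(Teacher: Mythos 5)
Your proof is correct and matches the paper's (implicit) reasoning: the paper states this as an observation without proof, the intended argument being exactly yours—restricting the occurrence of $R^{5,5}$ to the rows of $X$ shows $X[j-2w \dd j+3w)=U^5$ for the length-$w$ metastring $U$, whose primitivity follows from that of $R$. Your explicit justification of the primitivity step (a proper period of $U$ would give the corresponding vertical cyclic rotation of $R$, and hence $R$ itself, a horizontal period properly dividing $\width(R)$) correctly fills in the detail the paper treats as immediate, and it rightly works for any element of $\C$ regardless of how its row range is aligned with the occurrence.
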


To facilitate the efficient computation of the sought width, we need a converse version of the above statement.

\begin{lemma}\label{lem:bit}
Suppose that the $b$-th bit is set in $\beta(X,j)$ for all $X \in \C$.
Let $U$ be a primitive string such that $T[j-2|U| \dd j+3|U|)=U^5$ and $\floor{\log |U|}=b$.
Then, if the vertical period of $A[i-2h \dd i+3h)[j \dd j+3|U|)$ is $h$, $R:=A[i-2h \dd i-h)[j \dd j+|U|)$ is a primitive 2D string and $R^{5,5}$ has an occurrence anchored at position $(i,j)$.
\end{lemma}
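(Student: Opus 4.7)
The plan is to establish two things in order: first, that the $5h \times 5|U|$ region $A[i-2h \dd i+3h)[j-2|U| \dd j+3|U|)$ equals $R^{5,5}$; second, that $R$ is primitive and the anchor of this occurrence is $(i,j)$. The idea is to leverage the horizontal periodicity that the bit condition provides in each strip $X \in \C$, combine it with the vertical period hypothesis to propagate horizontal period $|U|$ throughout the wide column range, and then extend the vertical period $h$ into this wide range.

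The main obstacle is to show that the primitive metastring $W_X$ witnessing the $b$-th bit of $\beta(X,j)$ satisfies $|W_X|=|U|$ for every $X \in \C$. I would argue this by contradiction: suppose $|W_X|\neq |U|$ for some $X \in \C$. In the row-overlap of $T$'s and $X$'s strips (of height at least $h$), both $|U|$ and $|W_X|$ are horizontal metastring periods (viewing columns truncated to the overlap height) on the respective column ranges $[j-2|U|, j+3|U|)$ and $[j-2|W_X|, j+3|W_X|)$. The common column range has width $5\min(|U|,|W_X|)$, which exceeds $|U|+|W_X|$ because both lengths lie in $[2^b, 2^{b+1})$; hence \cref{lem:FW} yields $\gcd(|U|,|W_X|)$ as a horizontal period of this overlap metastring on the intersection. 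Using vertical period $h$ in the narrow column range $[j, j+3|U|)$, this $\gcd$ period propagates from the overlap rows to the top $h$ rows of $T$ sitting in $[i-2h, i-h)$, and then, again by vertical period, to $T$'s full strip of height $2^\ell \leq 5h$. The intersection still contains at least one full copy of $U$ since $3\min(|U|,|W_X|) \geq 3 \cdot 2^b > 2^{b+1} > |U|$; therefore $U$ itself inherits a proper period $\gcd(|U|,|W_X|)$ dividing $|U|$, contradicting its primitivity.

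Once $|W_X| = |U|$ for every $X \in \C$, horizontal period $|U|$ holds in $[j-2|U|, j+3|U|)$ across the full row range $[i-2h, i+3h)$, since $\C$'s strips cover it. I would then extend vertical period $h$ from the narrow column range $[j, j+3|U|)$ to the wide column range by exploiting horizontal period $|U|$: for any $k\in\{1,2\}$, $A[r][j-k|U|]=A[r][j]$ in every row $r$, so the vertical $h$-periodicity at column $j-k|U|$ follows from that at column $j$. This yields $A[i-2h \dd i+3h)[j-2|U| \dd j+3|U|) = R^{5,5}$. Primitivity of $R$ then follows by two contradictions: a decomposition $R = Y^{a,1}$ with $a>1$ would yield vertical period $h/a < h$ of the region in $[j, j+3|U|)$, contradicting the hypothesis; a decomposition $R = Y^{1,b}$ with $b>1$ would transfer via the top $h$ rows (whose columns are extended to full height $2^\ell$ by vertical period $h$) to a horizontal metastring period $|U|/b < |U|$ of $U$ in $T$, contradicting the primitivity of $U$. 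Finally, the anchor check is immediate: the occurrence of $R^{5,5}$ starts at $(i-2h, j-2|U|) = (i-2\height(R), j-2\width(R))$, whose anchor is $(i,j)$ by definition.
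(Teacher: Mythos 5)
Your plan follows the paper's proof quite closely (establish that the witness of the set bit in every strip of $\C$ has length exactly $|U|$, deduce that the whole region $A[i-2h \dd i+3h)[j-2|U| \dd j+3|U|)$ has horizontal period $|U|$, combine with the vertical period $h$, then argue primitivity and the anchor), but the central step has a genuine gap: you compare each $X\in\C$ \emph{directly} with $T$ and assert that their row-overlap has height at least $h$. The construction of $\C$ only guarantees overlaps of at least $h$ rows between \emph{consecutive} strips. When $|\C|=3$ --- which happens precisely when $5h/2<2^{\ell}<3h$ (e.g.\ $h=6$, $2^{\ell}=16$) --- the first and third strips have height $2^{\ell}$ and must start at row $i-2h$ and end at row $i+3h$ respectively, so their overlap consists of exactly $2\cdot 2^{\ell}-5h<h$ rows. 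Your propagation step then breaks for that pair: with fewer than $h$ overlap rows, the overlap does not contain a representative of every residue class modulo $h$, so the vertical period $h$ (available only on rows of $[i-2h\dd i+3h)$ and columns $[j\dd j+3|U|)$) does not let you carry the $\gcd$-period from the overlap rows to all $2^{\ell}$ rows of $T$'s strip, and the contradiction with the primitivity of $U$ (a metastring over \emph{full-height} columns of $T$) cannot be completed.

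The repair is the chaining the paper uses: run your argument only on consecutive pairs. First compare $T$ with the middle strip $Y$ (overlap at least $h$ rows) to conclude $|W_Y|=|U|$; since $W_Y$ is primitive of length $|U|$ and all of $Y$'s rows lie in $[i-2h\dd i+3h)$, the identical argument applied to the pair $(Y,X)$, whose overlap is again at least $h$ rows, now contradicts the primitivity of $W_Y$ instead of $U$. (Alternatively, you can avoid row-overlaps altogether: every row of $T$'s strip is congruent modulo $h$ to some row of $X$'s strip, because $X$'s strip has more than $h$ consecutive rows inside $[i-2h\dd i+3h)$; via the vertical period this transfers $X$'s horizontal period $|W_X|$ to all of $T$'s rows on columns $[j\dd j+3\min(|U|,|W_X|))$, where the periodicity lemma applies since $|U|+|W_X|<3\min(|U|,|W_X|)$.) Two smaller remarks: when you propagate the $\gcd$-period via the vertical period you should restrict the claim to columns of $[j\dd j+3|U|)$, which suffices because a full copy of $U$ sits at columns $[j\dd j+|U|)$; and your derivation of $R^{5,5}$ from the two periods, the primitivity of $R$, and the anchor computation are all fine and match the paper.
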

\begin{proof}
Let $w:=|U|$.
By the assumptions of the lemma and the fact that $\height(T)> 2h$, the horizontal and vertical periods of $A[i-2h \dd i-2h+\height(T))[j-2w \dd j+3w)$ are $w$ and~$h$, respectively.
The primitivity of $R$ is immediate.

Now, note that if $\C=\{T\}$ we are already done as this can only be the case if $5h=2^\ell$. Henceforth, we can thus suppose that this is not the case.
Consider $Y\in\C$ such that the fragments of $A$ from which $T$ and $Y$ were built (as metastrings) overlap by at least $h$ rows, and let $O$ be this overlap.
It readily follows that the horizontal period of $O[1\dd h)[j-2w \dd j+3w)$ is $w$.
Further, by an argument analogous to that in the proof of \cref{lem:runs_evol}, $w$ must coincide with the length of the primitive string $V$ such that $V^5$ has an occurrence at position $i-2|V|$ of $Y$ and satisfies $\floor{\log |V|}=b$.
If $\C$ also contains a third element $X$, we also apply the same reasoning to $X$ and $Y$.
In either case, we conclude that $A[i-2h \dd i+3h)[j-2w \dd j+3w)$ has horizontal period $w$. Since the vertical period of $A[i-2h \dd i+3h)[j \dd j+3w)$ is $h$, we have an occurrence of $R^{5,5}$ anchored at position $(i,j)$. This completes the proof of the lemma.
\end{proof}

Now, our first goal is to compute a set $B$ that contains three candidates for $\floor{\log (\width(R))}$.
Initialize $B$ as the empty set.
Let $I=[b_1 \dd b_2]$ be the interval returned by the application of \cref{lem:vert_run} together with candidate $h$.
Let $\beta^*$ be the bitvector obtained by an AND operation performed on the bitvectors $\beta(X,j)$ for all $X \in \C$.
By \cref{obs:bit}, we only need to consider indices of bits of~$\beta^*$ that are set.
If the $b$-th bit of $\beta^*$ is set and $b \in [b_1 \dd b_2)$ then the requirements of \cref{lem:bit} are satisfied.
All such indices $b$ can be computed by applying a bitmask whose $x$-th bit is set if and only if $x \in [b_1 \dd b_2)$.
Our first candidate is the index $b^*$ of the least significant set bit of the obtained vector $\beta^*$, if there is one; we insert it to $B$.
Further, observe that we cannot have $\floor{\log (\width(R)))} \in [1 \dd b_1-1) \cup (b_2 \dd \floor{\log n}]$ since the vertical period of $A[i-2h \dd i-h)[j \dd j+\width(R))$ is not $h$.
Finally, insert to $B$ each $y\in \{b_1-1,b_2\}$ if the $y$-th bit of $\beta^*$ is set.

If $B$ is empty there is nothing to be done as there is no primitive 2D string $R$ to be returned.
We thus assume that $B$ is non-empty.
For each element $x$ of $B$, we compute the length $w_x$ of the primitive string $U$ such that $U^5$ occurs at position $i-2|U|$ of $T$ and $\floor{\log |U|}=x$ as follows.
For each $x \in B$, we query for the primitive squares that occur at position $j$ and have length in $[2^x \dd 2^{x+1})$ using~\ref{lem:long_sq}.
For each result of these queries, i.e., for each of the at most two returned primitive squares $UU$ for each $x \in B$, we perform a periodic extension query that takes $\cO(1)$ time to check if this square extends to an occurrence of $U^5$ at position $i-2|U|$ in $T$; this will give us a unique $U$ due to \cref{lem:runs_evol}, yielding $w_x=|U|$.
Next, for each $x \in B$, we test whether the vertical period of $A[i-2h \dd i+3h)[j \dd j+3w_x)$ is $h$ using 2-period queries for strings $\vmeta{j}{y}[i-2h \dd i+3h)$ and $\vmeta{j+3w_x-1-2^{y}}{y}[i-2h \dd i+3h)$, where $y=\floor{\log (3w_x)}$, and checking whether both returned periods are equal to $h$.\footnote{The check for $b^*$, if it exists, is guaranteed to be successful.}
Then, $\width(R)$ is equal to the largest $w_x$ for which this test is successful, if there is one, due to \cref{lem:bit}.

Finally, we postprocess the obtained $\cO(\log n)$ elements in $\cO(\log n)$ time to remove any elements that are dominated, thus obtaining the sought skyline of primitive 2D strings.

\paragraph{Computation of Dominated 2D Strings.}

\renewcommand{\R}{\mathcal{R}}

Let the skyline 2D strings be $S_1, \ldots, S_\ell$, sorted in the order of decreasing width (and thus increasing height); the set of these 2D strings is henceforth denoted by $\S$.

We say that a 2D string $U$ (strictly) dominates a 2D string $V$ if and only if $g(U)$ (strictly) dominates $g(V)$.
We also refer to chains and antichains of 2D strings based on this dominance relation.
By \cref{lem:onechain}, for each element of $\S$, the primitive 2D strings that span it horizontally (resp.~vertically) form a chain.

For a set of strings $\mathcal{X}$, let us denote $g(\mathcal{X}):=\{g(X) :X\in \mathcal{X}\}$.
Our aim is to compute an $\cO(\log n)$-size set $\mathcal{G} \subset [1\dd \log n]^2$ such that $\mathcal{G} \supseteq g(\assign(i,j))$.
We initialise $\mathcal{G}$ as $\mathcal{S} \cup \{(a,b): (a,b) \text{ is dominated by } (a',b) \in \mathcal{S} \text{ or } (a,b') \in \mathcal{S}\}$.
Let $\R$ be the set of primitive 2D strings that are dominated by at least one element of $\S$ and span either horizontally or vertically each of the elements of $\S$ that dominate them.
Our main aim is to compute $g(\R):=\{g(R) :R\in \R\}$ and then set $\mathcal{G} = \mathcal{G} \cup g(\R)$, which by \cref{lem:span} is a superset of $g(\assign(i,j))$.
In the end, for each $(a,b) \in \mathcal{G}$ we infer a primitive 2D string $R$ with $g(R)=(a,b)$ and check whether $R^{5,5}$ has an occurrence anchored at position $(i,j)$ using a constant number of $2$-period queries for fragments of strings from $\mathcal{H} \cup \mathcal{V}$.
We next show how to compute $g(R)$.

\begin{fact}
For any $k$, a 2D string that horizontally (resp.~vertically) spans $S_k \in \S$ also spans all $S_t$ with $t > k$ (resp.~$t<k$) that dominate it.
\end{fact}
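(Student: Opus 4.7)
I would prove the horizontal statement; the vertical one follows by a symmetric argument. Let $R$ be a 2D string that horizontally spans $S_k$, and let $t > k$ be such that $S_t$ dominates $R$. The goal is to show that $R$ horizontally spans $S_t$, i.e., that $3\height(R) \leq 2\height(S_t)$, $\width(R) < \width(S_t)$, and $\hspan{R}{S_t} = R^{3, y_t}$ for some integer $y_t \geq 2$.

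The two metric conditions are straightforward. By the skyline antichain property (together with the injectivity of $g$ on the relevant primitive strings via \cref{lem:unique}), widths strictly decrease and heights strictly increase along the skyline order, so $\height(S_t) > \height(S_k) \geq \tfrac{3}{2}\height(R)$, whence $3\height(R) \leq 2\height(S_t)$. Strict dominance of $g(S_t)$ over $g(R)$ yields $\width(R) < \width(S_t)$.

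For the content condition, I would exploit that both $S_k^{5,5}$ and $S_t^{5,5}$ are anchored at $(i,j)$, so they occur at $(i-2\height(S_k), j-2\width(S_k))$ and $(i-2\height(S_t), j-2\width(S_t))$ in $A$, respectively. By the proof of \cref{lem:span}, $\width(R) \mid \width(S_k)$, so combined with the five horizontal copies of $S_k$ inside $S_k^{5,5}$, the fragment $A[i-2\height(S_k)\dd i-2\height(S_k)+3\height(R))[j-2\width(S_k)\dd j+3\width(S_k))$ equals $R^{3, 5\width(S_k)/\width(R)}$; using the vertical period $\height(S_k)$ of $S_k^{5,5}$, the same $R$-tiled structure appears at each of the five vertical copies of $S_k$ in $S_k^{5,5}$, yielding five horizontal bands of $A$ with this property. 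Within the intersection of any such band with $S_t^{5,5}$, both horizontal periods $\width(R)$ and $\width(S_t)$ hold; since $\width(R) + \width(S_t) \leq 5\width(S_t)$, the periodicity lemma yields the horizontal period $g := \gcd(\width(R), \width(S_t))$ on every row of the intersection. Combining with the vertical period $\height(S_t)$ of $S_t^{5,5}$ to propagate this period across rows of $S_t$, primitivity of $S_t$ forces $\width(R) \mid \width(S_t)$. The vertical period $\height(S_t)$ then identifies the $R$-tiled content on the central band (rows $[i\dd i+3\height(R))$ of $A$) with the content at the target rows $[i-2\height(S_t)\dd i-2\height(S_t)+3\height(R))$ within the column range of $S_t^{5,5}$, so $\hspan{R}{S_t} = R^{3, 2\width(S_t)/\width(R)}$ with $y_t = 2\width(S_t)/\width(R) \geq 2$.

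The main obstacle is the divisibility step $\width(R) \mid \width(S_t)$: the five $R$-tiled horizontal bands from $S_k^{5,5}$ may cover only a strict subset of the rows of $S_t$ modulo $\height(S_t)$, so propagating the horizontal period $g$ to every row of $S_t$ requires carefully using the vertical period $\height(S_t)$ of $S_t^{5,5}$ together with primitivity of $S_t$---otherwise $S_t$ would admit a smaller horizontal period dividing $\width(S_t)$, making it a horizontal power and violating primitivity. This delicate interplay between the two anchored $5\times 5$ structures is what rules out the hypothetical configuration where $R$ spans $S_k$ but not $S_t$.
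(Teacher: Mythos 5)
Your overall set-up is the natural one---both $S_k^{5,5}$ and $S_t^{5,5}$ are anchored at $(i,j)$, so on their common rows and columns both $\width(R)$ and $\width(S_t)$ are horizontal periods and the Periodicity Lemma (\cref{lem:FW}) gives the period $g=\gcd(\width(R),\width(S_t))$---but the crucial divisibility $\width(R)\mid\width(S_t)$ is obtained by the wrong mechanism, and that step as written fails. You invoke the primitivity of $S_t$ after ``propagating the horizontal period $g$ to every row of $S_t$'' via the vertical period $\height(S_t)$ of $S_t^{5,5}$. This propagation is not available: shifting by $\height(S_t)$ identifies rows that are congruent modulo $\height(S_t)$, so it can never make $g$ act on residues that your five $R$-tiled bands do not already hit, and those bands cover at most $15\cdot\height(R)$ rows in total, which can be far fewer than $\height(S_t)$. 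Worse, if the propagation did succeed, it would show that $S_t$ has a horizontal period $g\leq\width(R)<\width(S_t)$ with $g\mid\width(S_t)$, i.e.\ that $S_t$ is not primitive---a contradiction with the hypotheses rather than the desired conclusion. Indeed, even in the true final configuration $\width(R)$ is a horizontal period only of the top $3\cdot\height(R)$ rows of $S_t^{2,2}$, never of all of $S_t$. So the step you yourself flag as ``the main obstacle'' is not merely delicate; the proposed route cannot work.

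The correct (and immediate) mechanism is the primitivity of $R$, exactly as in \cref{fact:span} and the proofs of \cref{lem:span,lem:transitive}: $g$ divides $\width(R)$ and is a horizontal period of a region containing a full copy of $R$ (e.g.\ $A[i\dd i+\height(R))[j\dd j+\width(R))$), so $g<\width(R)$ would exhibit $R$ as a proper horizontal power; hence $g=\width(R)$ and $\width(R)\mid\width(S_t)$. With this one-line replacement, the rest of your argument (alignment of the two tilings because $\width(R)$ divides both $\width(S_k)$ and $\width(S_t)$, and the transfer by the vertical period $\height(S_t)$ to the top rows of $S_t^{2,2}$) goes through. Note also that the five bands and the vertical shift are unnecessary: since $S_k^{3,3}$ and $S_t^{3,3}$ both occur at $(i,j)$, with $\width(S_t)<\width(S_k)$ and $3\height(R)\leq 2\height(S_k)<2\height(S_t)$, the 2D string $\hspan{R}{S_t}$ can be read off at $(i,j)$ as a column-prefix of $\hspan{R}{S_k}=R^{3,2\width(S_k)/\width(R)}$; this is the short argument the paper leaves implicit (the Fact is stated there without proof). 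A minor further point: you use strict dominance of $g(S_t)$ over $g(R)$ to get $\width(R)<\width(S_t)$, whereas the Fact only assumes dominance; in the non-strict case the metric step needs an extra word (the divisibility argument itself only yields $\width(R)\leq\width(S_t)$).
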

Consider a primitive 2D string $R \in \R$ and let the elements of $\S$ that dominate it be $S_k, \ldots, S_t$.
Then, $R$ horizontally spans a (possibly empty) suffix of $S_k, \ldots, S_t$, $R$ vertically spans a (possibly empty) prefix of $S_k, \ldots, S_t$, and the sum of the lengths of the suffix and the prefix is at least $t-k+1$.
In what follows, we will process the elements of $\S$ in the order of decreasing width.
We will aim to compute $g(R)$ for each $R \in \R$:

\begin{itemize}
\item When processing the widest element of $\S$ that strictly dominates $R$ (and is spanned horizontally by it),
\item[] or, if such a point does not exist,
\item when processing the thinnest element of $\S$ that strictly dominates $R$ (and is spanned vertically by it).
\end{itemize}

We first process $S_1$, reporting (the dimensions of) the following 2D strings:
\begin{itemize}
\item The primitive 2D strings that vertically span $S_1$ and are not dominated by any other element of $\mathcal{S}$.
\item The primitive 2D strings that horizontally span $S_1$.
\end{itemize}
When processing $S_k$, for $k \in (1 \dd \ell)$, we report (the dimensions of) the following 2D strings:
\begin{itemize}
\item The primitive 2D strings that vertically span $S_k$ and are not dominated by $S_{k+1}$.
\item The primitive 2D strings that horizontally span $S_k$ and are not dominated by $S_{k-1}$.
\item The primitive 2D strings that horizontally span $S_k$ and vertically span $S_{k-1}$. We mark each 2D string that is reported by this procedure. We break the procedure whenever we encounter a 2D string that has already been marked.
\end{itemize}
Finally, we treat $S_{\ell}$ separately in $\cO(\log n)$ time by computing (the dimensions of) all primitive 2D strings that span it either horizontally or vertically.

The above procedure reports $\cO(\log n)$ (dimensions of) primitive 2D strings due to \cref{lem:geom}.
As we will verify each of them separately in the end, we only need to show that we do not miss any of the 2D strings.
By~\cref{cor:twochains}, missing any of them could only be due to the imposed stopping condition, that is, breaking upon encountering a marked 2D string.
To that end, we utilize the following simple lemma, which,
intuitively, states that horizontal and vertical spanning are transitive properties.

\begin{lemma}\label{lem:transitive}
Let $X$, $Y$, and $Z$ be primitive 2D strings that satisfy $g(X) \leq g(Y) \leq g(Z)$.
Then, the following two statements are equivalent:
\begin{enumerate}[label = ({\alph*})]
\item $X$ horizontally (resp.~vertically) spans $Y$ and $Y$ horizontally (resp.~vertically) spans $Z$;\label{it:tra}
\item $X$ and $Y$ horizontally (resp.~vertically) span $Z$.\label{it:trb}
\end{enumerate}
\end{lemma}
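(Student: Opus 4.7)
The plan is to prove the horizontal case (the vertical case following by symmetry); in both directions $Y$ horizontally spans $Z$ is common to (a) and (b), so the real content lies in relating $X$ to the remaining 2D strings. For $(a)\Rightarrow(b)$, I need to show $X$ horizontally spans $Z$. Chaining the dimension inequalities of the hypothesized spannings yields $3\height(X)\le 2\height(Y)<2\height(Z)$ and $\width(X)<\width(Y)<\width(Z)$. For the structural part, let $Y'$ denote the top $3\height(X)$ rows of $Y$; then $Y'|Y'=\hspan{X}{Y}=X^{3,y}$ with $y\cdot\width(X)=2\width(Y)$. Viewing this column-wise as a 1D metastring of length $2\width(Y)$ whose horizontal periods include both $\width(X)$ and $\width(Y)$, and using $\width(X)+\width(Y)<2\width(Y)$, \cref{lem:FW} combined with the primitivity of $X$ forces $\width(X)\mid\width(Y)$. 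Writing $k=\width(Y)/\width(X)\ge 2$, this yields $Y'=X^{3,k}$. Since $Y$ horizontally spans $Z$, $\hspan{X}{Z}$ is obtained by horizontally concatenating $y''$ copies of $Y'$, hence equals $X^{3,ky''}$ with $ky''\ge 4$, completing this direction.

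For $(b)\Rightarrow(a)$, I need to deduce that $X$ spans $Y$ horizontally. Viewing $\hspan{X}{Z}=X^{3,y_1}$ and $\hspan{Y}{Z}=Y^{3,y_2}$ column-wise, their minimal vertical periods are $\height(X)$ and $\height(Y)$ by the primitivity of $X$ and of $Y$, respectively. Using $g(X)\le g(Y)$ to bound $\height(X)<2\height(Y)$, applying \cref{lem:FW} to the column metastring forces $\height(Y)\ge 2\height(X)$, hence $3\height(X)\le 2\height(Y)$. A parallel argument on horizontal periods yields $\width(X)\mid\width(Y)$ and, using distinctness of $X,Y,Z$ and uniqueness of the primitive root, strict $\width(X)<\width(Y)$. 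With these inequalities in hand, the top $3\height(X)$ rows of $Y^{2,2}$ form the length-$2\width(Y)$ horizontal prefix of the already-decomposed $X^{3,y_1}$, from which I read off $\hspan{X}{Y}=X^{3,\,2\width(Y)/\width(X)}$ with exponent at least $2$.

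The main obstacle will be the dimensional case analysis in $(b)\Rightarrow(a)$, specifically establishing $2\height(X)\le\height(Y)$. When $\height(Y)\le 2\height(X)$, I would invoke \cref{lem:FW} on the length-$3\height(Y)$ prefix of the column metastring---which simultaneously has periods $\height(X)$, inherited from its length-$3\height(X)$ subprefix, and $\height(Y)$---to deduce $\height(X)\mid\height(Y)$ via minimality of $\height(Y)$; combined with $\height(X)<\height(Y)$ (where equality is ruled out because primitive-root uniqueness would force $X=Y$), this collapses to $\height(Y)=2\height(X)$, so $3\height(X)=\tfrac{3}{2}\height(Y)\le 2\height(Y)$. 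The complementary case $\height(Y)>2\height(X)$ is immediate, and a symmetric argument handles the width coordinate.
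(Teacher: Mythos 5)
Your overall route is the paper's: turn spanning into divisibility of widths/heights via \cref{lem:FW} and primitivity, then read off $\hspan{X}{Z}$ (resp.\ $\hspan{X}{Y}$) as a power of $X$. The direction (a)$\Rightarrow$(b) is correct up to a harmless slip: ``the top $3\height(X)$ rows of $Y$'' need not exist, since possibly $\height(Y)<3\height(X)\le 2\height(Y)$; the block you want is the top $3\height(X)$ rows of $Y^{2,1}$ (equivalently, the left half of $\hspan{X}{Y}$), and with that reading your argument goes through and matches the paper's.

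The crux of (b)$\Rightarrow$(a), namely establishing $3\height(X)\le 2\height(Y)$, is however argued incorrectly. In the case $\height(Y)\le 2\height(X)$ you claim the length-$3\height(Y)$ prefix of the row-metastring has period $\height(X)$ ``inherited from its length-$3\height(X)$ subprefix''. Periods do not propagate from a prefix to a longer string, so this step fails as written; and even granting it, invoking \cref{lem:FW} together with minimality of $\height(Y)$ on that \emph{longer} prefix would yield $\height(Y)\mid\height(X)$, not $\height(X)\mid\height(Y)$. The correct local fix transposes your argument: apply \cref{lem:FW} to the \emph{shorter} prefix of length $3\height(X)$, whose minimal vertical period is $\height(X)$ by primitivity of $X$, and which has $\height(Y)$ as a period by restriction from the length-$3\height(Y)$ prefix; the Fine--Wilf condition $\height(X)+\height(Y)\le 3\height(X)$ is exactly your case hypothesis, and minimality of $\height(X)$ gives $\height(X)\mid\height(Y)$, after which your case analysis works. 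Note also that you silently assume $\height(X)\le\height(Y)$ (from $g(X)\le g(Y)$ one only gets $\height(X)<2\height(Y)$), so the subcase $\height(Y)<\height(X)$ must be ruled out by the symmetric application of \cref{lem:FW}; and the distinctness $X\neq Y$ you invoke is not among the lemma's hypotheses (the statement is in fact degenerate for $X=Y$; the paper also uses it only for distinct chain elements), so this is acceptable but should be made explicit. The paper compresses this whole height discussion into one periodicity-lemma application (``else $Y$ would not be primitive''); your width computations and the final identification $\hspan{X}{Y}=X^{3,2\width(Y)/\width(X)}$ coincide with the paper's.
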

\begin{proof}
By symmetry, it suffices to prove the statement for horizontal spanning.
Recall that \cref{fact:span} states that if $S$ and $T$ are two primitive 2D strings such that $S$ spans $T$ horizontally, then the horizontal period of $\hspan{S}{T}$ is $\width(S)$.
Since $\width(T)$ is also a horizontal period of $\hspan{S}{T}$, $\width(S)$ divides $\width(T)$ and hence $\hspan{S}{T}=S^{3,2s}$ for some integer $s>1$.

\ref{it:tra} $\Rightarrow$ \ref{it:trb}:
We have to show that $X$ horizontally spans $Z$.
Since
$\hspan{Y}{Z}=Y^{3,2y}$ for some integer $y\geq 1$
and
$\hspan{X}{Y}=X^{3,2x}$ for some integer $x\geq 1$,
we have that $\hspan{X}{Z}=X^{3,2xy}$.

\ref{it:trb} $\Rightarrow$ \ref{it:tra}:
We have to show that $X$ horizontally spans $Y$.
We have that
$\hspan{Y}{Z}=Y^{3,2y}$ for some integer $y\geq 1$
and
$\hspan{X}{Z}=X^{3,2x}$ for some integer $x\geq 1$.
Further, since $\width(X)\leq \width(Y)$, we have that $x\geq y$.
First, observe that by the periodicity lemma (\cref{lem:FW}), $3\cdot \height(X) < 2\cdot \height(Y)$, else $Y$ would not be primitive.
Then, $\hspan{X}{Y}$,  which consists of the first $2\cdot \width(Y)$ columns of $X^{3,2x}$  has both $\width(X)$ and $\width(Y)$ as its horizontal periods.
By combining another application of the periodicity lemma with the fact that $X$ is primitive, $\width(X)$ must divide $\width(Y)$.
This implies that $\hspan{X}{Y}=X^{3,2x/y}$, concluding the proof.
\end{proof}

Due to the above lemma, for any two elements $X$ and $Y$ in the chain of primitive 2D strings, where $g(X)\leq g(Y)$, and for some $k$,
both $X$ and $Y$ horizontally span $S_k$ and vertically span $S_{k-1}$, we have that $X$ spans $Y$ both vertically and horizontally.
Conversely, each 2D string $X'$ that spans~$Y$ both horizontally and vertically, must also horizontally span $S_k$ and vertically span $S_{k-1}$.
Thus, the intersection of:
\begin{itemize}
\item a chain of primitive 2D strings that horizontally span $S_k$ and vertically span $S_{k-1}$, and
\item a chain of primitive 2D strings that horizontally span $S_{k'}$ and vertically span $S_{k'-1}$
\end{itemize}
for $k\neq k'$ consists of the longest common suffix of these chains (in decreasing order with respect to the mapping $g(\cdot)$).
Thus, we do not lose any primitive 2D strings due to our stopping condition, which enables us to avoid reporting the same 2D string multiple times (e.g., a primitive 2D string that spans all elements of $\S$ both horizontally and vertically).

We have thus reduced the problem in scope, in $\cO(n^2 \log n)$ time over all positions, to the problem of being able to generate chains and intersections of chains online in the natural order from the bottom-right in $\cO(1)$ worst-case time per chain element.

\subparagraph{Generation of chains.}
It suffices to discuss how to generate the chain of primitive 2D strings that horizontally span a skyline 2D string $S$.
Let $(\alpha, \beta) := g(S)$.
We iterate over all primitive strings~$U$ such that $U^3$ occurs at position $i$ of $\vmeta{j}{\beta}$ and $|U|\leq 2^\alpha$.
Consider one such $U$ and let $p:=\floor{\log |U|}$.
We compute the period $q$ of $\hmeta{i}{p+1}[j \dd j+2\cdot \width(S))$, which is at most $\width(S)$ using a $2$-period query and conclude that $A[i \dd i+|U|)[j \dd j+q)$ horizontally spans~$S$.
Thus, we return $(p,\floor{\log q})$.
Hence, processing a single value of $\floor{\log|U|}$ takes $\cO(1)$ time and is guaranteed to return a primitive 2D string that horizontally spans $S$.
We efficiently retrieve all values of $p$ one by one using the output of \cref{lem:bit_runs} applied to $\vmeta{j}{\beta}$ with $k=0$.
This lemma returns a bitvector $v$ whose $a$-th bit is set if and only if there is a sought~$U$ that satisfies $\floor{\log |U|}=a$.
It thus suffices to iterate over the set bits of this bitvector that are lower than $\alpha$.

\subparagraph{Generation of intersections of chains.}
We use tabulation, exploiting the following lemma.

\begin{lemma}\label{lem:four_russians}
There exists a primitive 2D string $R$ with $g(R)=(a,b)$ that spans $S_k$ horizontally and $S_{k-1}$ vertically if and only if the following conditions hold, where $(a_1,b_1)=g(S_k)$ and $(a_2,b_2)=g(S_{k-1})$:
\begin{itemize}
\item There is a primitive string $W$ of length $w \in [2^b \dd 2^{b+1})$ such that $W^3$ has an occurrence at position $j$ of $\hmeta{i}{a_2+1}$
and the period of $\hmeta{i}{a+1}[j \dd j+|S_{k}|)$ is in $[2^b \dd 2^{b+1})$;
\item There is a primitive string $U$ of length $h \in [2^a \dd 2^{a+1})$ such that $U^3$ has an occurrence at position $i$ of $\vmeta{j}{b_1+1}$,
and the period of $\vmeta{j}{b+1}[i \dd i+|S_{k-1}|)$ is in $[2^a \dd 2^{a+1})$.
\end{itemize}
\end{lemma}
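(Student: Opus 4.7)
The plan is to prove both directions of the biconditional. Throughout, set $h$ to be the height and $w$ the width of either $R$ (in the forward direction) or of $U$ and $W$ (in the backward direction), so that $h\in[2^a,2^{a+1})$ and $w\in[2^b,2^{b+1})$.

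For the forward direction, I take as witnesses
\[
W := \hmeta{i}{a_2+1}[j \dd j+w) \qquad \text{and} \qquad U := \vmeta{j}{b_1+1}[i \dd i+h).
\]
Since $R$ vertically spans $S_{k-1}$, the block $A[i\dd i+2\height(S_{k-1}))[j\dd j+3w)$ equals $R^{x,3}$ and hence has horizontal period~$w$; because $2^{a_2+1}\le 2\height(S_{k-1})$, restricting the rows yields that $W^3$ occurs at $j$ of $\hmeta{i}{a_2+1}$. Primitivity of $W$ is inherited from that of $R$: a strictly smaller horizontal period of $W$ dividing $w$ would, upon restriction to the topmost $h$ rows, give $R$ a horizontal period strictly smaller than $w$ that divides $w$, contradicting primitivity of~$R$. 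The horizontal span of $S_k$ analogously forces the smallest horizontal period of $\hmeta{i}{a+1}[j\dd j+|S_k|)$ to equal $w$, using $2^{a+1}\le 3h$ together with another application of the periodicity lemma. The two sub-conditions involving $U$ and $\vmeta{j}{b+1}$ follow by symmetric arguments.

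For the backward direction, let $R:=A[i\dd i+h)[j\dd j+w)$, so that $g(R)=(a,b)$ by construction. The plan is to show that the full rectangle $A[i\dd i+2\height(S_{k-1}))[j\dd j+2|S_k|)$ has horizontal period~$w$ and vertical period~$h$, i.e., is tiled by copies of $R$; the horizontal and vertical spanning of $S_k$ and $S_{k-1}$ then follow immediately. The extension of the horizontal period~$w$ (initially known on $3w$ columns from the first part of the first bullet) to the full width $2|S_k|$ is carried out by combining it with the auxiliary horizontal period $|S_k|$ induced by the occurrences of $S_k$ at $(i,j)$ guaranteed by $S_k^{5,5}$ being anchored there: the second part of the first bullet pins the smallest horizontal period of a wider region into $[2^b,2^{b+1})$, and the periodicity lemma applied to $w$ and $|S_k|$ then yields both $w\mid|S_k|$ and horizontal period $w$ across the full width. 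A symmetric argument based on the second bullet and the occurrences of $S_{k-1}$ at $(i,j)$ handles the vertical period. Primitivity of $R$ is shown by contradiction: a factorisation $R=P^{r,s}$ with $(r,s)\neq(1,1)$ would force $\height(P)<2^a$ or $\width(P)<2^b$, so the corresponding smallest period guaranteed by the second part of one of the two bullets would fall strictly below $2^a$ or $2^b$, contradicting its dyadic range.

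The main obstacle is the backward direction: the conditions only constrain four ``cross-shaped'' regions, whereas the spanning property demands periodicity throughout two much larger rectangles. Bridging this gap requires the auxiliary periodicity provided by the anchor assumptions on $S_k^{5,5}$ and $S_{k-1}^{5,5}$, together with the primitivities of $W$ and $U$ to rule out shorter compatible periods. The dyadic ranges in the second parts of the two bullets are essential both for the period extensions and for the final primitivity argument.
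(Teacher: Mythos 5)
Your forward direction is essentially fine (and is the part the paper dismisses as immediate), but your backward direction is where the proposal genuinely breaks. Your plan is to prove that the whole rectangle $A[i\dd i+2\height(S_{k-1}))[j\dd j+2\width(S_k))$ is tiled by $R:=A[i\dd i+h)[j\dd j+w)$. This is strictly stronger than the statement and does not follow from the bullets (nor even from the left-hand side itself): vertical spanning of $S_{k-1}$ constrains only the leftmost $3w$ columns of $S_{k-1}^{2,2}$, horizontal spanning of $S_k$ only the topmost $3h$ rows of $S_k^{2,2}$, and the bullets likewise speak only about a tall strip of width $3w$, a short strip of height $3h$, and two dyadic bounds on periods of specific sub-blocks; the region with rows beyond $3h$ and columns beyond $3w$ is simply unconstrained, so the tiling you aim for need not hold. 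The mechanism you propose for the extension is also an invalid application of the periodicity lemma: the period $w$ is known only on a region of width $3w$ (or width $\width(S_k)$ at height $2^{a+1}$), whereas combining it with the period $\width(S_k)$ via Fine--Wilf would require width at least $w+\width(S_k)$; hence neither $w\mid\width(S_k)$ nor the spread of the period $w$ across the full width follows. Your primitivity argument inherits the same defect: a decomposition $R=P^{r,s}$ yields a small period only of the $h\times w$ block, not of the larger blocks whose \emph{smallest} periods the bullets constrain, so no contradiction arises unless the (unproved) tiling is already in hand.

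What is missing is precisely the ingredient the paper's proof of the ($\Leftarrow$) direction hinges on. The paper uses the $W^3$-occurrence to get a primitive 2D string $Y$ of width exactly $w$ that vertically spans $S_{k-1}$ (primitivity of $W$ pins the horizontal period of the leftmost $3w$ columns to exactly $w$), uses the condition that the period of $\vmeta{j}{b+1}[i\dd i+|S_{k-1}|)$ lies in $[2^a\dd 2^{a+1})$ to place $\height(Y)$ in the right dyadic range, and then invokes \cref{lem:runs_evol} together with the $U^3$-occurrence to conclude $\height(Y)=h$ exactly, so that $Y=A[i\dd i+h)[j\dd j+w)$; the horizontal spanning of $S_k$ follows by the symmetric argument, both constructions produce the same block, and primitivity comes for free from the construction. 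Your proposal never invokes \cref{lem:runs_evol} (or any substitute for the fact that lengths of primitive cubes at a fixed position are separated by a factor of two), and without it the two bullets cannot be tied to a single $h\times w$ block of the prescribed dimensions. As a minor additional remark, in your forward direction the claim that the periodicity lemma forces the smallest period of $\hmeta{i}{a+1}[j\dd j+|S_k|)$ to equal $w$ silently needs the overlap $\width(S_k)\geq w+q$ for the competing period $q$, which the spanning definition alone does not guarantee when $2\width(S_k)=3w$.
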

\begin{proof}
The ($\Rightarrow$) direction is immediate.

($\Leftarrow$):
The specified occurrence of $W^3$ implies that the first $3w$ columns of $S_{k-1}$ have period $w$.
This means that there is a primitive 2D string $Y$ of width $w$ that vertically spans~$S_{k-1}$.
In addition, the assumption that the period of $\vmeta{j}{b+1}[i \dd i+|S_{k-1}|)$ is in $[2^a \dd 2^{a+1})$ implies that so is the height of $Y$.
Then, due to \cref{lem:runs_evol}, the specified occurrence of $U^3$ implies that $\height(Y)=h$.
Hence, the string $A[i \dd i+h)[j \dd j+w)$ vertically spans $S_{k-1}$.
By symmetry, the same string horizontally spans $S_{k}$.
\end{proof}

The above lemma means that, given
\begin{enumerate}[label = {(\arabic*)}]
\item The floor of the logarithm of the period of $\vmeta{j}{b}[i \dd i+|S_{k-1}|)$ for all $b$,\label{vert}
\item The floor of the logarithm of the period of $\hmeta{i}{a}[j \dd j+|S_{k}|)$ for all $a$, and\label{hor}
\item The bitvector computed by an application of \cref{lem:bit_runs} for $k=0$ for position $j$ of $\hmeta{i}{a_2+1}$ and position $i$ of $\vmeta{j}{b_1+1}$,\label{easy}
\end{enumerate}
one can compute the set $\{g(R): R \text{ spans $S_k$ horizontally and $S_{k-1}$ vertically}\}$ in $\cO(\log^2 n)$ time by iterating over all possible pairs $(a,b)$.
We next show that the information specified in \ref{vert} and \ref{hor} can be represented in $\cO(\log n)$ bits and can be computed efficiently (in a batched computation), as is already the case for \ref{easy}.
This will allow us to use tabulation.
In what follows, we focus on the representation and the computation of \ref{vert} as \ref{hor} is symmetric.

We encode the information on how the floor of the logarithm of the period of  $\vmeta{j}{b}[i \dd i+|S_{k-1}|)$ changes
when we iterate over $b\in [0\dd \floor{\log (\height(S_{k-1}))}]$. After increasing $b$, this value either stays
the same or increases. As the logarithm of the period always belongs to $[0\dd \floor{\log (\height(S_{k-1}))}]$,
this suggests the following natural encoding. We write down a sequence $0^{x_{0}}1 0^{x_{1}}1 0^{x_{2}}\ldots 0^{x_{\ell}-1} 1$,
where $\ell = \floor{\log (\height(S_{k-1}))}$. The $b^{\text{th}}$ 1 (counting from 0), corresponds to 
$\vmeta{j}{b}[i \dd i+|S_{k-1}|)$, and the total number of 0s before it is equal to the floor of the logarithm of its period.
This represents the required information and consists of $\cO(\log n)$ bits.

Further, the specified representation can be computed, for a fixed $j$, over all choices of $i$ and~$k$ in $\cO(n \log n)$ total time.
This can be done by processing all runs in $\vmeta{j}{b}$ in a manner analogous to that in the proof of \cref{lem:vert_run}.
We build a forest of all runs whose length is at least three times larger than their period (instead of five), and then, for each position, for each of the $\cO(\log n)$ considered root-to-leaf paths, we build the specified representation in time linear in the number of special pointers on that path.
The result follows by the fact that the total number of special pointers traversed for each position is $\cO(\log n)$.
We need to retrieve the bitvector constructed for a leaf and apply a bitmask to keep only the bits that correspond to runs whose length is at least $|S_{k-1}|$ and period is at most $|S_{k-1}|$.

A direct attempt at using tabulation would be to preprocess the answer for every combination of the bitvectors
encoding the information described in \ref{vert}, \ref{hor}, and \ref{easy}. This is however $c\cdot \log n$ bits, for $c=6$, so we cannot
simply precompute the answer for each such input.
Instead, we apply the standard approach of partitioning each of the bitvector into a constant number of smaller bitvectors,
so that we have smaller instances of the same problem for which we can naively precompute the answers.
This is described in more detail below.

Let $\alpha>0$ be a constant to be chosen later.
The bitvectors are partitioned into fragments corresponding to a range of at most $\alpha \cdot \log n$ values of $a$ and $b$, respectively.
This partition is chosen to additionally ensure that the floor of the logarithm increases by at most $\alpha \cdot \log n$ when we iterate over
those values. Thus, the whole description of what is the floor of the logarithm consists of $2\alpha\cdot \log n$ bits,
the initial value of $a$ or $b$, and the floor of the logarithm of the period for this initial value, which takes
$2\log\log n$ additional bits, and we only need to create $2/\alpha$ fragments from each bitvector.
Then, we precompute the elements of the sought chain for every pair of possible fragments. By adjusting the sufficiently
small constant $\alpha>0$ this takes $o(n)$ time.
For each possible bitvector (describing the situation for the values of $a$ or $b$), we preprocess its partition
into $2/\alpha=\cO(1)$ fragments. This takes only $\cO(n^{2}\log n)$ time and space, as the bitvectors are of length $2\log n$.
Then, when given the bitvectors, we first retrieve their partition into $2/\alpha=\cO(1)$ fragments each.
Next, we iterate over all $(2/\alpha)^{2}=\cO(1)$ combinations of fragments describing a range of values of $a$ and $b$,
retrieve the precomputed answer, and return the union of their precomputed chains.
The chains are stored in the natural order, and we return $(2/\alpha)^{2}=\cO(1)$ pointers to their lists,
which allows us to stop as soon as we see an element that has been already reported while spending only
constant time per reported element.

\subparagraph*{Verification.}
Now, for each pair $(a,b)$ in $\mathcal{G}$, we aim to compute the primitive 2D string $R$, if one exists, such that $R^{5,5}$ has an occurrence anchored at position $(i,j)$.
We compute the height of $h$, if it exists, as follows.
\begin{itemize}
\item Case I: $2h> 3\cdot 2^a$. We have that $3h\geq  4\cdot 2^{a}$.
Then, since $2h< 2\cdot 2^{a+1}$, $h$ must be the period of $\vmeta{j}{b+1}[i \dd i+2^{a+2})$, which can be computed in $\cO(1)$ time using a 2-period query.
\item Case II: $2h \leq 3\cdot 2^a$. $h$ is the period of $\vmeta{j}{b+1}[i \dd i+3\cdot 2^a)$, which can be computed in $\cO(1)$ time using a 2-period query.
\end{itemize}
We perform both of the above $2$-period queries and try to extend the obtained periods using periodic extension queries in $\vmeta{j}{b+1}$; at most one of these extensions will be successful due to \cref{lem:runs_evol}.
We symmetrically compute a candidate width $w$ or conclude that no such $R$ exists.
Finally, assuming that we have obtained a candidate height $h$ and a candidate width $w$, we compute $R$ or conclude that none exists by checking, using a constant number of 2-period queries on strings from $\mathcal{H} \cup \mathcal{V}$, whether $A[i-2h \dd i+3h)[j-2w \dd j+3w)$ has vertical period $h$ and horizontal period $w$.
\subsubsection{Computation of Thick Quartics from Sets $\assign(i,j)$}
We process all found occurrences of $R^{5,5}$ as in \cite[Section 5]{CRRWZ}. While the complexity claimed
in previous work was $\cO(n^{2}\log^2 n)$, the underlying method can be used to process a set $S$ of found
occurrences of $R^{5,5}$ in $\cO(|S|)$ time to calculate, for every $x\geq 5$, the largest $y\geq 5$
such that an occurrence of $R^{x,y}$ can be composed of the given occurrences,
assuming that the pairs in the set are lexicographically sorted.
Overall, this sums up to $\cO(n^{2}\log n)$.
Note that while sorting a single set with radix sort would take $\cO(|S|+n)$ time, we can sort all the sets
together, paying the additive $\cO(n)$ only once.
We briefly explain how to process a set $S$ in the claimed time complexity. The reader can find a detailed
description in \cite[Section 5]{CRRWZ}. We stress that this is not meant to be a new result, but just
a summary of a known method given for completeness.

First, we partition the pairs $(i,j)$ into subsets corresponding to distinct values of $(i\bmod \height(R),j\bmod \width(R))$. Any fragment
of the form $R^{x,y}$ with $x,y\geq 5$ is composed of occurrences of $R^{5,5}$ from the same subset.
Thus, each subset can be processed separately. Further, we note that partitioning into subsets can be done
in $\cO(|S|)$ time using a $\height(R)\times \width(R)$ array of initially empty lists. The array is allocated only
once at the very beginning of the whole procedure and reused for every $S$ (after having processed each $S$, we clear the array).
Further, we can assume that the pairs in every subset are sorted.

We explain how to process a subset $S'\subseteq S$. The problem now reduces to the following.
We are given a set of white cells in a grid $[1 \dd n]^{2}$ and want to report all pairs $(h,w)$ such that
there exists an $h\times w$ rectangle consisting of white cells. Further, the white cells are lexicographically
sorted. This allows us to process them row-by-row while maintaining an array $H[1\dd n]$. When processing
row $r$, we want every $H[c]$ to be equal to the height of the tower of white cells above position $(r,c)$,
that is, the largest $k$ such that $(r,c),(r,c-1),\ldots,(r,c-k+1)$ are white cells. This can be maintained
in time proportional to the number of white cells in the previous and the current row.
Next, we process maximal fragments $H[i\dd j]$ consisting of non-zero entries. The goal is to process such
a fragment in $\cO(j-i+1)$ time. We first find, for every $k\in [i\dd j]$, the nearest value smaller than $H[k]$ to the left
and to the right of $k$ in $H[i\dd j]$, denoted by $H[\textsf{left}[k]]$ and $H[\textsf{right}[k]]$, respectively.
This can be done with two sweeps over $[i \dd j]$ (left-to-right and right-to-left)
while maintaining a stack in $\cO(j-i+1)$ time. Then, for $x=H[k]$, we set $y=\max\{y, \textsf{right}[k]-\textsf{left}[k]-1\}$.

\bibliographystyle{plainurl}
\bibliography{biblio}
\end{document}